\newcommand*{\affaddr}[1]{#1} 
\newcommand*{\affmark}[1][*]{\textsuperscript{#1}}
\newcommand*{\email}[1]{\texttt{#1}}
\newcommand{\etal}{{\it et al.}}
\newtheorem{corollary}{Corollary}
\newtheorem{lemma}{Lemma}
\newtheorem{theorem}{Theorem}
\newtheorem{definition}{Definition}
\title{\textbf{Servicing Timed Requests on a Line}}
\date{}
\author{%
Angelos Gkikas\affmark[1], Tomasz Radzik\affmark[1]\\
\affaddr{\affmark[1]Department of Informatics, Kings College London}\\
\email{\{angelos.gkikas,tomasz.radzik\}@kcl.ac.uk}\\
\affaddr{}%
}
\begin{document}

\maketitle
\begin{abstract}
We consider an off-line optimisation problem where $k$ robots must service $n$ requests on a single line. A request $i$ has weight $w_i$ and takes place at time $t_i$ at location $d_i$ on the line. A robot can service a request and collect the weight $w_i$, if it is present at $d_i$ at time $t_i$. The objective is to find $k$ robot-schedules that maximize the total weight. The optimisation problem is  motivated by a robotics application  \cite{DBLP:journals/dam/AsahiroHMOSY06} and can be modeled as a minimum cost flow problem with unit capacities in a flow network $\mathcal{N}$. Consequently, we ask for a collection of $k$ node-disjoint paths from the source $s$ to the sink $t$ in $\mathcal{N}$, with minimum total weight. It was shown in \cite{DBLP:journals/dam/AsahiroHMOSY06} that the flow network $\mathcal{N}$ can be implicitly represented by $n$ points on the plane which yields to an $O(n \log n)$-time algorithm for $k=1$ and the special case where all requests have the same weight. However, for $k \geq 2$ the problem can be solved in $O(kn^2)$ time with the successive shortest path algorithm which does not use this implicit representation. We consider arbitrary request weights and show a recursive $O(k^{2k}n \log^{2k} n)$-time algorithm  which improves the previous bound if $k$ is considered constant. Our result also improves the running time of previous algorithms for other variants of the optimisation problem. Finally, we show problem properties that may be useful within the context of applications that motivate the problem and may yield to more efficient algorithms.
\end{abstract}
\newcommand{\BCPgraph}{BCP graph}

\newcommand{\predecessor}{\mbox{\it pred}}

\newcommand{\relaxop}{\mbox{\it relax}}
\newcommand{\Relaxop}{\mbox{\it Relax}}
\newcommand{\polylog}{\mbox{polylog}}

 \section{Introduction}
We consider the following optimisation problem of servicing
timed requests on the line. 
For a given integer $k \ge 1$ and 
a set of $n$ timed requests $\{(x_i,t_i,w_i), 
1 \le i \le n\}$, where $x_i\in (-\infty, + \infty)$, 
$t_i \ge 0$ and $w_i \ge 0$ are the location 
(on the line),
the time and the weight of request $i$, respectively, 
maximise the total weight of requests 
which can be serviced 
by $k$ robots. Initially, at time $t=0$, all robots are 
at the origin point of the line and they can move 
freely along 
the line, changing direction and speed when needed, 
but never exceeding a given maximum speed $v$.
To service request $i$, one of the robots has to be 
at location $d_i$ exactly at time $t_i$.
Servicing a request is instantaneous and the robot can 
move immediately to serve another request.

This is an off-line optimisation problem
with all data about the requests known in advance, 
which appeared, 
for example,
in the context of 
the \emph{ball collecting problems} (BCPs) 
considered by Asahiro~\etal\cite{DBLP:journals/dam/AsahiroHMOSY06}. 
The basic BCP is essentially the optimisation 
problem stated in the previous paragraph.
There are $n$ weighted balls  approaching 
the line $L$ where the robots can move.
Each ball will cross $L$
at a specified time and  point, and
if a robot is there, then 
the ball is intercepted (collected). For the \textit{weighted} case the objective is to compute the movement of the robots so
that the total weight of the intercepted balls is maximised. For the \textit{unweighted} case (i.e. all balls have the same weight) the objective is to maximise the number of intercepted balls.

Asahiro~\etal\cite{DBLP:journals/dam/AsahiroHMOSY06}
studied a number of BCP variants,
putting them in the context of the \emph{Kinetic Travelling Salesman Problem}
(KTSP)
and establishing the tractability--intractability 
(polynomiality vs. NP-hardness) frontier through
the landscape of the studied variants. 
The literature of the KTSP consists of similar work, focusing on approximation algorithms \cite{Hammar:1999:ARK:646229.681565,Chalasani1996AlgorithmsFR,Asahiro2008}, polynomial time exact algorithms \cite{Asahiro2008,HELVIG2003153,RePEc:spr:annopr:v:289:y:2020:i:2:d:10.1007_s10479-019-03412-x}  
for special problem settings and real world applications\cite{Menezes2015158,6858759,5400538}.

Variants of the BCP are obtained 
by giving each robot $i$ its own 
line $L_i$ where it moves and intercepts balls,
or by not-fixing the position (i.e. the angle) of the common line $L$
(or the positions of the robots' individual lines $L_i$), asking instead for the optimal position of the line 
to be determined as part of the output,
or by considering different optimisation objectives
(e.g.,  minimizing the number of robots needed to 
collect all balls). Asahiro~\etal\
\cite{DBLP:journals/dam/AsahiroHMOSY06} showed that
maximising the total weight of collected balls
when robots move on a common line $L$ is polynomially
solvable, but $\mathcal{NP}$-Hard when 
each robot moves on its own line. 
They also showed that the BCP problems with 
a common line $L$, which is not fixed  
but part of the 
optimisation decision, 
can be solved by solving $O(n^2)$ instances
with a fixed line.

The problem 
of servicing timed requests on the line 
 corresponds 
to the  \textit{weighted} ball collecting problem,
so we will refer to it as BCP, or BCP$(k)$: 
a given number of $k$ robots, a single
given (fixed) line $L$,
and the objective of maximising 
the total weight. 
As shown in 
Asahiro \etal~\cite{DBLP:journals/dam/AsahiroHMOSY06}, 
for $k\geq 2$ 
the objective of maximising the total weight with 
$k$ robots can be modeled as a minimum cost flow problem 
in a directed acyclic graph ($DAG$)
$G^*$, which has $n$ nodes representing balls 
and two additional 
special source and sink nodes $s$ and $t$, respectively.
There is an edge in $G^*$ from a node $v'$,
representing ball $b'$, to a node $v''$, representing 
ball $b''$, if there is enough time for a robot 
to move from intercepting $b'$ to intercepting $b''$. An $s-t$ path in $G^*$ represents a schedule for one robot and its weight is equal to the total weight of the balls intercepted by the robot.

The corresponding minimum cost flow problem has 
unit node capacities (maximum one unit of flow 
through each node) and node weights equal to 
negations of the weights of balls,
so is equivalent to finding $k$ node-disjoint paths 
from $s$ to $t$
(the paths share only nodes $s$ and $t$)
such that the total weight of the selected paths 
is minimised. This problem 
can be solved in $O(kn^2)$ time by the successive shortest
path algorithm~\cite{Ahuja:1993:NFT:137406}. 
The quadratic dependence on $n$ is due to the fact 
that graph $G^*$ can have quadratic number of edges.
Looking into some technical details,
graph $G^*$ has actually $2n+2$ nodes since each node  
representing a ball 
is split into two nodes (connected by an edge) as
in the standard reduction from node capacities to 
edge capacities.

The $DAG$ $G^*$ can be implicitly represented
by a set ${\cal{P}}$ of $2n+2$ points 
in the 2-D Euclidean plane, illustrated
in Figure~\ref{fig1and2}.
The BCP input with $5$ requests given 
in Figure~\ref{fig1and2}(a) is shown in 
Figure~\ref{fig1and2}(b)
in the location-time coordinates
(the distances and times are normalised so that 
the maximum speed of a robot is equal to $1$).
The arrows show the edges of $G^*$. 
Vertex $t$, not shown in the diagram, 
is on the time axis
sufficiently high so that there are edges 
to $t$ from all other nodes.
For clarity, we also do not show the splitting of
nodes into two. For the \textit{unweighted} BCP and $k=1$, Asahiro \etal~\cite{DBLP:journals/dam/AsahiroHMOSY06} show an $O(n \log n)$-time algorithm
using the implicit plane representation of graph $G^*$ with the points in ${\cal{P}}$. However, for the weighted BCP and $k=1$, the problem is solved in the standard way of computing a longest path in a directed acyclic graph $G^*$, which requires $O(n^2)$ time.
For $k \ge 2$, 
\cite{DBLP:journals/dam/AsahiroHMOSY06} gives only 
the $O(kn^2)$ computation as indicated above, which applies 
to both \textit{unweighted} and \textit{weighted} BCP.

We show that the implicit plane
representation of graph $G^*$ can lead also to 
efficient algorithms for the {weighted} 
BCP for $k\ge 1$. More precisely, for the weighted BCP and the special case $k=1$ we show an iterative algorithm with running time of $O(n \log^3 n)$ which improves the previous bound of $O(n^2)$. For the weighted BCP and $k \geq 2$, we show a recursive algorithm for finding a minimum weight collection of $k$ node-disjoint $s-t$ paths in graph $G^*$ with the running time of $O(k^{3k}n\log^{2k+3} n)$, improving 
the previous bound of $O(kn^2)$ if $k$ is considered constant.  This result also gives an algorithm with the running time of  $O(k^{3k}n^3\log^{2k+3} n)$ for 
the BCP variant
where the placement of the line $L$ is to be chosen. A summary of the previous and new results for the $BCP$
 is shown in Table \ref{t1}. 

We also show properties of BCP solutions that may be useful
within the context of applications that motivate the problem.
Specifically, an $s-t$ path in $G^*$ is a schedule for one
robot in the BCP, and the representation of this path 
as a  concatenation of straight-line segments on the plane
({\it e.g.} path $(s,1,2,4)$ in Figure~\ref{fig1and2}(b))
gives the direction and the speed for each part of the
schedule.
If two paths on the plane cross, then the two robots
following these paths collide (are at the same point at the
same time). 
We show that for $k \geq 2$, there is at least one
minimum-weight
collection of $k$ node-disjoint non-crossing $s-t$ paths,
which ensures that the $k$ robots do not collide, and that
such a collection of optimal non-crossing paths can be
computed from any optimal collection of paths within 
$O(k n\log n)$ time.

\begin{table}[h]
\centering
\caption{(\textit{Weighted}) BCP: maximize the total weight}
\label{t1}
\begin{tabular}{cccccccccc}
\hline
\textbf{Line $L$} 
&  \textbf{$k= 1$} \cite{DBLP:journals/dam/AsahiroHMOSY06}
& \textbf{ $k=1$} 
& \textbf{$k\geq 2$} \cite{DBLP:journals/dam/AsahiroHMOSY06}
& \textbf{$k\geq 2$} \\ \hline
as part of input       & $O(n^2)$ & $O(n \log^3 n)$  & $O(kn^2)$     & $O(k^{3k}n\log^{2k+3} n)$  \\ 
as part of output      & $O(n^4)$  & $O(n^3 \log ^3 n)$  & $O(kn^4)$  & $O(k^{3k}n^3\log^{2k+3} n)$   \\ \hline
\end{tabular}
\end{table}

\begin{figure}
\centering
\begin{subfigure}{.5\textwidth}
\centering
 \includegraphics[scale=0.55]{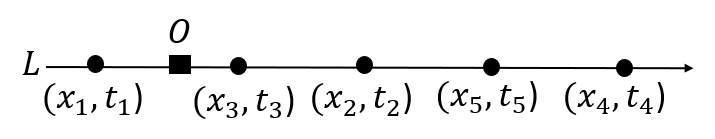}
    \caption{}
    \label{fig1}
\end{subfigure}%
\begin{subfigure}{.5\textwidth}
\centering
  \includegraphics[scale=0.62]{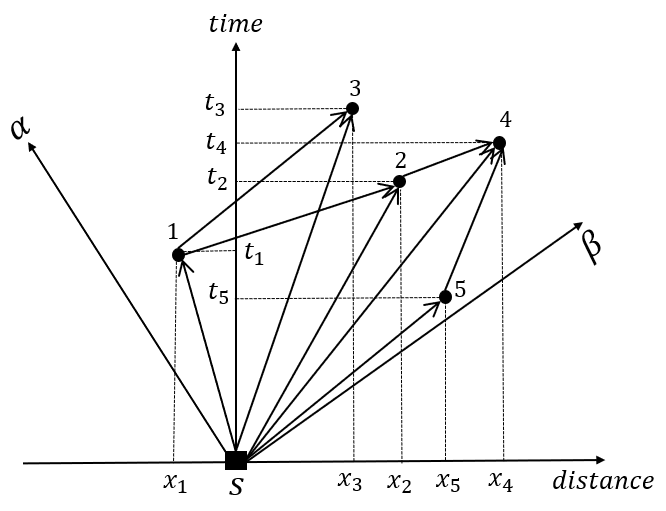}
    \caption{}
    \label{fig2}
\end{subfigure}
\caption{Figure \ref{fig1}: Five timed requests $1,2,3,4,5$ on the line $L$. Figure \ref{fig2}: Representation of the BCP input in the location-time coordinates and the $\alpha$-$\beta$
    coordinates. 
    }
    \label{fig1and2}
\end{figure}

The remaining part of the paper is organised in the following way. In section \ref{c3s2} we discuss the directed acyclic graph (DAG) model and its implicit planar representation. In section \ref{c3s2a} we describe the input and output of algorithm $\mathcal{A}_k$ and provide an overview of its recursive structure. In section \ref{k1sect} we consider the special case $k=2$ as an introduction to our recursive approach. In section \ref{knopi} we consider the general case $k \geq 3$. In section \ref{noncrossingpathsfork}  we show the additional property of BCP solutions which ensures the $k$ robots do not collide. 

 \section{Preliminaries}
\label{c3s2}

\subsection{DAG model of BCP}
\label{sub:dag}
The input of the BCP, as specified in \cite{DBLP:journals/dam/AsahiroHMOSY06}, consists of $n$ tuples $(x_1,y_1,v_1),..,(x_n,y_n,v_n)$ and two additional parameters $v$ and $k$. The parameter $k$ is the number of (identical) robots and $v$ specifies their maximum speed. The tuple $(x_i,y_i,v_i)$, for $1 \leq i \leq n$, specifies  speed $v_i$ of ball $b_i$ and the initial position $(x_i,y_i)$ in a $2D$ plane with $x$ and $y$ coordinates. We assume that $y_i \ge 0$. 
Starting at time $t=0$, ball $b_i \in B$ moves from 
$(x_i,y_i)$ with constant speed
of $v_i$ towards the $x$-axis, reaching the point $(x_i, 0)$ at time  $t_i={y_i}/{v_i}$.
A robot intercepts (or collects) ball $b_i$, if this 
robot is at time $t_i$ at point $(x_i, 0)$.
In this BCP model, to optimise the interception of balls, we need to know only the numbers $x_i$ and $t_i$, so we will 
assume that these numbers are given directly as the input. 

Notice that two or more balls can cross the line $L$ at the same time $t$ at the same distance $x$ from the origin. 
When a robot is at time $t$ at $x$, it can intercept all 
these balls. 
In the graph model, we assume that if $w$ balls 
are at the same time at the same place on the line, then they are represented by a single ball with weight $w$.
That is, the input to the problem is $n$ weighted timed requests $\{(x_i,t_i,w_i), 
1 \le i \le n\}$, where $x_i\in (-\infty, + \infty)$, 
$t_i \ge 0$ and $w_i \ge 0$ are the location 
(on the line), 
the time and the weight of request $i$, respectively.
We assume that the speed $v$ of the $k$ robots is equal to $1$ (this is achieved by dividing $x_i$ by $v$ for $i=1,2,\ldots,n$).

We model the input as a directed graph $G(V,E)$ with $n$ nodes $(1,2,..,n)$ representing the $n$ balls and two special nodes $s$ and $t$. 
For $1 \le i \le n$, $1 \le j \le n$, $i \neq j$, 
we have an edge $(i,j)$ in $G$, if and only if, $|x_j-x_i| \le t_j-t_i$ (recall that after normalising, $v=1$), which means that if a robot is at point $x_i$ at time $t_i$, having presumably just intercepted ball $b_i$, then it can arrive at point $x_j$ by time $t_j$ to intercept ball $b_j$.
For $1 \le i \le n$, we also have an edge $(s,i)$, if a robot starting at time $t=0$ from the origin $O$ of $L$ can reach point $x_i$ by time 
$t_i$ (to intercept ball $b_i$), and we have all edges 
$(i,t)$. 
Graph $G(V,E)$ is acyclic since an edge $(i,j)$ implies 
that $t_i < t_j$.
We assign weight $w_i$ to node $i$ and weight $0$ to nodes 
$s$ and $t$.
There are no edges $(s,i)$ for the balls $b_i$ which cannot be intercepted (because $x_i > t_i$). Such balls
can be removed from the input and they do not have to 
be included in graph $G$. We can therefore assume 
that graph $G$ has an edge $(s,i)$ for each ball~$b_i$. Figure \ref{Fig2} shows the directed acyclic graph $G$ constructed from the BCP input shown in Figure \ref{fig1}.

An $s-t$ path $(s, i_1, i_2, \ldots, i_p, t)$ in $G$ corresponds to a feasible movement
of one robot which intercepts balls $b_{i_1}, b_{i_2},  \ldots, b_{i_p}$, in this order.
The weight of this path (the sum of the weights of the nodes on this path) is equal to the total weight of the intercepted balls.
Consequently, we can find a schedule for one robot that maximizes the number of intercepted balls by finding the maximum weight path from $s$ to $t$ in the directed acyclic graph $G$. This can be done in the standard way by negating the weights and move from node weights to edge weights. That is, we construct graph $G'(V',E')$ with the same set of nodes and edges, such that the weight $w'(i,j)$ of an edge $(i,j) \in E'$ is equal to $w'(i,j)=-(w(i,j)+w_j)$. Finding the maximum weight path from $s$ to $t$ in $G$ is equivalent to finding a shortest (that is, minimum weight) path from $s$ to $t$ in $G'$.
Since graph $G$ (and subsequently $G'$) may have $\Theta(n^2)$ edges in the worst 
case, without referring to a special structure of~$G$, we can only conclude that such a path can be computed in $O(n^2)$ time.

\begin{figure}
\centering
    \includegraphics[bb=0 0 500 200,scale=0.7]{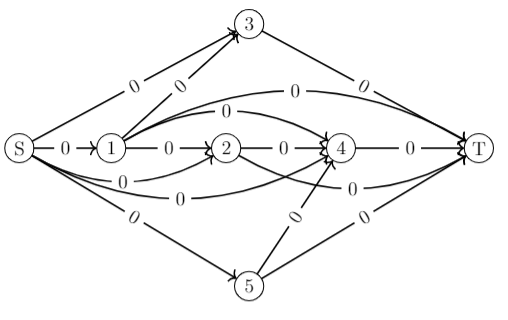}
    \caption{The Directed Acyclic Graph corresponds of the BCP input shown Figure \ref{fig1} for the appropriate values of $x_i,t_i$. The weight of each node $i \in V \setminus{\{s,t\}}$ is equal to $w_i$.}
    \label{Fig2}
\end{figure}

For $k \geq 2$ the problem asks for $k$ node-disjoint paths from $s$ to $t$ in $G$ such that the total weight of the selected paths is minimized. The condition of node-disjoint paths refers to the internal nodes and ensures that no intercepted ball is counted twice.
We change from node-disjoint paths to edge-disjoint paths
in the standard way by considering the following
modified graph $G^{*}$ obtained from $G$ by splitting 
nodes, as explained below.

Every node $i \in V\setminus\big\{s,t\big\}$ in $G$ is represented in $G^*$ by two nodes $i^{-},i^{+}$ connected by a \textit{short} edge from $i^{-}$ to $i^{+}$. 
The set $V^*$ of nodes in $G^*$ includes also nodes $s$
and $t$. Each edge $(i,j)$ in $G$, where $i,j\in V \setminus (s,t)$ is replaced by a \textit{long} edge $(i^{+},j^{-})$. Each edge $(s,i)$ is replaced by a \textit{long} edge $(s,i^-)$ and each edge $(i,t)$ is replaced by a \textit{long} edge $(i^+,t)$ for $i=1,2,\ldots,n$. We note that two paths from $s$ to $t$ share a node $i$ in $G$, if, and only if, the corresponding paths in $G^*$ share edge $(i^{-},i^{+})$. 
The weights are moved from nodes in $G$ onto the corresponding short edges in $G^*$: the weight of every \textit{short }edge $e$ connecting nodes $i^{-},i^{+}, \forall i \in V^{*}\setminus \{s,t\}$ is equal to $-w_i$. The weight of each \textit{long} edge is equal to zero. The capacity of every edge (long and short) is set equal to $1$.
Figure \ref{mcf} illustrates the obtained directed acyclic graph $G^{*}$  by applying the transformation described above to graph $G$ of Figure \ref{Fig2}. 

Each collection of $k$ node disjoint $s-t$ paths in $G$ 
corresponds in a natural way to a collection of $k$ edge-disjoint $s-t$ paths in $G^*$, with corresponding paths
having the same weight.
Finding $k$ edge-disjoint $s-t$ paths with the minimum 
total weight is equivalent to finding a minimum-cost
flow with source $s$, destination $t$ and demand $k$,
assuming unit edge capacities.  A collection of $k$ node-disjoint $s-t$ paths in $G^*$ 
(which must be also edge-disjoint) with minimum total weight gives an optimal schedule for $k$ robots in the BCP.

A collection of $k$ node-disjoint paths from $s$ to $t$ in $G^*$ with the minimum total weight can be found in $O(kn^2)$ time by the {\em successive shortest path algorithm}~\cite{Ahuja:1993:NFT:137406}.
For $k=1$ and the unweighted BCP, that is, when we are looking for a shortest 
$s-t$ path in $G$ and all nodes have weight equal to $1$, 
it is shown in~\cite{DBLP:journals/dam/AsahiroHMOSY06} that such a path can be found in $O(n \log n)$ time 
by using the special geometric representation of graph $G$
described in Section~\ref{sub:plane}. For $k=1$ and the weighted BCP  computing a shortest 
$s-t$ path in $G$ requires $O(n^2)$ time.
For $k\ge 2$, \cite{DBLP:journals/dam/AsahiroHMOSY06} gives only 
the straightforward $O(kn^2)$ computation as indicated above which applies both to the weighted and unweighted case.
The main contributions of our work is that the geometric
representation of graph $G$ can lead also to 
efficient algorithms for the weighted BCP and $k\ge 1$.

\begin{figure}
    \centering
    \includegraphics[bb=0 0 500 200,scale=0.7]{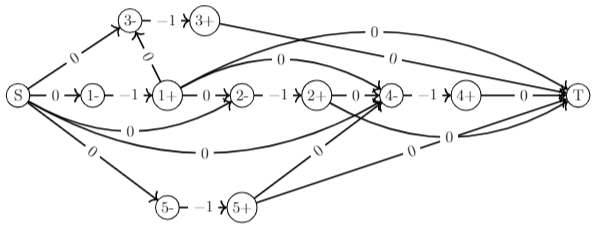}
    \caption{Directed Acyclic Graph  $G^{*}$ with $2n+2$ nodes. The weight of a short edge $(x^-,x^+)$ is equal to $-w_x$ (for clarity we assume that $w_x=1$ $ \forall x \in V^*$). The weight of a long edge $(x,y)$ is equal to $0$. }
    \label{mcf}
\end{figure}

\subsection{The plane representation of DAG $G^*$}
\label{sub:plane}
It was shown in \cite{DBLP:journals/dam/AsahiroHMOSY06} that the directed acyclic graph $G$ (and graph $G^*$) can be implicitly represented with a set of points ${\cal{P}}$ on the Euclidean 2-D plane. There are $2n+2$ points in ${\mathcal{P}}$ which correspond to the 
nodes in graph $G^*$.
There are two special points $s$ and $t$ 
which correspond to the special nodes in $G^*$. The remaining $2n$ "regular" points can be seen  as $n$ pairs of points $(1^-,1^+),...,(n^-,n^+)$. For $i=1,2,...,n$, a pair of points $(i^-,i^+)$ in $\cal{P}$ corresponds to pair of nodes $(i^-,i^+)$ in $G^*$ and therefore corresponds to 
ball $b_i$ of the BCP input.
Because of this correspondence, we will use the terms
ball, node and point (in ${\mathcal{P}}$) interchangeably
(remembering that the special nodes/points $s$ and $t$ 
do not correspond to any ball). 
The placement of a pair of points $(i^-,i^+) \in \cal{P}$ in the $2$-D plane is described with coordinates $\alpha$ and $\beta$ defined in the following way: \begin{itemize}
   \item $\alpha_i = t_i + x_i \quad$  and  $\quad    \beta_i= t_i - x_i$
    
    \item $\alpha_i^+ = \alpha_i, \quad \beta_i^+= \beta_i \quad  $  and  $\quad    \alpha_i^- =\alpha^+_i - \epsilon, \quad   \beta_i^-=\beta_i^+ - \epsilon$
\end{itemize}
where $\epsilon$ is an arbitrary small number to ensure that $i^-$ and $i^+$ are  sufficiently "close" to each other such that there is no point $j\neq i$ satisfying $\alpha_i^- \leq \alpha_j \leq \alpha_i^+$ or $\beta_i^- \leq \beta_j \leq \beta_i^+$. This transformation essentially consists of rotating the location-time coordinates by $45^{\circ}$ to the new system of $\alpha$-$\beta$
 coordinates -- see Figure~\ref{fig1and2}(b).

To simplify matters we will refer to pair of points $(i^-,i^+)$ by simply referring to point $i$.  Figure \ref{Figure3b} shows the $\alpha-\beta$ planar representation of the directed acyclic graph $G^*$ shown in Figure \ref{mcf} (for clarity we do not show the splitting of points into two).  The $\alpha-\beta$ planar representation (which can be constructed in $\Theta(n)$ time), implicitly represents the directed acyclic graphs $G$ and $G^*$.
    There is an edge in $G$ from node $i$ to node $j$, if, and only if, a robot can intercept 
ball $b_j$ after intercepting ball $b_i$. This means that $t_i-t_j\geq |x_i-x_j|$,  which is equivalent to having $\alpha_i \geq \alpha_j$ and $\beta_i \geq \beta_j$.

\begin{figure}
\centering
\begin{subfigure}{.5\textwidth}
  \centering
  \includegraphics[scale=0.42]{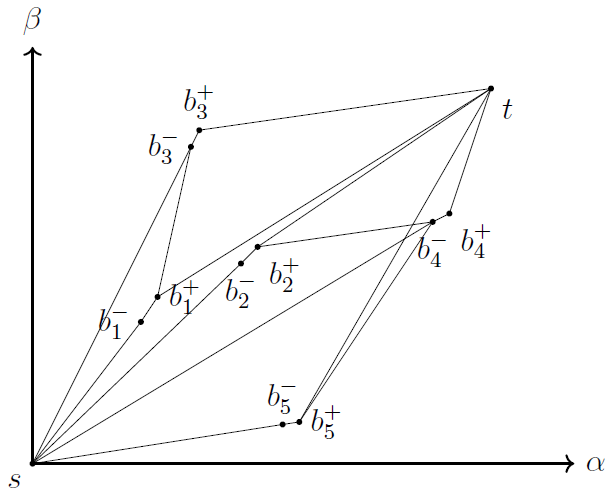}
    \caption{}
    \label{Figure3b}
\end{subfigure}%
\begin{subfigure}{.5\textwidth}
  \centering
  \includegraphics[scale=0.42]{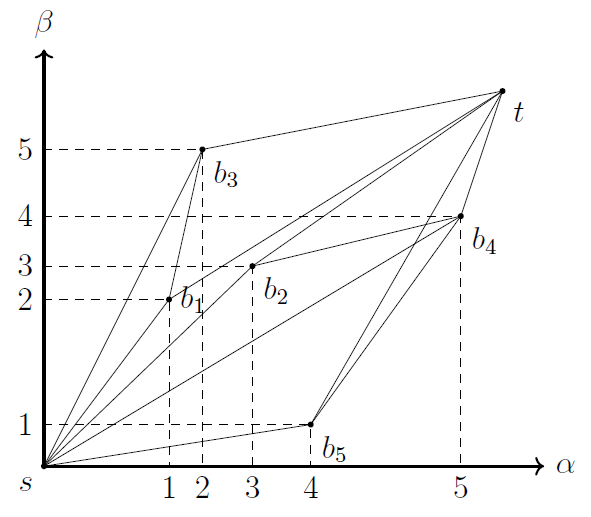}
    \caption{}
    \label{Figure3}
\end{subfigure}
\caption{Figure \ref{Figure3b} shows the implicit plane representation of $G^*$ in the $\alpha-\beta$ coordinate system. Figure \ref{Figure3} shows the transformation of the input such that all points have distinct integer $\alpha$ and $\beta$ coordinates.}
\end{figure}

We want the set of points $\cal{P}$ to represent
correctly the topology (the edges) of graph $G^*$, 
but otherwise the values of the 
coordinates of the points in $\cal{P}$ are not important.
We can therefore assume that $s=(0,0)$, $t = (n+1,n+1)$, each
regular point in $\cal{P}$ has integral coordinates 
and any two distinct points in $\cal{P}$ have 
both coordinates distinct. This can be achieved by sorting the $\alpha$ and $\beta$ coordinates of all  points in $\cal{P}$ and setting the value of the $m^{th}$ smallest $\alpha$ (resp. $\beta$) coordinate equal to $m$, where $m=1,2,..,n$.

Notice that two points $i$ and $j$ can not have the same $\alpha$ and $\beta$ coordinate because this implies that balls $b_i$ and $b_j$ cross the line $L$ at the same time at the distance from the origin and thus $b_i$ and $b_j$ correspond to the same point $i$. It is possible however for two points $i$ and $j$ in $\cal{P}$ to share the same $\alpha$ or $\beta$ coordinate. If for two points $i$ and $j$ in $\cal{P}$ we have $\alpha_i=\alpha_j$  (resp. $\beta_i=\beta_j$) but $\beta_i \leq \beta_j$ (resp. $\alpha_i \leq \alpha_j$)  then to ensure that $\cal{P}$ represents correctly the topology (the edges) of graph $G^*$ we set the value of $\alpha_i$ equal to $m$ and the value of $\alpha_j$ equal to $m+1$. Figure \ref{Figure3} illustrates an example of the replacement of the $\alpha$ and $\beta$ coordinates with integral values(for clarity we do not show the splitting of points into two).

For two points $i$ and $j$ in the plane,
with coordinates $(\alpha_i,\beta_i)$ 
and $(\alpha_j,\beta_j)$, respectively,
we write $i \prec j$ to denote that point $j$ dominates 
point $i$ in the sense that $i \neq j$,
$\alpha_i \le \alpha_j$ and $\beta_i \le \beta_j$.
We write $i \bowtie j$ to denote that points $i$ and $j$
are distinct and neither $i \prec j$ nor $j \prec i$.
We have $(0,0) = s\prec t$ and for 
each regular point $i$ in ${\cal{P}}$,
$s\prec i\prec t$.
Thus for any two points $i$ and $j$ in $\cal{P}$
(regular or special) $(i,j)$ 
is an edge in $G^*$ if, and only if, 
$i\prec j$.

\section{Algorithm $\mathcal{A}_k$}
\label{c3s2a}

\subsection{Input and Output}
\label{sec2:mincostflow}
Consider the implicit representation of the directed acyclic graph $G^*$ with the points in $\mathcal{P}$.
The edges of $G^*$  are represented by straight-line 
segments in the $\alpha$-$\beta$ plane. 

\begin{definition}
We say that two node-disjoint edges $(u,v)$ and $(x,y)$ 
in $G^*$ \emph{cross}, if the two (closed) 
segments $[u,v]$ and $[x,y]$ in the plane 
have a common point.
\end{definition}

Recall that we can assume w.l.o.g.\
that the $\alpha$ coordinates and the $\beta$ coordinates
of the $n$ nodes are distinct integers in $[1,n]$ (see
subsection~\ref{sub:plane}).
We also assume that all points are in
general position (the reduction to achieve this 
requires increasing the range of the integer
coordinates). Therefore, if two edges node-disjoint edges $(u,v)$ and $(x,y)$ 
in $G^*$ cross then the common point of the closed segments does not correspond to a point in $\mathcal{P}$. We say that two paths $Q$ and $Q'$ in $G^*$ \emph{cross}, 
if there is an edge $(u,v) \in Q$ crossing with an edge $(x,y) \in Q'$. We say that a path $Q$ in $G^*$ is \textit{non-self-crossing} if $Q$ does not traverse two edges that cross.

To provide an overview of our algorithm in the context of the minimum cost flow problem, 
we denote by $\mathcal{N}$ the flow network 
based on graph $G^*$, as discussed in subsection \ref{sub:dag}, with negative node weights (\textit{i.e.} weights of the short edges)
and all edges (short and long) having unit capacities.
For network $\mathcal{N}$ and $k-1$ node-disjoint $s$-$t$
paths $\mathcal{Y}_1, \mathcal{Y}_2,$ 
$\ldots,$ $\mathcal{Y}_{k-1}$ in $\mathcal{N}$, which
represent 
an integral flow of value $k-1$ in $\mathcal{N}$,
we define the residual network 
${\mathcal{N}}_{k-1}$ in the usual way, by 
reversing the edges of
 the paths $\mathcal{Y}_1, \mathcal{Y}_2, \ldots,$ $\mathcal{Y}_{k-1}$.
The base case is $\mathcal{N}_0 \equiv \mathcal{N}$.

We show an algorithm $\mathcal{A}_{k}$ which 
for an input 
$(\mathcal{N}; \mathcal{Y}_1, \mathcal{Y}_2,
\ldots,\mathcal{Y}_{k-1})$, where
$\mathcal{Y}_1, \mathcal{Y}_2,$ $\ldots,$ $\mathcal{Y}_{k-1}$
are $k-1$ node-disjoint \emph{non-crossing} $s$-$t$ paths 
minimizing the total weight of any collection of 
$k-1$ \emph{node-disjoint} $s$-$t$ paths,
computes 
a shortest path tree $T^*$ rooted at $s$ in the residual 
network $\mathcal{N}_{k-1}$. The paths $\mathcal{Y}_1, \mathcal{Y}_2,$ 
$\ldots,$ $\mathcal{Y}_{k-1}$ in $\mathcal{N}$   are given in a left-right order in their plane representation.  We maintain two global arrays $h$ and $predecessor$, which are 
indexed by the points $\mathcal{P}\setminus\{s\}$.
At the end of the computation, 
for each $x\in \mathcal{P}\setminus\{s\}$,
the values $h(x)$ and $predecessor(x)$
should be the shortest path weight from $s$ to $x$ and 
the predecessor of point $x$ in the tree $T^*$. 

When $k=2$, we have only one path $\mathcal{Y}_1$, so the condition that paths
$\mathcal{Y}_1, \mathcal{Y}_2,$ $\ldots,$ $\mathcal{Y}_{k-1}$
are \emph{non-crossing} is trivially satisfied.
For subsequent values of $k$, this condition will be 
ensured inductively.
From now on, when we refer to paths $\mathcal{Y}_1, \mathcal{Y}_2,$ $\ldots,$ 
$\mathcal{Y}_{k-1}$, we assume that they are non-crossing
paths representing a minimum-cost flow value of $k-1$. The paths $\mathcal{Y}_1, \mathcal{Y}_2,$ $\ldots,$ $\mathcal{Y}_{k-1}$ in network $\mathcal{N}$ and
the computed $s$-$t$ shortest path in the residual network
$\mathcal{N}_{k-1}$ give in the usual way 
a minimum-cost flow of value $k$ in $\mathcal{N}$.
This flow is represented by $k$ node-disjoint 
$s$-$t$ paths $Y_1,Y_2,..,Y_k$ in $\mathcal{N}$, 
which are not necessarily non-crossing.
Let $\mathcal{P}_k \subseteq \mathcal{P}$ the set of all
points covered by paths $Y_1,Y_2,..,Y_k$. 
The following theorem states that
a valid input for algorithm $\mathcal{A}_{k+1}$ exists
and can be computed in an efficient way.
\begin{theorem}
\label{maintheoremuncrossing}
Given a point set $\mathcal{P}_k$ such that all points $\mathcal{P}_k$ can be covered with $k$ paths, there is an $O(kn\log n)$ algorithm $\widetilde U_k$ which computes a collection of $k$ node-disjoint non-crossing $s$-$t$ paths $\mathcal{Y}_1,\mathcal{Y}_2,\ldots,\mathcal{Y}_k$ covering all points in $\mathcal{P}_k$.
\end{theorem}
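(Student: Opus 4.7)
The plan is to extract the $k$ non-crossing chains one at a time by repeatedly peeling off the ``leftmost'' chain and recursing. I would define $\mathcal{Y}_1$ as the chain (in some non-crossing $k$-decomposition of $\mathcal{P}_k$) whose trajectory from $s$ to $t$ lies pointwise below the trajectories of all other chains. Existence of such a decomposition follows from a pairwise uncrossing argument: given any $k$-chain decomposition $Y_1, \ldots, Y_k$ (which exists by hypothesis and can, for instance, be produced by patience sorting in $O(n \log n)$ time), if two chains $Y_i, Y_j$ contain crossing edges $(u,v)$ and $(x,y)$, then the plane geometry of a crossing between two dominance-respecting segments forces $u \prec y$ and $x \prec v$, so swapping to form edges $(u,y)$ and $(x,v)$ (and rerouting the chain suffixes accordingly) yields valid chains that remain node-disjoint. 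Each swap strictly reduces the total Euclidean length of the paths by the triangle inequality, guaranteeing termination in a non-crossing decomposition. I would then show that $\mathcal{Y}_1$ meets every maximum antichain of $\mathcal{P}_k$, so by Dilworth's theorem removing it drops the width from $k$ to at most $k-1$, validating the recursion.

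For the efficient extraction of $\mathcal{Y}_1$ within each iteration, I would sweep the points of $\mathcal{P}_k$ in increasing order of $\alpha$, maintaining a balanced BST of candidate next-points for the chain keyed by $\beta$ (or by the slope of the segment from the current chain endpoint to the candidate, evaluated against the slope to $t$). The greedy step selects the next point using a ``minimum slope from the current endpoint'' criterion, with tie-breaking designed to preserve the trajectory ordering against the other $k-1$ chains being implicitly built. Each BST operation is $O(\log n)$, giving $O(n \log n)$ per iteration and $O(kn \log n)$ across all $k$ iterations, matching the claimed bound. The main obstacle will be the correctness of this local greedy rule: naive criteria such as ``next dominator with smallest $\beta$'' can be shown to fail on small examples where a locally attractive extension forces a crossing further down the sweep, so the correctness proof must carefully relate each greedy choice to the trajectories of the remaining chains---specifically, an invariant must be maintained that the extracted chain is below every point of $\mathcal{P}_k \setminus \mathcal{Y}_1$ at the corresponding $\alpha$, so that after removal the recursion produces paths that stay strictly above $\mathcal{Y}_1$ in the plane.
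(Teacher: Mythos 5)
Your existence argument (uncrossing pairs of crossing edges, with the triangle inequality guaranteeing termination because total Euclidean length strictly drops) is exactly Corollary~\ref{collectionwithPi} in the paper, and your overall plan of peeling off one boundary chain at a time and recursing matches the structure of $\widetilde{U}_k$ (the paper peels from the right, $\mathcal{Y}_k$ down to $\mathcal{Y}_1$, rather than from the left, but that is cosmetic). The Dilworth observation is also not needed for Property~1: once you have any $k$-chain cover, deleting one chain's points trivially leaves a set coverable by the other $k-1$ chains, and Lemma~\ref{LemmaProperty1} argues this directly.

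The genuine gap is in the extraction step, and you have correctly located it yourself: ``the main obstacle will be the correctness of this local greedy rule,'' and ``naive criteria \ldots can be shown to fail.'' That obstacle is the whole theorem, and the proposal stops precisely where the paper's proof begins doing work. There is no concrete greedy rule in your sketch; ``minimum slope from the current endpoint, with tie-breaking designed to preserve the ordering against the other $k-1$ chains being implicitly built'' is circular, because those other $k-1$ chains are not actually available to tie-break against while you are committing to $\mathcal{Y}_1$. Moreover, the required invariant is stronger than ``$\mathcal{Y}_1$ is below every remaining point at the corresponding $\alpha$'': what must hold (the paper's Property~2, Lemma~\ref{LemmaProperty2}) is that no \emph{edge} between two remaining points can cross the extracted chain, which is a statement about dominance edges, not about pointwise vertical ordering, and a chain can be pointwise below all remaining points while a long dominance edge between two remaining points still crosses it.

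The paper's way around this is a specific two-stage construction that a sweep with a BST keyed by $\beta$ or by slope does not reproduce. First it runs Asahiro et al.'s selection algorithm $\mathcal{S}$ (greedy by minimum $\alpha$-dominator, $O(n\log n)$) to get an initial, possibly crossing, $j$-chain cover with the staircase properties of Corollary~\ref{nopointsinbox}, Corollary~\ref{Scor2}, and Lemmas~\ref{Sprop}--\ref{sprop2}. Then, for each edge $(x,x')$ of the rightmost chain $Y_j$, it either keeps the edge or replaces it by the convex hull of the remaining points lying inside the right triangle on $[x,x']$; Lemmas~\ref{case1indedge} and \ref{Case2ConvH} are what make the non-crossing invariant $I_{xx'}$ go through. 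That convex-hull repair is the missing idea in your proposal: it is what converts ``the greedy chain is locally extremal'' into ``no dominance edge among the leftover points can cross it.'' Without it, or some equally concrete replacement together with a proof of the edge-non-crossing invariant, the argument does not close.
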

 
 The proof of Theorem \ref{maintheoremuncrossing} is given separately in Section \ref{noncrossingpathsfork}. Starting with the network $\mathcal{N}$, 
we compute a minimum-cost integral flow of value $k$
in $\mathcal{N}$,
which gives a solution for BCP,
by iterating algorithm $\mathcal{A}_i$ followed by 
    algorithm $\widetilde{U}_i$, for $i = 1,2, \ldots, k$. Algorithm $\mathcal{A}_k$ is an instance of the \emph{relaxation technique} for 
the single-source shortest paths 
problem~\cite{Cormen2001introduction} 
in the residual network $\mathcal{N}_{k-1}$. Arrays  $h$ 
and $predecessor$ are only updated by
the following $\relaxop(y,x)$ operation, 
where $(y,x)$ is an edge and $w(x)$ is the weight of node
$x$: 
If $h(x) > h(y) + w(x)$, then
$h(x) \leftarrow h(y) + w(x)$ and 
$predecessor(x) \leftarrow y$. 

In algorithm $\mathcal{A}_k$ $\relaxop$
operations occur in groups: 
$\Relaxop(x) \equiv 
\{\relaxop(y,x): y \prec x\}$. The detailed description of operation $\Relaxop$ and its implementation details are given in Subsection \ref{section:arrays}.
For the worst-case running-time efficiency, 
we implement operation $\Relaxop(x)$ not by 
performing explicitly all operations $\relaxop(y,x)$ (this would take $O(n)$ time), 
but by finding a point $z\prec x$ such 
that $h(z) =\min\{ h(y): y\prec x\}$ and 
performing only $\relaxop(z,x)$.
Finding point $z$ takes $O(\log^3(n))$ time 
using a data structure
introduced for the two-dimensional orthogonal-search problem\cite{Willard:1985:NDS:3674.3690}. 

\subsection{Overview of Algorithm $\mathcal{A}_k$}
In this section we give an overview of algorithm $\mathcal{A}_k$ for $k\geq 3$. The detailed description and the analysis of algorithm $\mathcal{A}_k$ for
$k  \geq 3$ is given in section \ref{knopi}. The analysis of algorithm $\mathcal{A}_k$ for the special case  
$k = 2$ is given separately in section \ref{k1sect}. 
This special case does not   refer to
some of the elaborations of the general case, so
the arguments are simpler and shorter, and can be treated 
as preliminaries to the general case.

To facilitate the recursive structure of algorithm
${\mathcal{A}_{k}}$, we extend the input specification
to a sub-network
of $\mathcal{N}_{k-1}$
induced by the points in $\mathcal{P}$
with the $\beta$ coordinates in the interval 
$(\beta_1, \beta_2]$, 
for given $\beta_1 < \beta_2$. 
We denote this sub-network by 
$(\mathcal{N}; 
\mathcal{Y}_1, \mathcal{Y}_2,
\ldots,\mathcal{Y}_{k-1})[\beta_1, \beta_2]$,
or $\mathcal{N}[\beta_1, \beta_2]$ for short.
The initial input, that is, the input to the initial 
call to algorithm ${\mathcal{A}_{k}}$, is the  whole residual network $\mathcal{N}_{k-1}$  which is defined 
by the interval $(0, n+1]$.

Arrays $h$ and $predecessor$ are global
and initialized outside of the computation of 
algorithm~${\mathcal{A}_{k}}$
(details of this global initialisation are 
in subsection \ref{section:arrays}).
The subsequent recursive calls to~${\mathcal{A}_{k}}$
continue from the current state of these arrays, without
re-initialising.
More precisely,
when algorithm $\mathcal{A}_k$ is applied to a sub-network
$\mathcal{N}' = \mathcal{N}[\beta_1,\beta_2]$
(a recursive call), then the computation starts 
with each point $x$ in $\mathcal{N}'$ 
having some value $h(x) \leq 0$, and array $predecessor$ 
restricted to $\mathcal{N}'$
representing a forest in $\mathcal{N}'$. 
At the end of the computation,
for each point $x$ in the sub-network,
$h(x)$ is equal to the weight of some path  
to $x$, hopefully smaller than its starting 
value, and array $predecessor$  
represents a new forest.

A call to algorithm ${\mathcal{A}_{k}}$ for a sub-network $\mathcal{N}[\beta_1, \beta_2]$ 
includes two recursive calls to  
${\mathcal{A}_{k}}$
applied to sub-networks 
$\mathcal{N}[\beta_1,(\beta_1+\beta_2)/2]$ and
$\mathcal{N}[(\beta_1+\beta_2)/2,\beta_2]$.
We denote by $\mathcal{N}_1$ and $\mathcal{N}_2$
these two sub-networks, respectively, or the sets of 
nodes (points) in these sub-networks, 
depending on the context.
The  sub-network $\mathcal{N}_1$ (the lower half)  has $\lceil 2n/2 \rceil$ 
 points from $\mathcal{P}$ and the 
sub-network $\mathcal{N}_2$ (the upper half) has $\lfloor 2n/2 \rfloor$  points 
from $\mathcal{P}$. 

The base case of the recursion are sub-problems of size smaller than some constant threshold. 
Algorithm $\mathcal{A}_k$ also includes a \textit{coordination} phase which takes place between the two recursive calls and consists of calling a coordination algorithm $\mathcal{C}_k$ on $\mathcal{N}[\beta_1,\beta_2]$. 

While the recursive calls to algorithm $\mathcal{A}_k$ on $\mathcal{N}_1$ and $\mathcal{N}_2$ consider paths 
which are wholly either in $\mathcal{N}_1$ or $\mathcal{N}_2$,  
the coordination algorithm $\mathcal{C}_k$ is responsible for considering paths which have points both in $\mathcal{N}_1$ and $\mathcal{N}_2$. Putting everything together, when algorithm $\mathcal{A}_k$ is applied to a sub-network $\mathcal{N}[\beta_1,\beta_2]$ the computation consists of three phases. The first phase is the recursive call of algorithm $\mathcal{A}_k$ to sub-network $\mathcal{N}_1$, the second phase is the coordination of $\mathcal{N}_1$ and $\mathcal{N}_2$ by algorithm $\mathcal{C}_k$ and the third phase is the recursive call of algorithm $\mathcal{A}_k$ to $\mathcal{N}_2$.

\begin{definition}
We say that the computation of a shortest-path algorithm, 
or a part of such algorithm,
follows a given path 
$Q = (x_0,x_1,\ldots,x_m)$,
if the computation includes all relax operations $\relaxop(x_i,x_{i+1})$,
$i = 1,2,\ldots,m-1$ in this order.
\end{definition}
Note that each operation $\relaxop(x_i,x_{i+1})$
may be implicitly included within operation
$\Relaxop(x_{i+1})$. Recall that for a sub-network $\mathcal{N}[\beta_1,\beta_2]$ a path $Q$ in the sub-network is non-self-crossing if $Q$ does not traverse two edges that cross.

\begin{definition}
For a sub-network 
$(\mathcal{N}; \mathcal{Y}_1, \mathcal{Y}_2,
\ldots,\mathcal{Y}_{k-1})[\beta_1,\beta_2]$
and a point $v$ in this sub-network, we define 
path $\widetilde Q_v$ as the minimum weight path 
among all non-self-crossing paths in this sub-network which end at $v$.
We denote by $\widetilde h(v)$ 
the weight of path $\widetilde Q_v$.
\end{definition}

The following theorem describes the specification
of algorithm~$\mathcal{A}_k$.
\begin{theorem}\label{Thm1noPi}
When algorithm~$\mathcal{A}_k$ is applied 
to a sub-network 
$(\mathcal{N}; \mathcal{Y}_1, \mathcal{Y}_2,
\ldots,\mathcal{Y}_{k-1})[\beta_1,\beta_2]$, the computation follows every non-self-crossing path in this sub-network and the running time is $O(k^{3k}n \log^{2k+3} n)$ where $n$ is the size of the sub-network.
\end{theorem}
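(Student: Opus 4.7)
The proof proceeds by structural induction on the size $n$ of the sub-network $\mathcal{N}[\beta_1,\beta_2]$, simultaneously establishing (i) that every non-self-crossing path in the sub-network is followed and (ii) the running-time bound. The base case is a sub-network whose size is below the constant threshold, handled by direct inspection of the $O(1)$ many candidate paths. For the inductive step I would exploit the three-phase structure of $\mathcal{A}_k$: the recursive call on the lower half $\mathcal{N}_1$, the coordination routine $\mathcal{C}_k$ applied to the whole sub-network, and the recursive call on the upper half $\mathcal{N}_2$.

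For correctness, the key combinatorial step is to decompose an arbitrary non-self-crossing path $Q$ by cutting it at each edge that crosses the dividing line $\beta=(\beta_1+\beta_2)/2$. The resulting maximal sub-paths alternate between $\mathcal{N}_1$ and $\mathcal{N}_2$, and each is itself non-self-crossing. Applying the inductive hypothesis to $\mathcal{N}_1$, Phase~1 follows every sub-path of $Q$ contained in $\mathcal{N}_1$; the coordination phase $\mathcal{C}_k$ is then responsible for following every boundary-crossing edge of $Q$ and for propagating the updated $h$-values across the cut; finally, by the inductive hypothesis on $\mathcal{N}_2$, Phase~3 follows every remaining sub-path in $\mathcal{N}_2$, so $Q$ is followed end-to-end. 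A subtle point is that $\mathcal{A}_k$ runs in only three phases even though $Q$ may swap halves many times; the non-self-crossing assumption on $Q$, together with the non-crossing property of $\mathcal{Y}_1,\ldots,\mathcal{Y}_{k-1}$, limits the combinatorial interaction of $Q$ with the cut, and this is precisely what will allow $\mathcal{C}_k$ to absorb the boundary work in a single pass.

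The running-time analysis is driven by the recurrence
\begin{equation*}
T_k(n) \;\le\; 2\,T_k(n/2) + C_k(n),
\end{equation*}
where $C_k(n)$ is the cost of a single invocation of $\mathcal{C}_k$. I would appeal to the design of $\mathcal{C}_k$, which reduces the coordination step to a bounded number of invocations of $\mathcal{A}_{k-1}$ on derived sub-networks, together with bookkeeping whose per-point cost is $O(\log^3 n)$ via the Willard orthogonal range-search structure used to implement $\Relaxop$. Unrolling the outer recurrence by the master theorem contributes one factor of $\log n$, and the inner recursion in $k$ multiplies the logarithmic factor inherited from $\mathcal{A}_{k-1}$; combining these yields the claimed $O(k^{3k} n \log^{2k+3} n)$ bound.

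The main obstacle, and where I expect to concentrate the real work, is the joint verification that $\mathcal{C}_k$ both follows every boundary-straddling portion of every non-self-crossing path of $\mathcal{N}[\beta_1,\beta_2]$ and still respects the $C_k(n)$ budget assumed in the recurrence above. Both parts hinge on a structural lemma controlling how a non-self-crossing path may interact with the dividing $\beta$-line in the presence of the reversed segments of $\mathcal{Y}_1,\ldots,\mathcal{Y}_{k-1}$, together with an amortised accounting of the shortest-path-tree updates triggered at each crossing. This structural and accounting argument is what I anticipate will require the bulk of the proof's machinery.
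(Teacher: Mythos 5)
Your high-level plan — structural induction with a divide-at-$\beta$ recursion, the two recursive phases handled by the inductive hypothesis, and a coordination phase absorbing the cross-boundary work, together with a separate theorem about $\mathcal{C}_k$ — matches the paper's strategy. However, the specific decomposition of $Q$ you describe does not support the claim that the computation ``follows'' $Q$. You propose cutting $Q$ at \emph{every} crossing of the dividing $\beta$-line, having Phase~1 follow every piece of $Q$ lying in $\mathcal{N}_1$, Phase~3 every piece in $\mathcal{N}_2$, and $\mathcal{C}_k$ only the crossing edges. This fails because ``following'' a path requires the corresponding $\relaxop$ operations to occur in the same relative order as along the path. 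In the residual network $Q$ can return from $\mathcal{N}_2$ to $\mathcal{N}_1$ through a reversed red edge, so $Q$ may visit $\mathcal{N}_1$, then $\mathcal{N}_2$, then $\mathcal{N}_1$ again; under your scheme, the relaxes for that later $\mathcal{N}_1$ segment would all happen during Phase~1, \emph{before} the $h$-values for the intervening $\mathcal{N}_2$ segment have been computed, so the sub-sequence of relax operations you identify is out of order. The paper instead cuts $Q$ at only two points $x$ and $x'$: $x$ ends the maximal prefix of $Q$ lying wholly in $\mathcal{N}_1$ and $x'$ begins the maximal suffix lying wholly in $\mathcal{N}_2$. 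Phase~1 follows $Q_x$, Phase~3 follows $Q_{x'}$, and the \emph{entire} middle segment $q_{xx'}$ — which can cross the $\beta$-line up to $2(k-1)$ times — is followed by $\mathcal{C}_k$ alone (Theorem~\ref{Thm:coordinationnoPi}). This is also where your running-time sketch is incomplete: $\mathcal{C}_k$ is not just ``a bounded number of invocations of $\mathcal{A}_{k-1}$'' — it contains a second $n$-recursive algorithm $\mathcal{Z}_k$ that itself invokes $\widehat{\mathcal{A}}_{k-1}$, so each level of the $k$-recursion pays two $\log n$ factors (one from the $\mathcal{A}_k$ halving and one from the $\mathcal{Z}_k$ halving), not one, which is how the exponent reaches $2k+3$. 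As written, your accounting would produce roughly $\log^{k+O(1)} n$.
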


If the computation follows a path $Q = (x_0,x_1,\ldots,x_m)$, then 
at the end of this computation, the computed shortest path weight $h(x_m)$ is at most the weight of $Q$.  Theorem \ref{Thm1noPi} implies the following corollary.
\begin{corollary}
\label{Thm1CornoPi}
When the call of algorithm $\mathcal{A}_{k-1}$ on a sub-network $(\mathcal{N}; \mathcal{Y}_1, \mathcal{Y}_2,
\ldots,\mathcal{Y}_{k-1})[\beta_1,\beta_2]$ terminates for every point $v$ in the sub-network we have $h(v) \leq \widetilde h(v)$.
\end{corollary}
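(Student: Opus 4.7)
The corollary follows almost immediately from Theorem~\ref{Thm1noPi} combined with the standard relaxation invariant. The plan is as follows.

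Fix an arbitrary point $v$ in the sub-network and let $\widetilde Q_v = (x_0, x_1, \ldots, x_m = v)$ be a non-self-crossing path in the sub-network realising $\widetilde h(v)$. By Theorem~\ref{Thm1noPi} the computation on this sub-network follows $\widetilde Q_v$, that is, it performs the operations $\relaxop(x_{i-1}, x_i)$ for $i = 1, \ldots, m$ in this order, possibly interleaved with relaxations of other edges.

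I would then invoke the standard relaxation invariant, proved by induction on $i$: immediately after $\relaxop(x_{i-1},x_i)$ is executed, $h(x_i) \le h_{\mathrm{start}}(x_0) + \sum_{j=1}^{i} w(x_j)$, where $h_{\mathrm{start}}(x_0)$ denotes the value of $h(x_0)$ at the beginning of the call on this sub-network. The point that makes the induction go through despite interleaved relaxations of edges that do not belong to $\widetilde Q_v$ is that $\relaxop$ is monotone: every invocation can only decrease an $h$-value. Hence the value of $h(x_{i-1})$ at the moment $\relaxop(x_{i-1}, x_i)$ is executed is at most the upper bound provided by the inductive hypothesis, and the bound propagates to $x_i$. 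Applied at $i = m$ this yields $h(v) \le h_{\mathrm{start}}(x_0) + \widetilde h(v)$.

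Finally, I would use the invariant stated in the overview of section~\ref{c3s2a}, namely that at the start of every call of the algorithm on a sub-network one has $h(x) \le 0$ for every point $x$ in that sub-network. This gives $h_{\mathrm{start}}(x_0) \le 0$, so $h(v) \le \widetilde h(v)$ already holds at the moment $\widetilde Q_v$ has been fully followed, and since subsequent $\relaxop$ operations can only decrease $h(v)$ further, the inequality still holds upon termination of the call. The only genuine subtlety is the interleaving argument above; apart from that the corollary is essentially a repackaging of the ``follows every non-self-crossing path'' conclusion of Theorem~\ref{Thm1noPi}, where the actual work lies.
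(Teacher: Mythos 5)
Your proposal is correct and follows essentially the same route as the paper: the paper derives the corollary from Theorem~\ref{Thm1noPi} via the one-line observation that ``if the computation follows a path $Q=(x_0,\ldots,x_m)$ then at the end $h(x_m)$ is at most the weight of $Q$,'' and you simply spell out what makes that one line true --- the standard relaxation invariant with interleaved relaxations, the monotonicity of $\relaxop$, and the fact, stated in Section~\ref{c3s2a}, that $h(x)\le 0$ for every point $x$ in the sub-network at the start of the call (which is needed because $\widetilde Q_v$ need not start at $s$). The only thing worth tightening is the notation $\sum_{j=1}^i w(x_j)$, which should be the sum of residual edge weights $\sum_{j=1}^i w(x_{j-1},x_j)$ along $\widetilde Q_v$, but the intent is clear and the argument is sound.
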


 The proof of Theorem \ref{Thm1noPi} consists of
showing that the computation of 
$\mathcal{A}_k(\,\mathcal{N}[\beta_1,\beta_2]\,)$ 
follows every non-self-crossing path in
$\mathcal{N}[\beta_1,\beta_2]$.
First we analyse the combinatorial structure of a non-self crossing path $Q$ by considering its geometric representation on the $\alpha-\beta$ plane and then we show how the consecutive computational phases of 
algorithm $\mathcal{A}_k$ follow the consecutive 
sections of path $Q$. 

To use Theorem~\ref{Thm1noPi} to conclude that algorithm $\mathcal{A}_k$ applied to the whole residual network
$\mathcal{N}_{k-1}$ is
correct, 
that is, that the computed tree is indeed a shortest path tree in $\mathcal{N}_{k-1}$,
we need Theorem \ref{mainThm2nopi} (given below) which 
asserts that there are non-self-crossing shortest paths in $\mathcal{N}_{k-1}$. To simplify the presentation of a non-self-crossing path followed  by algorithm $\mathcal{A}_k$ (not necessarily a shortest path) we  distinguish between \emph{red} and \emph{black} points and edges.

The {red} points and {red} edges are the points and edges on the paths $\mathcal{Y}_1, \mathcal{Y}_2, \ldots,\mathcal{Y}_{k-1}$. All other points and edges are black. Recall that in the residual network $\mathcal{N}_{k-1}$, the red edges (short and long) of the paths  $\mathcal{Y}_1, \mathcal{Y}_2, \ldots,\mathcal{Y}_{k-1}$ have reversed direction and negated weights, as in the standard way. That is, a long red edge $(u,v)\in \mathcal{Y}_j$ where $j \in [1,k-1]$ such that $u \prec v$ has weight equal to $0$ and reversed direction from $v$ to $u$. A short edge $(u^-,u^+)$ has direction from $u^+$ to $u^-$ and weight equal to $w_u$. 

\begin{theorem}\label{mainThm2nopi}
For a sub-network $\mathcal{N}[\beta_1,\beta_2]$,
there exists 
a non-self-crossing shortest path
to every  point $v$ in the sub-network.
\end{theorem}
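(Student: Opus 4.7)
The plan is an uncrossing argument. Fix any point $v$ in the sub-network, and among all shortest $s$-$v$ paths in $\mathcal{N}[\beta_1,\beta_2]$ choose one, $P$, that minimises the crossing number $\chi(P)$, defined as the number of unordered pairs of edges of $P$ that cross as segments in the $\alpha$-$\beta$ plane. The goal is to derive $\chi(P)=0$.

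The first step I would take is to restrict attention to long-edge crossings of weight $0$. By the $\epsilon$-placement described in subsection \ref{sub:plane}, each short edge $(i^-,i^+)$ lies in a small neighbourhood of the regular point $i$ and, under the general-position assumption, crosses no other edge. Moreover, every long edge has weight $0$, whether it is a forward edge of $G^*$ or a reversed edge of some $\mathcal{Y}_j$ in the residual network. So the two crossing edges $e_1 = (u_1, u_2)$ and $e_2 = (u_3, u_4)$ of $P$ contribute nothing to $w(P)$, which is the slack needed to make the uncrossing weight-preserving.

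Second, I would perform a planar swap. Write $P = P_a \cdot e_1 \cdot P_b \cdot e_2 \cdot P_c$. I would prove and then apply a four-point lemma, analogous to the one that must underlie Theorem \ref{maintheoremuncrossing}: if two segments $[u_1,u_2]$ and $[u_3,u_4]$ whose endpoints are related by the dominance order cross in the plane, then at least one alternative pairing --- for instance $(u_1,u_4)$ and $(u_3,u_2)$ --- consists of long edges in $\mathcal{N}_{k-1}$ that also respect $\prec$. Substituting this pair for $e_1, e_2$ and, where necessary, rerouting the internal sub-path $P_b$ through the residual edges provided by the $\mathcal{Y}_j$'s that $P_b$ shares with them, produces a new $s$-$v$ walk $P'$. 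Because short edges of $P_b$ are traversed the same total number of times overall, and the swapped long edges carry weight $0$, we get $w(P')=w(P)$; any zero-weight cycle introduced along the way can be deleted. The standard planar uncrossing argument, the same one behind Theorem \ref{maintheoremuncrossing}, then shows that the swap removes the crossing at $\{e_1,e_2\}$ without introducing new crossings among the edges of $P'$, so $\chi(P')<\chi(P)$, contradicting the minimal choice of $P$.

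The main obstacle will be making the swap step rigorous inside the residual network. Unlike the original DAG, $\mathcal{N}_{k-1}$ contains both forward edges of $G^*$ and reversed edges coming from the $\mathcal{Y}_j$'s, so a careful case analysis is needed on the types of $e_1$ and $e_2$ (both forward, both reversed, or mixed) and on whether the re-routed $P_b$ must borrow edges from the $\mathcal{Y}_j$'s. The inductive non-crossing property of $\mathcal{Y}_1,\ldots,\mathcal{Y}_{k-1}$ is crucial here: it imposes a consistent left-to-right order on those paths in the plane, which is precisely what lets the local uncrossing be executed without breaking node-disjointness or leaving $\mathcal{N}[\beta_1,\beta_2]$.
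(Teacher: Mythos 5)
Your proposal takes a genuinely different route from the paper, but it has a gap in the potential-function argument that would prevent it from going through as stated.

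The paper's proof picks, among all shortest paths to $v$, one $Q^*$ with the \emph{minimum number of edges}. If $Q^*$ self-crosses at edges $(u,v)$ and $(x,y)$ with $(u,v)$ appearing first, it shows $u\prec y$ (so the zero-weight long edge $(u^+,y^-)$ exists), establishes $h(u)=h^-(y)$ via a no-negative-cycle argument, and then \emph{shortcuts}: it replaces the entire sub-path of $Q^*$ from $u^+$ to $y^-$ by the single edge $(u^+,y^-)$, producing a shortest path with strictly fewer edges --- contradiction. No planar swap or rerouting of the middle portion is needed; the middle portion is simply deleted.

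You instead minimise the \emph{crossing number} $\chi(P)$ and uncross via a planar swap. The step that fails is the claim that the swap produces $P'$ ``without introducing new crossings among the edges of $P'$, so $\chi(P')<\chi(P)$.'' That is not what the uncrossing argument behind Theorem~\ref{maintheoremuncrossing} gives you. In Corollary~\ref{collectionwithPi} the paper argues termination of the uncrossing procedure from the fact that the \emph{total length} of the appended segments is monotonically decreasing (triangle inequality at the crossing point $\pi$); it never claims the number of crossings decreases, and in general it does not: the new chord $(u_1,u_4)$ can cross edges of $P_a$ or $P_c$ that neither $e_1$ nor $e_2$ crossed. So $\chi$ is the wrong invariant. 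Replacing $\chi(P)$ by either the number of edges of $P$ (as the paper does) or by the total Euclidean length of $P$'s segments repairs the argument; at that point your proof collapses essentially into the paper's, since once you additionally recognise that a zero-weight cycle may be discarded, the ``swap'' is just the paper's shortcut $Q_u \cup \{(u^+,y^-)\}\cup Q_y$. A further point you flag but do not resolve --- verifying that the long edges $(u_1,u_4)$ and $(u_3,u_2)$ actually exist with weight~$0$ in the residual network $\mathcal{N}_{k-1}$ (rather than being reversed edges of some $\mathcal{Y}_j$) --- is also genuinely needed; the paper's version sidesteps it by only invoking one shortcut edge together with the no-negative-cycle property, and your two-edge swap doubles the case analysis.
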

\begin{proof}
For a sub-network $\mathcal{N}[\beta_1,\beta_2]$ consider a point $v$ in this sub-network. Among all shortest paths to point $v$ let $Q^*$ be the shortest path with the minimum number of edges. We claim that $Q^*$ is non-self-crossing. Assume towards contradiction that $Q^*$ is self-crossing.

Recall that each point $x$ is a pair of points $(x^-,x^+)$ connected with a short edge of capacity $1$. We denote by $h(x)$ the weight of the sub-path of $Q^*$ to point $x^+$ and by $h^-(x)$ the weight of the sub-path of $Q^*$ to point $x^-$. Notice that if $x$ is a black point then the path to $x^+$ must traverse the short residual edge $(x^-,x^+)$ with weight $-w_x<0$. Therefore we have that $h(x)=h^-(x)-w_x$. If $x$ is a red point then the short edge $(x^+,x^-)$ is not residual and its weight is equal to $w_x>0$, which means that the path to $x^+$ can not traverse the short red edge $(x^+,x^-)$ and therefore we have $h(x) \leq h^-(x) \leq h(x)+w(x^+,x^-)$.

It is easy to see that if $Q^*$ is self-crossing then it must traverse at least on red edge of a path $\mathcal{Y}_j$ where $j \in [1,k-1]$.  Notice that since paths $\mathcal{Y}_1,\mathcal{Y}_2,\ldots,\mathcal{Y}_{k-1}$ are non-crossing pairwise, if $Q^*$ is self-crossing then either $Q^*$ has two black edges $(u,v)$ and $(x,y)$ that cross (see Figure \ref{nonselfcrossingtheorem1}) or a black edge $(u,v)$ crossing with a red edge $(y,x)$ (see Figure \ref{nonselfcrossingtheorem2}).

Without loss of generality, we assume that edge $(u,v)$ appears before edge $(x,y)$ in $Q^*$. Let $\pi$ be the crossing point of edge $(u,v)$ and edge $(x,y)$. Since $\pi$ is a point on the closed segment $[u,v]$ of the black edge $(u,v)$ we have that $u \prec \pi$. Similarly, since $\pi$ is a point on the closed segment $[x,y]$ of the black edge $(x,y)$ (resp. red edge $(y,x)$)  we have that $\pi \prec y$. Thus, we conclude that $u \prec y$. Symmetrically, we obtain that $x \prec v$.

 We first claim that $h(u)=h^-(y)$. If $h(u)<h^-(y)$, then consider the path $Q_u^ \cup \{(u^+,y^-)\}$ to point $y^-$ where $Q_{u}$ is the sub-path of $Q^*$ to point $u^+$. The weight of the long edge $(u^+,y^-)$ is equal to zero and since $h(u)<h^-(y)$, the weight of path $Q_{u}\cup \{(u^+,y^-)\}$ is smaller than the weight of path $Q_{y}$ where $Q_y$ is the sub-path of $Q^*$ to point $y^-$.  However, this makes a contradiction that $Q_{y}$ is a shortest path to point $y^-$. If $h(u)>h^-(y)$, then we obtain that $h^-(v)>h(x)$ since the weight of the long edges $(u^+,v^-)$ and $(x^+,y^-)$ is equal to zero. Consider the cycle $C=Q_{vx} \cup \{(x^+,v^-)\}$ where $Q_{vx}$ is the sub-path of $Q^*$ from $v^-$ to $x^+$. If $h^-(v)>h(x)$ then the total weight of cycle $C$ is negative. This makes a contradiction since there are no negative cycles in the residual network.

Let $m^*$ be the number of edges in path $Q^*$. Consider the decomposition of $Q^*$ into $Q_u \cup Q_{uy} \cup Q_{y}$ where $Q_u$ is the sub-path of $Q^*$ from its starting point to point $u^+$, $Q_{uy}$ is the sub-path of $Q^*$ from $u^+$ to $y^-$ and $Q_{y}$ is the sub-path of $Q^*$ from $y^-$ to $v$. Consider the path $Q=Q_u \cup \{(u^+,y^-)\} \cup Q_{y}$ and let $m$ be the number of edges in path $Q$. Path $Q$ has the same weight as $Q^*$ since $h(u)=h^-(y)$. Further, $m < m^*$ since the sub-path $Q_{uy}$ consists of at least two edges. However, this makes a contradiction since  $Q^*$ is chosen as the shortest path to $v$ with the minimum number of edges.
\end{proof}

 \begin{figure}
\centering
\begin{subfigure}{.5\textwidth}
  \centering
  \includegraphics[scale=0.5]{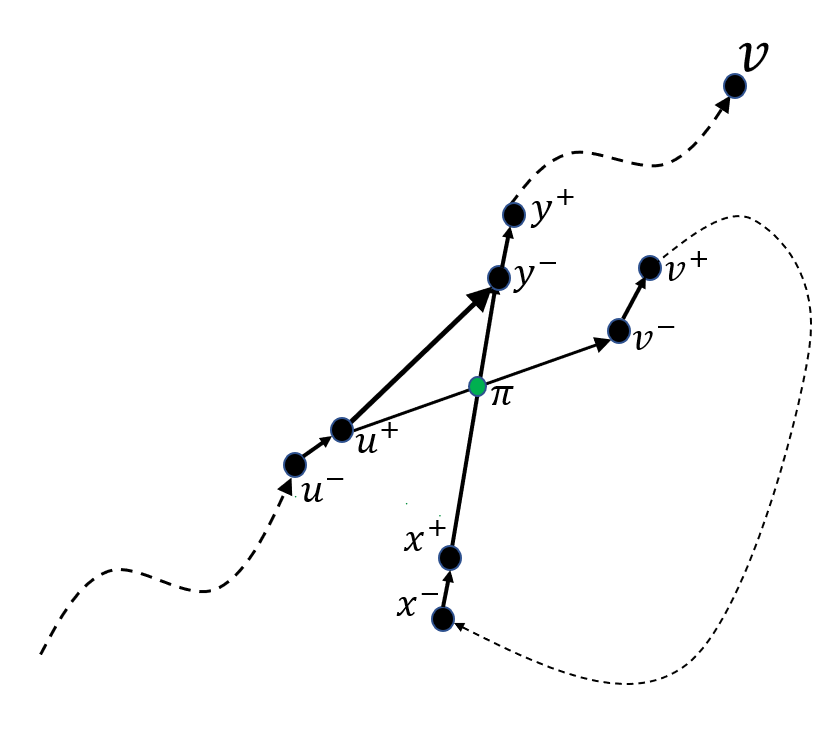}
    \caption{}
    \label{nonselfcrossingtheorem1}
\end{subfigure}%
\begin{subfigure}{.5\textwidth}
  \centering
  \includegraphics[scale=0.5]{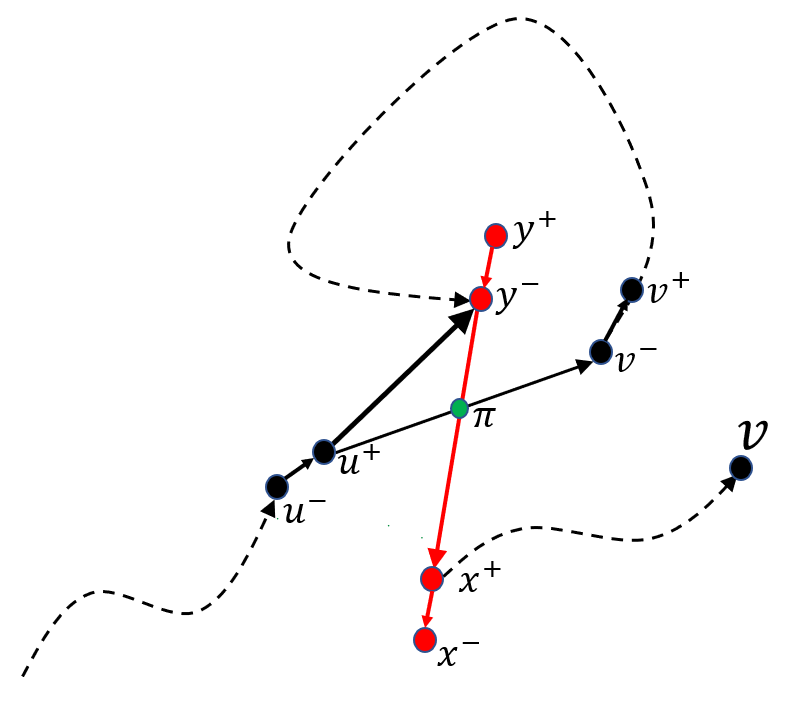}
    \caption{}
    \label{nonselfcrossingtheorem2}
\end{subfigure}
\caption{Figures \ref{nonselfcrossingtheorem1} and \ref{nonselfcrossingtheorem2}: The schematic representation of the proof for Theorem \ref{mainThm2nopi}.}
\end{figure}

Consider the computation of 
$\mathcal{A}_k(\mathcal{N}_{k-1})$, that is, the initial call of algorithm $\mathcal{A}_{k}$ to the whole residual network $\mathcal{N}_{k-1}$.
The array $predecessor$ is initialised to some tree 
rooted at $s$. The details of this initialisation are given in subsection \ref{section:arrays}. Array $predecessor$ is updated only by 
the $\relaxop$ operation. Therefore, by the general properties of the shortest-paths relaxation technique, 
since there are no negative cycles in $\mathcal{N}_{k-1}$, the array $predecessor$ always represents some tree. 
Let $h$ and $predecessor$ be the arrays when the computation terminates. 
From Corollary~\ref{Thm1CornoPi},
for every point $v$ in $\mathcal{N}_{k-1}$, 
we have $h(v) \leq \widetilde h(v)$.
From Theorem \ref{mainThm2nopi},  
$\widetilde h(v) = h^*(v)$, 
where $h^*(v)$ is the weight of a shortest path 
from $s$ to $v$ in $\mathcal{N}_{k-1}$. 
Therefore we have that 
$h^*(v) \leq  h(v) \leq \widetilde h(v) = h^*(v)$, 
so $h(v) = h^*(v)$. Thus, the computed tree must be a shortest-path tree (from the general properties of 
the relaxation technique: $h(v)$ is never smaller than 
the weight of the current tree path from $s$ to $v$).

For the special case $k=1$, in Section \ref{k1sect} we show an iterative algorithm $\mathcal{A}_1$ with the running time of
$O(n \log^3 n)$ which considers all points 
in topological order (two points $x$ and $x'$ are in topological order if $x \prec x'$) and for each point $x$ performs operation $\Relaxop(x)$. Algorithm $\mathcal{A}_1$ essentially implements the standard methodology\footnote{For any directed acyclic graph (DAG) $G$, a shortest path between two points in $G$ can be computed by traversing the nodes in topological order and for each node $v$ perform operation relax in all edges outgoing from $v$.}  of computing a shortest path in a directed acyclic graph, but accounts for incoming edges (instead of outgoing edges) using operation $\Relaxop(x)$.  A topological order of the points can be found in $O(n \log n)$-time as shown in \cite{DBLP:journals/dam/AsahiroHMOSY06}.
Operation $\Relaxop(x)$ takes $O(\log^3(n))$ amortized time 
using a data structure for orthogonal-search queries \cite{Willard:1985:NDS:3674.3690}. 

For $k \geq 2$ the proof of Theorem \ref{Thm1noPi} is outlined below. For some $\beta_1,\beta_2$ such that $0 \leq \beta_1 \leq \beta_2 \leq n+1$ consider a sub-network $\mathcal{N}[\beta_1,\beta_2]$ and let $Q$ be a non-self-crossing path in this sub-network. Without loss of generality, we assume that $Q$ has points both in $\mathcal{N}_1$ and $\mathcal{N}_2$. 
\begin{definition}
\label{Def:x}
If path $Q$ starts in $\mathcal{N}_1$ then we define  $x \in \mathcal{N}_1$ to be the last point in $ Q$ such that all points before $x$ are in $\mathcal{N}_1$. If path $Q$ starts in $\mathcal{N}_2$ we define $x \in \mathcal{N}_2$ to be the starting point of $Q$.
\end{definition}

\begin{definition}
\label{Def:x'}
 If path $Q$ ends in $\mathcal{N}_1$ then we define  $x' \in \mathcal{N}_1$ to be the last point of $ Q$. If path $Q$ ends in $\mathcal{N}_2$ we define $x'$ to be the first point of $Q$  in $\mathcal{N}_2$ such that all points after $x'$ are in $\mathcal{N}_2$.
\end{definition}

Notice that points $x$ and $x'$ are always unique and well-defined for any path $Q$ in a sub-network $\mathcal{N}[\beta_1,\beta_2]$. Path $Q$ can be decomposed into three parts $(Q_x,q_{xx'},Q_{x'})$ where $Q_x$ is the sub-path of $Q$ from its starting point to point $x$, $q_{xx'}$ is the sub-path of $Q$ from $x$ to $x'$ and $Q_{x'}$ is the sub-path of $Q$ from point $x'$ to its end point. Following Definitions \ref{Def:x} and \ref{Def:x'}, observe that if $x$ is in $\mathcal{N}_1$ then the sub-path $Q_x$ has only points in $\mathcal{N}_1$ and if $x$ is in $\mathcal{N}_2$ then $Q_x$ is empty. Similarly, if $x'$ is in $\mathcal{N}_2$ then all points in the sub-path $Q_{x'}$ are in $\mathcal{N}_2$  and if $x'$ is in $\mathcal{N}_1$ then $Q_{x'}$ is empty. The sub-path $q_{xx'}$ has points both in $\mathcal{N}_1$ and $\mathcal{N}_2$ and it is empty if $x=x'$.

Theorem \ref{Thm:coordinationnoPi} describes the specification of the coordination algorithm $\mathcal{C}_k$. Using Theorem \ref{Thm:coordinationnoPi}, the proof of Theorem \ref{Thm1noPi} follows by double induction on both parameter $k$ and the size of the network~$n$. 

\begin{theorem}
\label{Thm:coordinationnoPi}
For $k \geq 2$, assuming that Theorem \ref{Thm1noPi} is true for $k-1$, when algorithm $\mathcal{C}_k$ is applied to a sub-network $(\mathcal{N}; \mathcal{Y}_1, \mathcal{Y}_2,
\ldots,\mathcal{Y}_{k-1})[\beta_1,\beta_2]$ the computation follows the sub-path $q_{xx'}$ of every non-self-crossing path $Q$ in this sub-network.
\end{theorem}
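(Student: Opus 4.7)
My plan is first to analyze the combinatorial structure of $q_{xx'}$ and then to describe how $\mathcal{C}_k$ can trace it. Every forward edge in $G^*$ goes from a point of smaller $\beta$ to a point of larger $\beta$, so the only way for $q_{xx'}$ to transition from $\mathcal{N}_2$ back down into $\mathcal{N}_1$ is via a reversed residual edge, and such edges are precisely the boundary-crossing edges of the red paths $\mathcal{Y}_1,\ldots,\mathcal{Y}_{k-1}$. Each $\mathcal{Y}_j$ is a monotone $s$-to-$t$ path, hence crosses the horizontal separator $\beta=(\beta_1+\beta_2)/2$ through a unique edge, and because $q_{xx'}$ is simple it uses each such reversed edge at most once. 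This gives at most $k-1$ downward boundary transitions, so $q_{xx'}$ decomposes uniquely as $\sigma_1, e_1, \sigma_2, e_2, \ldots, e_{r-1}, \sigma_r$ with $r = O(k)$, where each $\sigma_i$ lies entirely in $\mathcal{N}_1$ or entirely in $\mathcal{N}_2$, and each crossing edge $e_i$ is either an arbitrary upward edge or one of the $k-1$ reversed red edges.

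The design of $\mathcal{C}_k$ that I would adopt iterates over all admissible ``patterns'' for $q_{xx'}$, where a pattern specifies $r$, the side of each $\sigma_i$, and for each downward $e_i$ which red path it belongs to. The number of patterns depends only on $k$, so enumeration contributes no factor of $n$ to the running time. For a fixed pattern, $\mathcal{C}_k$ processes the segments in the left-to-right order they appear in $q_{xx'}$, and between consecutive segment-handling sub-routines it explicitly performs $\relaxop$ on the intervening crossing edge $e_i$. This guarantees the temporal ordering of relax operations required by the definition of ``following'' $q_{xx'}$.

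To follow an individual segment $\sigma_i$, my plan is to invoke algorithm $\mathcal{A}_{k-1}$ on a modified sub-network in which the red paths supplying the downward crossings adjacent to $\sigma_i$ are removed from the red-path list and their residual reversals locally undone. Since $\sigma_i$ is a sub-path of the non-self-crossing path $q_{xx'}$, it is non-self-crossing; the key claim I would verify is that it is still non-self-crossing in the reduced sub-network, which has at most $k-2$ red paths. The inductive hypothesis, Theorem~\ref{Thm1noPi} for $k-1$ applied to either the $\mathcal{N}_1$ or $\mathcal{N}_2$ half, then yields that this invocation of $\mathcal{A}_{k-1}$ follows $\sigma_i$. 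Composing these guarantees with the explicit relaxations on the edges $e_i$ shows that $\mathcal{C}_k$ follows the entire $q_{xx'}$.

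The hard part will be making the reduction of the red-path list precise: I must verify that once a pattern commits to the reversed edge of some $\mathcal{Y}_j$ for a downward transition, the segment $\sigma_i$ does not in fact rely on any other part of $\mathcal{Y}_j$, and that the stored value $h$ at the entry point of $\sigma_i$ when the sub-routine begins correctly reflects the weight of the prefix $\sigma_1 e_1 \cdots e_{i-1}$ of $q_{xx'}$ rather than some unrelated walk. Aggregating the work over all $k^{\Theta(k)}$ patterns without multiplicative blow-up, while orchestrating the sub-routine invocations so that every non-self-crossing $q_{xx'}$ is followed by at least one pattern, is the technical heart of the argument and is what ultimately produces the $k^{3k}$ factor in the running-time bound of Theorem~\ref{Thm1noPi}.
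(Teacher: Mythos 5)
Your proposal diverges substantially from the paper's argument, and it contains a gap that the paper's machinery is specifically designed to circumvent.

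You decompose $q_{xx'}$ by \emph{spatial side}: segments $\sigma_i$ each lying wholly in $\mathcal{N}_1$ or $\mathcal{N}_2$, separated by single boundary-crossing edges $e_i$. You then want to follow each $\sigma_i$ by calling $\mathcal{A}_{k-1}$ on a half-network from which the red path supplying the adjacent downward crossing has been deleted, and you flag as "the hard part" the claim that $\sigma_i$ does not use any other edge of that deleted red path. That claim is the crux, and it is false in general. A non-self-crossing residual path can enter $\mathcal{N}_2$, pick up the (reversed) red edges of $\mathcal{Y}_j$ at some point above the separator, ride them downward, and exit $\mathcal{N}_2$ precisely via the boundary-crossing edge of $\mathcal{Y}_j$. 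In that situation the chromatic run of colour $c_j$ straddles the separator: it consists of several red edges of $\mathcal{Y}_j$ in $\mathcal{N}_2$ immediately followed by the crossing edge $e_i$. Cutting this run at $e_i$ and deleting $\mathcal{Y}_j$ from the red-path list destroys the segment $\sigma_i$, so the inductive call to $\mathcal{A}_{k-1}$ has nothing to follow. The same kind of example shows $\sigma_i$ can be fully $(k-1)$-chromatic even after committing both adjacent crossings, so no single red path can safely be removed, and your pattern enumeration never covers this case.

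The paper avoids the problem by decomposing $q_{xx'}$ not by side but by \emph{chromaticity}. Its units $q_i$ run from $w_i$ to $w_{i+1}$ (each containing one upward and one downward crossing, so each straddles both halves rather than living in one); each $q_i$ is then split at $\pi_i$ and $\tau_i$ into a maximal $(k-2)$-chromatic prefix, a middle part, and a maximal $(k-2)$-chromatic suffix (Definition~\ref{def:pointssplit}). Lemma~\ref{lemma:CoordinationCknoPi} shows the middle part is a $\Phi$-path — every run stays inside the region $\Phi$ bounded by $\mathcal{Y}_1$ and $\mathcal{Y}_{k-1}$ — and Theorem~\ref{algo:Znopi} shows the auxiliary top-down recursion $\mathcal{Z}_k$ follows every $\Phi$-path. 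The $(k-2)$-chromatic ends are the pieces handled by $\widehat{\mathcal{A}}_{k-1}$, because an at-most-$(k-2)$-chromatic path avoids some $\mathcal{Y}_i$ entirely and is therefore a non-self-crossing path in $\mathcal{N}[\beta_1,\beta_2]\setminus\mathcal{Y}_i$. Note also that the paper's $\widehat{\mathcal{A}}_{k-1}$ calls $\mathcal{A}_{k-1}$ on the \emph{full} interval $\mathcal{N}[\beta_1,\beta_2]$ with one red path removed, not on a half-network; the pieces it follows still cross the separator. Finally, the paper's $\mathcal{C}_k$ is a fixed oblivious sequence of subroutine calls that simultaneously accounts for all possible path shapes, rather than an enumeration over patterns; the theorem is a statement about that specific algorithm, so a proposal to replace it with a different (pattern-enumerating) routine would, even if made to work, prove something other than the stated claim.
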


We say that the sub-path $q_{xx'}$ of $Q$  \textit{crosses} from $\mathcal{N}_2$ to $\mathcal{N}_1$ if it traverses a red edge $(u,u') \in \mathcal{Y}_j, j \in [1,k-1]$ such that $u \in \mathcal{N}_2$ and $u' \in \mathcal{N}_1$.   Notice that for $k \geq 2$ there are exactly $k-1$ red edges $(u_1,u'_1),\ldots,(u_{k-1},u'_{k-1})$ that cross from $\mathcal{N}_2$ to $\mathcal{N}_1$. The proof of Theorem \ref{Thm:coordinationnoPi} depends on the fact that for a sub-network $\mathcal{N}[\beta_1,\beta_2]$, the sub-path $q_{xx'}$ can cross at most $(k-1)$ times from $\mathcal{N}_2$ to $\mathcal{N}_1$ (as each such crossing traverse one of the $k-1$ red 
edges from $\mathcal{N}_2$ to $\mathcal{N}_1$)
and on the analysis of the structure of a non-self-crossing path $Q$.

\paragraph{Computational Example}\textit{} \newline
To resolve any ambiguity, in Figures \ref{originalInstance}, \ref{shortestpathtreecross}, \ref{shortestpathtree} and \ref{finalinstance}  we show an example of the input and output of algorithm $\mathcal{A}_k$ for the special case $k=2$. Figure \ref{originalInstance} shows the implicit representation of the residual network $\mathcal{N}_{k-1}$ for $k=2$. The red segment represents path $\mathcal{Y}_1=(s,y_2,y_3,y_4,y_5,t)$. 

For clarity we do not show the black edges (long and short). Further, to simplify matters, we assume that the weight of each point $i$ is equal to $1$. That is, the weight of a black short edge $(i^-,i^+)$ is equal to $-1$ and the weight of a red short edge is equal to $1$. 

Figure \ref{shortestpathtreecross} shows the  shortest path tree $T$ computed by algorithm $\mathcal{A}_2$ in the residual network $\mathcal{N}_1$. The computed shortest path from $s$ to $t$ in $T$ is path $Q=(s,b_3,b_4,y_4,y_3,y_2,b_2,b_5,b_6,t)$.  Figure \ref{shortestpathtree} shows the two optimal node-disjoint paths $Y_1$ and $Y_2$ (which can cross) in $\mathcal{N}$ if we obtain the flow for $k=2$ in the usual way.

Finally, Figure \ref{finalinstance} shows the resulting collection of two optimal node-disjoint, non-crossing paths  $\mathcal{Y}_1$ and $\mathcal{Y}_2$ in $\mathcal{N}$ obtained by the additional post-processing algorithm $\widetilde{U}_2$, which will be the input for algorithm $\mathcal{A}_3$.\footnote{Observe that we need at least 3 robots to collect balls $b_3,b_4$ and $y_3$ and the schedule shown for two robots collects every ball except $y_3$ so it must be optimal.}

To conclude that algorithm $\mathcal{A}_2$ computes a shortest path from $s$ to $t$ in the residual
network, we will show that 
the sequence of $\relaxop$ operations executed during 
the computation includes a sub-sequence of $\relaxop$ operations which
corresponds, or 'follows', a non-self-crossing shortest path. 
For the example shown in Figures \ref{originalInstance},\ref{shortestpathtreecross},\ref{shortestpathtree} and \ref{finalinstance}, this sub-sequence of $\relaxop$ operations is $(s,b_3),(b_3,b_4),(b_4,y_4),(y_4,y_3),(y_3,y_2),(y_2,b_2),(b_2,b_5),(b_5,b_6),(b_6,t)$. Notice that only the  relative order of these relax operations is important.

\begin{figure}
\centering
\begin{subfigure}{.5\textwidth}
  \centering
  \includegraphics[scale=0.55]{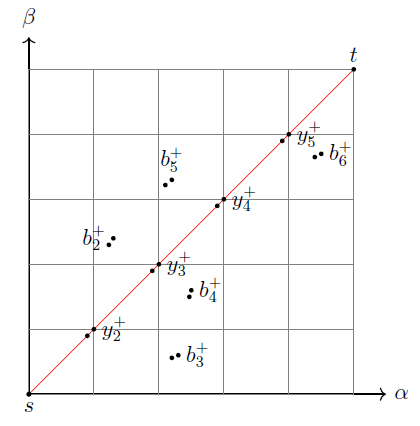}
  \caption{}
  \label{originalInstance}
\end{subfigure}%
\begin{subfigure}{.5\textwidth}
  \centering
  \includegraphics[scale=0.55]{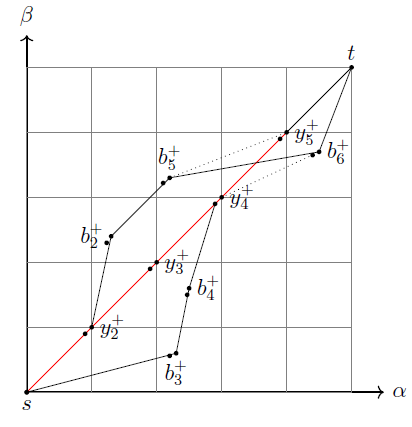}
    \caption{}\label{shortestpathtreecross}
\end{subfigure}
\begin{subfigure}{.5\textwidth}
  \centering
  \includegraphics[scale=0.55]{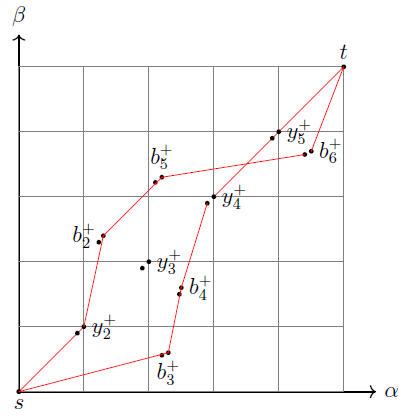}
    \caption{}\label{shortestpathtree}
\end{subfigure}%
\begin{subfigure}{.5\textwidth}
  \centering
  \includegraphics[scale=0.55]{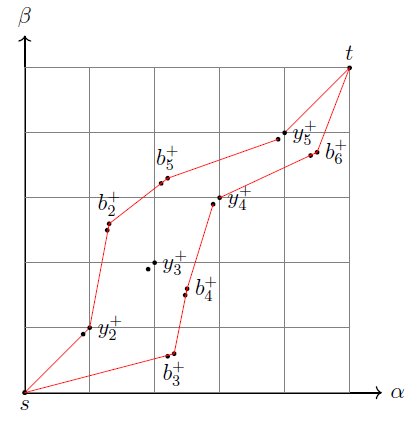}
    \caption{}\label{finalinstance}
\end{subfigure}
\caption{Figure \ref{originalInstance},\ref{shortestpathtreecross},\ref{shortestpathtree} and \ref{finalinstance} show a computational example of algorithm $\mathcal{A}_k$ for $k=2$.}
\end{figure}

\subsection{Implementation Details}
\label{section:arrays}
 Before we discuss the implementation details of algorithm $\mathcal{A}_k$ we remind the reader the structural details of the residual network $\mathcal{N}_{k-1}$. Recall that all nodes and edges of the paths $\mathcal{Y}_1, \mathcal{Y}_2, \ldots,$ $\mathcal{Y}_{k-1}$ are red. All other nodes and edges are black. As discussed in subsection \ref{sub:dag}, the node set $V(\mathcal{N}_{k-1})\setminus{\{s,t\}}$ consists of $n$ pairs of nodes $(1^-,1^+),(2^-,2^+),\ldots,(n^-,n^+)$ connected with a short edge.  To simplify matters, we refer to a pair of nodes $(x^-,x^+)$  as a \textit{pair} node $x$ in $\mathcal{N}_{k-1}$ or as a \textit{pair} point $x$ in $\mathcal{P}$, depending on the context. 

 For two pair nodes $x$ and $y$, if the residual network $\mathcal{N}_{k-1}$  has an edge $(y,x)$, then for the corresponding  pair points $x$ and $y$ in $\mathcal{P}$ we say that $x$ dominates  $y$ which is denoted by $y \prec x$. 
 For two pair points $y$ and $x$ such that $y \prec x$, the residual network $\mathcal{N}_{k-1}$ has either a long black edge $(y^+,x^-)$ or 
 a long red edge $(x^-,y^+)$ (the latter if 
 edge $(y,x) \in \mathcal{Y}_j$ where $j \in [1,k-1]$), 
 with weight equal to zero. 

For a black pair node $x$, the weight of the short edge $(x^-,x^+)$ is equal to $-w_x<0$. For a red pair node $x$ the  
 short edge $(x^-,x^+)$  has reversed direction from $x^+$ to $x^-$
 and weight equal to $w_x>0$.  The capacity of every edge regardless of colour (red or black) or type (long or short) is equal to $1$.

\paragraph{Initialisation of arrays $h$ and $pred$} \textit{ }

Consider the  two arrays $h$ and $\predecessor$, which are 
indexed by the nodes $V(\mathcal{N}_{k-1})\setminus\{s\}$.
For a pair node $x \in V(\mathcal{N}_{k-1})\setminus\{s\}$, terms $h(x)$ and $h^-(x)$ denote the current shortest path weight from the source $s$ to point $x^+$ and $x^-$, respectively. Similarly, we denote by $\predecessor(x)$ and $\predecessor^- (x)$ the predecessor of node $x^+$ and $x^-$ in the current tree. We initialize arrays $h$ and $\predecessor$
in the following way. 

For each black pair node 
$x\in V(\mathcal{N}_{k-1})$,
we set $\predecessor^-(x) = s$, $h^-(x) = 0$,
$\predecessor(x) = x^-$ and $h(x) = -w_x$.
For each red edge 
$(x^-, y^+)$ in $\mathcal{N}_{k-1}$ 
we set $\predecessor^-(x) = s$, $h^-(x) = 0$,
$\predecessor(y) = x^-$ and $h(y) = 0$.
For node $t$, we set $\predecessor(t) = s$ and 
$h(t) = 0$. 
Finally, to have the initial tree which reaches 
all nodes in $\mathcal{N}_{k-1}$, 
for each red edge $(s,x^-)$ in $\mathcal{N}_{k-1}$ ,
we set $\predecessor^-(x) = x^+$ and $h^-(x) = w_x$, 
and for each red edge $(t,x^+)$ in $\mathcal{N}_{k-1}$,
we set $\predecessor(x) = t$ and $h(x) = 0$.

The initialization of arrays 
$h$ and $\predecessor$ described above is valid for the relaxation 
technique since array $\predecessor$ defines a tree in 
$\mathcal{N}_{k-1}$ which is rooted at $s$
and for each node $x$ in $\mathcal{N}_{k-1}$ 
other than $s$,
$h(x)$ is the weight of the tree path from $s$ to $x$. An algorithm based on the relaxation technique
updates
arrays $h$ and $\predecessor$ only by
the following classic $\relaxop(y,x)$ operation \cite{Cormen2001introduction}: if $h(x) > h(y) + w(y,x)$, then $h(x) \leftarrow h(y) + w(y,x)$ and $\predecessor(x) \leftarrow y$, where 
$(y,x)$ is an edge in the input graph (or equivalently $y \prec x$), 
and $w(y,x)$ is the weight of this edge. Notice that for an operation $\relaxop(x,y)$ edge $(x,y)$ can be either short or long. 

At the end of the computation, for each node $x$,
$h(x)$ (resp. $h^-(x)$) should be equal to the shortest-path weight 
from $s$ to $x$ (resp. $x^-$), and array $\predecessor$ should 
represent a shortest path tree from the source $s$ to all
reachable nodes.
Since we want to compute a shortest path 
from $s$ to $t$, we will only require 
(and we will verify in the proofs) that 
at the end of the computation array $\predecessor$ 
includes a shortest path from $s$ to $t$ and that 
values $h(x)$ and $h^-(x)$ are correct for each node $x$ on this path.

An algorithm based on operation $\relaxop(x,y)$ 
computes a shortest $s$-$t$ path for a given input network,
if there is a shortest $s$-$t$ path
$(s=x_0,x_1,x_2,\ldots,x_q=t)$ such that 
the sequence of $\relaxop$ operations
executed by the algorithm
includes as a sub-sequence $\relaxop(x_{i-1},x_i)$,
$i = 1,2,\ldots,q$.
Only the relative order of 
such operations $\relaxop(x_{i-1},x_i)$ is important,
but they do not have to be
consecutive. They can be interleaved 
in arbitrary way with any number of other 
$\relaxop$ operations.

\paragraph{Two-Dimensional Orthogonal Search Problem} \textit{ }

 In the Two-Dimensional Orthogonal Search Problem we are given a set $S$ of $n$ points in a two dimensional plane where each point $i$ for $i=1,2,\ldots,n$ is identified with two coordinates $(x_i,y_i)$ and a weight value $u_i$. Given a  rectangle query $R=[x_1,x_2]$x$[y_1,y_2]$ the orthogonal search problem asks for the point  $j$ within $R$ that has the minimum weight value $u_j$. The operation of retrieving the point with the minimum $u$ value within $R$, can be seen as an answer to a query which has to be completed relatively fast. 
 
 We want to store all points in $S$ in a data structure such that given a query (\textit{i.e.} a rectangle $R$) we can perform the two basic operations: (i) Report the point with the minimum weight value within rectangle $R$ and (ii) Update the weight value of a given point in $S$. In such data structures the operations would usually be either only queries (the static version of the problem) or queries and insertions and deletions of points (the dynamic version of the problem). When we update the weight value of a point $i$ from $u_i$ to $u^{\prime}_i$ in $S$, we assume that we delete point $i$ and add a new point $i^{'}$ with weight $u^{\prime}_i$. 
 
 A variety of dynamic data structures such as range trees \cite{Lueker197828} \cite{x} \cite{Lee:1980:QTF:320613.320618}, layered range trees\cite{Gabow:1984:SRT:800057.808675} \cite{Willard:1985:NDS:3674.3690} and weight balanced trees\cite{Nievergelt:1972:BST:800152.804906} have been designed for dynamic and static versions of the Orthogonal Searching Problem. In \cite{Lueker197828} it was shown that the asymptotic upper bound of the time to respond to one query (i.e report the minimum weight point within a rectangle $R$) is $O(\log^{3}(n))$ in the case of a two dimensional space. Furthermore, it was shown that the upper bound on the running time of a sequence of $n$ operations which can be queries, insertions and deletions is $O(n \log^2 n)$.
 
 \paragraph{Operation $\Relaxop$} \textit{ }
\label{sub:Relax}

When algorithm ${\mathcal{A}_{k}}$ is applied to a sub-network $\mathcal{N}[\beta_1,\beta_2]$ of the residual network $\mathcal{N}_{k-1}$, 
operations $\relaxop$ are grouped together 
for the edges incoming to the same pair node.
For a pair node $x$,
we define operation $\Relaxop(x)$ as
 a sequence of all operations 
$\{ \relaxop(y^+,x^-): 
(y^+,x^-) \in \mathcal{N}[\beta_1,\beta_2]\}$, 
in arbitrary (because not relevant) order,
followed by the $\relaxop$ operation applied to the
residual edge outgoing from $x^-$ (if any). That edge
is either 
$(x^-,x^+)$, for a black pair node $x$, 
or $(x^-,\pi_x^+)$, 
for a red pair node $x$ on a path $\mathcal{Y}_j, j\in [1,k-1]$ with predecessor $\pi_x$.

For the worst-case running-time efficiency, 
we implement operations $\{ \relaxop(y^+,x^-): 
(y^+,x^-) \in \mathcal{N}[\beta_1,\beta_2]\}$
not by performing all of them explicitly
(this would take $O(n)$ time)
but by finding the pair node $y_{min}$ such 
that $h(y_{min}) =\min\{ h(y): 
(y^+,x^-) \in \mathcal{N}[\beta_1,\beta_2]\}$ and
performing only operation $\relaxop(y^+_{min},x^-)$.
We keep all pair nodes  in 
$\mathcal{N}[\beta_1,\beta_2]$
in a data structure 
for answering rectangle queries~\cite{Lueker197828}. 
The weight-value of a pair node $y$
in this data structure
is equal to the  current shortest path weight  
$h(y)$. 

For a sub-network $\mathcal{N}[\beta_1,\beta_2]$
and a pair node $x$ in this sub-network, we denote by $D_x[\beta_1,\beta_2]$ 
the set of all pair nodes $y$ in the sub-network such that there is an edge $(y^+,x^-)$, or equivalently $y \prec x$.
Notice that for a pair node $y \in D_x[\beta_1,\beta_2]$ the corresponding pair point $y$ in $\mathcal{P}$ must be within the rectangle $R_x=[0,\alpha_x]\times[\beta_1,\beta_x]$ since $y \prec x$ (\textit{i.e.} $\alpha_y \leq \alpha_x$ and $\beta_y \leq \beta_x$).
Thus, for a pair node $x$ finding pair node $y_{min}$ in $D_x[\beta_1,\beta_2]$ amounts to finding the minimum value pair point in rectangle~$R_x$.

For a black pair node $x$, finding pair node $y_{min}$ consists of answering the rectangle query for $R_x$ since every edge $(y^+,x^-) \in  \mathcal{N}_{k-1}[\beta_1,\beta_2]$ is a residual edge. For a red pair node $x$ on some path $\mathcal{Y}_j$ where $j \in [1,k-1]$, we first remove from the data structure the predecessor pair node $\pi_x$ of $x$ on $\mathcal{Y}_k$
(since $(\pi_x^+,x^-)$ is not a residual edge), then 
find $y_{min}$ by answering the rectangle query for $R_x$,
and finally re-insert $\pi_x$ back to the data
structure.
Each single operation on the data structure from~\cite{Lueker197828} (rectangle query,
update of the value of a given element, deleting
a given element, or inserting a new element) takes
$O(\log^3 n)$ time, so the running time of 
operation $\Relaxop$ is $O(\log^3 n)$.
\section{Shortest Path Algorithm $\mathcal{A}_k$ for $k =2$}
\label{k1sect}
In this section we consider the special case $k=2$ as an introduction to our recursive approach. For a  sub-network $(\mathcal{N}; \mathcal{Y}_1)[\beta_1,\beta_2]$ or $\mathcal{N}[\beta_1,\beta_2]$ for short, when algorithm $\mathcal{A}_2$ is applied to this sub-network, the computation consists of three phases. Consider the two sub-networks  $\mathcal{N}[\beta_1,(\beta_1+\beta_2)/2]$ and $\mathcal{N}[(\beta_1+\beta_2)/2,\beta_2]$ which we denote by $\mathcal{N}_1$ and $\mathcal{N}_2$, respectively. The first phase is the recursive call of algorithm $\mathcal{A}_2$ on  $\mathcal{N}_1$.  The second phase calls  the coordination algorithm $\mathcal{C}_2$ on sub-network $\mathcal{N}[\beta_1,\beta_2]$. The third phase is the recursive call of algorithm $\mathcal{A}_2$ on $\mathcal{N}_2$.
The description of algorithm $\mathcal{A}_2$ is shown in pseudo-code in Algorithm \ref{PseudoA2main}.

\begin{algorithm}[h]
 \caption{Algorithm $\mathcal{A}_2$ on input $(\mathcal{N}; \mathcal{Y}_1)[\beta_1,\beta_2] \equiv \mathcal{N}[\beta_1,\beta_2]$ }
\SetAlgoLined
 \begin{algorithmic}
 \State $\mathcal{N}_1 \leftarrow \mathcal{N}[\beta_1,(\beta_1+\beta_2)/2];$ $\mathcal{N}_2 \leftarrow \mathcal{N}[(\beta_1+\beta_2)/2,\beta_2];$
\State $\mathcal{A}_2 (\mathcal{N}_1);$
  \State $\mathcal{C}_{2}(\mathcal{N}[\beta_1,\beta_2]);$
\State $\mathcal{A}_2 (\mathcal{N}_2);$
 \end{algorithmic}
 \label{PseudoA2main}
\end{algorithm}

When algorithm $\mathcal{C}_2$ is applied to a sub-network $\mathcal{N}[\beta_1,\beta_2]$ the computation consists of three steps. The first and third step call algorithm $\mathcal{A}_1$ on the sub-network $\mathcal{N}[\beta_1,\beta_2] \setminus \mathcal{Y}_1$ which denotes the sub-network without the red edges of path $\mathcal{Y}_1$. The second step calls algorithm $\Delta(\mathcal{Y}_1)$ on the sub-network $\mathcal{N}[\beta_1,\beta_2]$. 
The computational steps of algorithm $\mathcal{C}_2$ are described in pseudo-code  in Algorithm \ref{PseudoC2main}.

For an input sub-network $\mathcal{N}[\beta_1,\beta_2]$ algorithm $\mathcal{A}_1$ consists of two steps. The first step computes  a topological order of all points in the sub-network using the $O(n \log n)$-time algorithm of Asahiro. \etal~\cite{DBLP:journals/dam/AsahiroHMOSY06}. The second step considers the points of the sub-network in topological order, that is, for two points $v$ and $v^{\prime}$ such that $v \prec v^{\prime}$ point $v$ is considered first and when a point $v$ is considered  it performs operation $\Relaxop(v)$ as described in Sub-section \ref{sub:Relax}.

For a sub-network $\mathcal{N}[\beta_1,\beta_2]$, algorithm  $\Delta(\mathcal{Y}_1)$ traverses the red edges of path $\mathcal{Y}_1$ in the sub-network (if any) and performs operation $\relaxop$ on the red edges(long and short).  Specifically, let $y_i$ for $i=1,2, \ldots,m$ be the $i^{th}$ red point on path $\mathcal{Y}_1$  in the sub-network, such that $y_1 \prec y_2 \prec \ldots \prec y_m$.   Algorithm $\Delta(\mathcal{Y}_1)$ performs operation $\relaxop(y^+_j,y^-_j)$ and operation $\relaxop(y^-_j,y^+_{j-1})$ for $j=m,m-1,\ldots,2$. For $j=1$ only  operation $\relaxop(y^+_j,y^-_j)$ is performed since edge $(y^-_j,y^+_{j-1})$ does not exist.

\begin{algorithm}[h]
 \caption{Algorithm $\mathcal{C}_2$ on input  $\mathcal{N}[\beta_1,\beta_2]$  }
\SetAlgoLined
 \begin{algorithmic}
  \State $\mathcal{A}_{1}(\mathcal{N}[\beta_1,\beta_2]\setminus \mathcal{Y}_1);$
 \State $\Delta(\mathcal{Y}_1);$
 \State $\mathcal{A}_{1}(\mathcal{N}[\beta_1,\beta_2]\setminus \mathcal{Y}_1);$
 \end{algorithmic}
 \label{PseudoC2main}
\end{algorithm}

For a sub-network $\mathcal{N}[\beta_1,\beta_2]$ we say that a path $Q$ is non-chromatic (red-chromatic) if it traverses only black (red) edges. 
Lemmas \ref{lemma:onlyblack} and \ref{lemma:onlyred} describe the specification of algorithms $\mathcal{A}_1$ and $\Delta(\mathcal{Y}_1)$, respectively.
\begin{lemma}
\label{lemma:onlyblack}
When algorithm $\mathcal{A}_1$ is applied to a sub-network  $\mathcal{N}[\beta_1,\beta_2]$ the computation follows every non-chromatic path in the sub-network  and the running time is $O(n\log^3 n)$  where $n$ is the size of the sub-network.
\end{lemma}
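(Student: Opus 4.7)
The plan is to establish both claims of the lemma: that the ordered sequence of $\relaxop$ operations executed by $\mathcal{A}_1$ contains, as a (possibly non-consecutive) sub-sequence, the relax operations along every non-chromatic path $Q$ of the sub-network, and that the total running time is $O(n\log^3 n)$.

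For correctness I would first describe the structure of any non-chromatic path $Q = (x_0, x_1, \ldots, x_m)$ in $\mathcal{N}[\beta_1,\beta_2]$: every edge of $Q$ is either a black long edge $(y^+, x^-)$ with $y\prec x$, or a black short edge $(x^-, x^+)$ of a black pair point, since red short edges have reversed orientation and red long edges are red. Hence the sequence of pair points visited by $Q$ is strictly increasing with respect to $\prec$, and $Q$ alternates at each visited (black) pair point $p$ in the pattern ``incoming long edge to $p^-$, short edge $(p^-, p^+)$, outgoing long edge from $p^+$''. Since $\mathcal{A}_1$ processes pair points in an extension of $\prec$ (computable in $O(n\log n)$ time by the algorithm of \cite{DBLP:journals/dam/AsahiroHMOSY06}) and, by the specification of $\Relaxop(p)$ in subsection \ref{sub:Relax}, first executes every $\relaxop(y^+, p^-)$ with $y\prec p$ in the sub-network and then, when $p$ is black, executes $\relaxop(p^-, p^+)$, I would argue inductively along $Q$: for each visited pair point $p$, the long edge of $Q$ entering $p^-$ and the short edge inside $p$ are both relaxed within $\Relaxop(p)$ in the correct internal order; the next long edge of $Q$ leaving $p^+$ and entering a dominating pair point $p'$ is relaxed inside $\Relaxop(p')$, which is invoked strictly later because $p \prec p'$. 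Chaining these observations over the consecutive edges of $Q$ yields the required ordered sub-sequence of relax operations.

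For the running time I would combine the $O(n\log n)$ cost of the topological sort with $O(n)$ invocations of $\Relaxop$. By the implementation described in subsection \ref{section:arrays}, each invocation amounts to a constant number of operations on the two-dimensional rectangle-minimum data structure of \cite{Willard:1985:NDS:3674.3690} (a rectangle-minimum query identifying the argmin of $h$ among the dominated pair points, an update of the stored value at $p$, and, when $p$ is red, a deletion of its red predecessor $\pi_p$ and its reinsertion around the query), each of which costs $O(\log^3 n)$. The total running time is therefore $O(n\log n) + O(n)\cdot O(\log^3 n) = O(n\log^3 n)$.

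The only subtlety I anticipate is preserving the ``in this order'' clause in the definition of following a path. It reduces to the single structural fact that $\Relaxop(p)$ issues the outgoing relax on the short edge of $p$ strictly \emph{after} all incoming long-edge relaxes to $p^-$, which is exactly the alternation pattern a non-chromatic path uses at a black pair point; once the topological order fixes all relative orderings across distinct pair points, this internal ordering inside each $\Relaxop(p)$ forces the correct interleaving of long and short relaxes along $Q$, so the path is followed.
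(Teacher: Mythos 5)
Your proposal is correct and follows essentially the same argument as the paper: observe that a non-chromatic path only traverses black long edges between $\prec$-ordered pair points and black short edges inside black pair points, note that $\mathcal{A}_1$ processes pair points in topological order with $\Relaxop$ grouping the incoming long-edge relaxes before the short-edge relax, and chain these facts inductively along the path; the $O(n\log n)$ topological sort plus $O(\log^3 n)$ per $\Relaxop$ gives the stated bound. The only presentational difference is that the paper explicitly treats the boundary case where the first or last pair point of $Q$ is red (so the path starts at $p^+$ or ends at $p^-$ and the short edge there is not residual), which your inductive phrasing handles implicitly but does not call out.
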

\begin{proof}
Let $Q$ be a path in the sub-network $\mathcal{N}[\beta_1,\beta_2]$ such that $Q$ traverses only black edges.  Consider the ordering of the edges $(y_1,y_2)(y_2,y_3),...,(y_{m-1},y_m)$ in $Q$.
Recall that every point $v$ in sub-network  $\mathcal{N}[\beta_1,\beta_2]$ is a pair of points $(v^-,v^+)$ which are connected with a short edge of capacity $1$.  We  show that the computation of algorithm $\mathcal{A}_1$ includes a sequence of $\relaxop$ operations on edges $(y_1^-,y_1^+),(y_{1}^+,y^-_2),\ldots,(y_{m-1}^+,y^-_m),(y_1^-,y_1^+)$ in this relative order.

For a black edge $(x,y)$ points $x$ and $y$ can be either black or red. Therefore, if $Q$ includes a red point $v$ (\textit{i.e.} a point on path $\mathcal{Y}_1$) then $v$ must be either the starting or ending point of $Q$. In detail, if the starting point $y_1$  of $Q$ is red, then the first  edge of $Q$ is edge $(y_1^+,y_2^-)$ and if the ending point $y_m$ of $Q$ is red, then the last edge of $Q$ is edge $(y_{m-1}^+,y^-_m)$. This is because the short red edges $(y^-_1,y^+_1)$ and  $(y^-_m,y^+_m)$ are not residual edges. 

Because $Q$ traverses only black edges, for edge $(y_{i},y_{i+1}) \in Q$ where $i=1,2,\ldots,m-1$ it holds that $y_{i} \prec y_{i+1}$. Algorithm $\mathcal{A}_1$ considers all points in the sub-network in topological order and when a point $x$ is considered it performs operation $\Relaxop(x)$. Therefore, for any $i \in [1,m-1]$ and an edge $(y_i,y_{i+1}) \in Q$ operation $\Relaxop(y_i)$ precedes operation $\Relaxop(y_{i+1})$.

 For a point $x$ operation  $\Relaxop(x)\equiv 
\{ \relaxop(y^+,x^-): y \in D_x[\beta_1,\beta_2]\}$ is equivalent to sequence of operations $\relaxop(y^+,x^-)$ for every point $y$ in the sub-network such that $y \prec x$ (as defined in sub-section \ref{sub:Relax}). Thus, operation $\relaxop(y_{i-1}^+,y_{i}^-)$ is implicitly included in operation $\Relaxop(y_i)$  for $i=2,3,\ldots,m$. Further, for $i=1,2,\ldots,m$ if point $y_i$ is black then operation $\Relaxop(y_i)$ also includes operation $\relaxop(y_i^-,y^+_i)$.
For the special case $i=1$(resp. $i=m$), if point $y_i$ is red then the first (resp. last) $\relaxop$ operation in the sequence is on edge  $(y_{1}^+,y^-_2)$ (resp. $(y_{m-1}^+,y^-_m)$). 

We conclude that the computation of algorithm $\mathcal{A}_1$ includes all operations $\relaxop(y_{i-1},y_i)$ for $i=2,3,\ldots,m$ and all operations $\relaxop(y_i^-,y_i^+)$ for $i=1,2,\ldots,m$, in this relative order.
One operation $\Relaxop$ requires $O(\log^3n)$  where $n$ is the size of the sub-network $\mathcal{N}[\beta_1,\beta_2]$ and therefore the total running time of algorithm $\mathcal{A}_1$ is $O(n\log^3 n)$.
\end{proof}

\begin{lemma}
\label{lemma:onlyred}
When algorithm $\Delta(\mathcal{Y}_1)$ is applied on a sub-network $\mathcal{N}[\beta_1,\beta_2]$ the computation follows every red-chromatic path in the sub-network and the running time is $O(n)$ where $n$ is the size of the sub-network. 
\end{lemma}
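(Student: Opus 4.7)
The plan is to exploit the extremely rigid structure of the red subgraph. Let $y_1 \prec y_2 \prec \cdots \prec y_m$ be the red points of $\mathcal{Y}_1$ lying in $\mathcal{N}[\beta_1,\beta_2]$. The only red edges available inside the sub-network are the reversed short edges $(y_j^+,y_j^-)$ for $1 \le j \le m$ and the reversed long edges $(y_j^-,y_{j-1}^+)$ for $2 \le j \le m$. These edges form a single directed chain
\[
C \;=\; y_m^+ \to y_m^- \to y_{m-1}^+ \to y_{m-1}^- \to \cdots \to y_1^+ \to y_1^-,
\]
so any red-chromatic path $Q$ in the sub-network must be a contiguous sub-walk of $C$. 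I would justify this by the simple observation that red short edges enter only into $y_j^-$ nodes and red long edges leave only from $y_j^-$ nodes, which forces the alternation and the ordering.

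Next I would observe that $\Delta(\mathcal{Y}_1)$ issues $\relaxop$ operations precisely along the edges of $C$ in order from $y_m^+$ down to $y_1^-$: for $j = m, m-1, \ldots, 2$ it performs $\relaxop(y_j^+,y_j^-)$ followed by $\relaxop(y_j^-,y_{j-1}^+)$, and finally it performs $\relaxop(y_1^+,y_1^-)$. Concatenating these calls yields the complete edge-by-edge traversal of $C$. Since $Q$'s consecutive edges are a contiguous sub-sequence of this edge list, the $\relaxop$ calls that correspond to the edges of $Q$ occur in the required relative order inside the execution of $\Delta(\mathcal{Y}_1)$, which is exactly what "the computation follows $Q$" means. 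A minor bookkeeping point is that $Q$ may start or end at either a $y_j^+$ or a $y_j^-$ node; this is handled uniformly because $Q$ is still a contiguous sub-walk of $C$.

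For the running time, the total number of $\relaxop$ operations performed by $\Delta(\mathcal{Y}_1)$ is $2m - 1 \le 2n - 1$. Each of these is a direct relax on a specific named edge of $C$, so it requires no rectangle query and takes $O(1)$ time (the edges are prescribed by the known path $\mathcal{Y}_1$, not searched for). Hence the total running time is $O(n)$. I do not expect any real obstacle in the proof; the work is essentially a one-line structural observation about the red subgraph plus a direct comparison of orders, and the most delicate point is simply confirming that the endpoint cases of $Q$ are covered by the same contiguous-sub-walk argument.
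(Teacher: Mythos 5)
Your proof is correct and follows essentially the same approach as the paper's: both arguments identify the red edges inside the sub-network as a single directed chain traversed from $y_m^+$ down to $y_1^-$, observe that $\Delta(\mathcal{Y}_1)$ relaxes these edges in exactly that order, and conclude that any red-chromatic path, being a contiguous sub-walk of the chain, is followed. Your write-up is somewhat more explicit than the paper's (which compresses the argument into ``follows by induction''), particularly in justifying why the red subgraph is a chain and in noting that each $\relaxop$ is a direct $O(1)$ relax rather than a rectangle query, but the underlying argument is identical.
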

\begin{proof}
 Algorithm let $m'$ be the number of red points  on path $\mathcal{Y}_1$ and denote by $y_i$ for $i=1,2,\ldots,m$ the $i^{th}$ red point such that $y_1 \prec y_{2}\prec \ldots \prec y_m$. Let $Q$ be a path in the sub-network such that $Q$ traverses only red edges. Denote by $(y_i,y_{i+1}),\ldots,(y_{j-1},y_j)$ the ordering of the red edges in $Q$ where $i,j \in [1,m]$ and $j \leq i$. Algorithm $\Delta(\mathcal{Y}_1)$ performs all relax operations $(y^+_m,y^-_m)(y^-_m,y^+_{m-1}),\ldots,(y^-_2,y^+_1),(y^+_1,y^-_1)$.

The proof simply follows by induction for $k=i,i-1,\ldots,j$. Path $\mathcal{Y}_1$ can have at most $n$ points and therefore it can have at most $(n-1)$ edges. Operation $\relaxop$ takes constant time and therefore the total time needed is $O(n)$.
\end{proof}

Recall that for sub-network $\mathcal{N}[\beta_1,\beta_2]$ and  a given path $Q = (x_0,x_1,\ldots,x_m)$, in this sub-network  (not necessarily a shortest
path or an $s$-$t$ path) we say that the computation follows path $Q$, if the computation includes all relax operations $\relaxop(x_i,x_{i+1})$, $i = 1,2,\ldots,m-1$ in this order.
Each operation $\relaxop(x_i,x_{i+1})$ may be implicitly included within operation $\Relaxop(x_{i+1})$. The following theorem describes the specification of algorithm $\mathcal{A}_2$.

\begin{theorem}
\label{Theorem:specialcasek=2}
When algorithm $\mathcal{A}_2$ is applied to a  sub-network $\mathcal{N}[\beta_1,\beta_2]$ the computation follows every path in the sub-network and the running time is $O(n \log^4 n)$ where $n$ is the size of the sub-network.
\end{theorem}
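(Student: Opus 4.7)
The plan is to prove the theorem by strong induction on $n$, the size of the sub-network, mirroring the three-phase recursive structure of $\mathcal{A}_2$ given in Algorithm~\ref{PseudoA2main}. For the base case I would fix a constant threshold $n_0$ and verify the claim directly for sub-networks with at most $n_0$ points: the set of paths is finite, and one can arrange the computation to perform $\Relaxop$ at every point so that each relevant $\relaxop$ is executed.

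For the inductive step, let $Q$ be a (non-self-crossing) path in $\mathcal{N}[\beta_1,\beta_2]$ and apply the decomposition $Q = Q_x \cdot q_{xx'} \cdot Q_{x'}$ from Definitions~\ref{Def:x} and~\ref{Def:x'}. The key structural observation is that $Q_x$ lies entirely in $\mathcal{N}_1$, $Q_{x'}$ lies entirely in $\mathcal{N}_2$, and each inherits the non-self-crossing property from $Q$, while $q_{xx'}$ is the piece that bridges the halves. I would then handle the three phases in sequence: the first recursive call $\mathcal{A}_2(\mathcal{N}_1)$ follows $Q_x$ by the inductive hypothesis applied to a sub-network of size $\lceil n/2 \rceil < n$; the coordination call $\mathcal{C}_2(\mathcal{N}[\beta_1,\beta_2])$ follows $q_{xx'}$ by Theorem~\ref{Thm:coordinationnoPi} at $k=2$; and the third call $\mathcal{A}_2(\mathcal{N}_2)$ follows $Q_{x'}$ by the inductive hypothesis. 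Because the three phases run sequentially, the three sub-sequences of $\relaxop$ operations they contribute concatenate in the correct relative order, so the overall computation follows $Q$.

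For the running time, I would set up the recurrence
\[
T(n) \;=\; T(\lceil n/2 \rceil) + T(\lfloor n/2 \rfloor) + T_{\mathcal{C}_2}(n).
\]
Inside $\mathcal{C}_2$, Lemma~\ref{lemma:onlyblack} bounds each of the two $\mathcal{A}_1$ invocations by $O(n \log^3 n)$, and Lemma~\ref{lemma:onlyred} bounds $\Delta(\mathcal{Y}_1)$ by $O(n)$, so $T_{\mathcal{C}_2}(n) = O(n \log^3 n)$. The master theorem (case 2) then solves the recurrence to $T(n) = O(n \log^4 n)$.

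The main obstacle sits entirely inside Theorem~\ref{Thm:coordinationnoPi}, which I would take as previously established rather than re-prove: the crux of that statement for $k=2$ is that the single red edge of $\mathcal{Y}_1$ straddling the $\beta$-midpoint forces $q_{xx'}$ into a \emph{black -- red -- black} pattern, which is exactly what the three sub-steps of $\mathcal{C}_2$ (namely $\mathcal{A}_1$, $\Delta(\mathcal{Y}_1)$, $\mathcal{A}_1$) are designed to follow, via Lemmas~\ref{lemma:onlyblack} and~\ref{lemma:onlyred}. Once that is granted, the remaining arguments are routine: the structural bookkeeping that the decomposition $Q_x \cdot q_{xx'} \cdot Q_{x'}$ respects the non-self-crossing property and that the relative order of the three sub-sequences of $\relaxop$ operations is preserved in the global trace, and the solution of a standard divide-and-conquer recurrence.
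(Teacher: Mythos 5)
Your proof follows essentially the same approach as the paper: induct on the size $n$, decompose $Q$ into $Q_x \cdot q_{xx'} \cdot Q_{x'}$ via Definitions~\ref{Def:x} and~\ref{Def:x'}, handle $Q_x$ and $Q_{x'}$ by the inductive hypothesis applied to the two recursive calls, handle $q_{xx'}$ by the coordination theorem, and solve the divide-and-conquer recurrence using Theorem~\ref{lemma:coordinationfork=2}'s $O(n\log^3 n)$ bound for $\mathcal{C}_2$. The only nominal difference is that you invoke the general Theorem~\ref{Thm:coordinationnoPi} at $k=2$ where the paper instead invokes the separately proved $k=2$-specific Theorem~\ref{lemma:coordinationfork=2} (and your parenthetical restriction to non-self-crossing $Q$ is unnecessary here, since for $k=2$ the paper establishes the claim for every path); neither affects correctness.
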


Recall that if the computation follows path $Q=(x_0,x_1,\ldots,x_m)$, then 
at the end of this computation, the computed shortest path weight $h(x_m)$  is at most the weight of $Q$.  Therefore, Theorem \ref{Theorem:specialcasek=2} implies that at the termination of  the computation of algorithm $\mathcal{A}_2$ on a  sub-network $\mathcal{N}[\beta_1,\beta_2]$, for every point $v$ in the sub-network we have $h(v)=h^*(v)$ where $h^*(v)$ is the  weight of a shortest path to $v$.

 \subsection{Proof of Theorem \ref{Theorem:specialcasek=2}}

For a sub-network $\mathcal{N}[\beta_1,\beta_2]$ let $Q$ be a path in the sub-network. Denote by $\mathcal{N}_1$ and $\mathcal{N}_2$ the sub-networks $\mathcal{N}[\beta_1,(\beta_2+\beta_1)/2]$ and $\mathcal{N}[(\beta_2+\beta_1)/2,\beta_2]$, respectively. Without loss of generality, we assume that $Q$ has points both in $\mathcal{N}_1$ and $\mathcal{N}_2$. 

Recall that according to Definition \ref{Def:x}, if path $Q$ starts in $\mathcal{N}_1$ then we define  $x \in \mathcal{N}_1$ to be the last point in $Q$ such that all points before $x$ are in $\mathcal{N}_1$. If path $Q$ starts in $\mathcal{N}_2$ we define $x$ to be the starting point of $ Q$. 
Similarly, according to Definition \ref{Def:x'}, if path $Q$ ends in $\mathcal{N}_1$ then we define  $x' \in \mathcal{N}_1$ to be the ending point of $ Q$. If path $Q$ ends in $\mathcal{N}_2$ we define $x'$ to be the first point of $Q$  in $\mathcal{N}_2$ such that all points after $x'$ are in $\mathcal{N}_2$.
 
 We can decompose  $Q$ into the following parts $Q_x,q_{xx'},Q_{x'}$ where $Q_x$ is the sub-path of $Q$ from its starting point to point $x$, $q_{xx'}$ is the sub-path of $Q$ from $x$ to $x'$ and $Q_{x'}$ is the sub-path of $Q$ from point $x'$ to the ending point of $Q$.  Definition \ref{Def:x} implies that if sub-path $Q_x$ is not empty  then all points in $Q_x$ are in $\mathcal{N}_1$. Similarly, Definition \ref{Def:x'} implies that if $Q_{x'}$ is not empty then all points in $Q_{x'}$ are in $\mathcal{N}_2$. The proof of Theorem \ref{Theorem:specialcasek=2} is outlined below.
 
When algorithm $\mathcal{A}_2$ is applied to a sub-network $\mathcal{N}[\beta_1,\beta_2]$ the first phase of the computation is the recursive call of algorithm $\mathcal{A}_2$ on $\mathcal{N}_1$. 
The second phase of the computation calls algorithm $\mathcal{C}_2$ on sub-network $\mathcal{N}[\beta_1,\beta_2]$. Finally, the third phase of the computation is the recursive call of algorithm $\mathcal{A}_2$ on $\mathcal{N}_2$.
The following theorem describes the specification of algorithm $\mathcal{C}_2$.

 \begin{theorem}
 \label{lemma:coordinationfork=2}
 When algorithm $\mathcal{C}_2$ is applied to a sub-network $\mathcal{N}[\beta_1,\beta_2]$  the computation follows the sub-path $q_{xx'}$ of every path $Q$ in the sub-network and the running time is $O(n \log^3 n)$ where $n$ is the size of the sub-network. 
 \end{theorem}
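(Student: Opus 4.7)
The plan is to show that the sub-path $q_{xx'}$ of any path $Q$ in $\mathcal{N}[\beta_1,\beta_2]$ admits a canonical decomposition $q_{xx'}=B_1\cdot R\cdot B_2$, where $B_1,B_2$ are (possibly empty) black-only sub-paths and $R$ is a (possibly empty) contiguous sub-chain of the reversed $\mathcal{Y}_1$ edges, and then to match these three pieces to the three consecutive phases of $\mathcal{C}_2$.

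The structural claim is the crux. For $k=2$, the residual network contains a single reversed red chain $y_m\to y_{m-1}\to\cdots\to y_1$, of which exactly one edge, the back-crossing edge $(y_c,y_{c-1})$ with $y_c\in\mathcal{N}_2$ and $y_{c-1}\in\mathcal{N}_1$, spans the boundary $\beta=(\beta_1+\beta_2)/2$. Because every red edge strictly decreases $\beta$ while every black edge strictly increases it, every forward boundary crossing along $Q$ is a black edge and every backward boundary crossing must be the unique back-crossing. I would argue by contradiction that $q_{xx'}$ cannot contain two disjoint red segments $R_a$ (earlier in $q_{xx'}$) and $R_b$ (later in $q_{xx'}$) separated by a black bridge. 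The $\prec$-monotonicity of the bridge forces the $\mathcal{Y}_1$-indices of $R_a$ to lie strictly below those of $R_b$. A case analysis on which segment contains the back-crossing, together with the definitions of $x$ and $x'$, then shows the non-back-crossing segment must sit either entirely inside $Q_x$ (if it lies within $\mathcal{N}_1$, where it would precede the first forward crossing) or entirely inside $Q_{x'}$ (if it lies within $\mathcal{N}_2$, where it would follow the last forward crossing), and hence is not part of $q_{xx'}$. The remaining sub-case, where neither $R_a$ nor $R_b$ contains the back-crossing, reduces $q_{xx'}$ to a single black forward-crossing edge with no red segments, contradicting the assumption. Hence $q_{xx'}$ contains at most one contiguous red segment, yielding the decomposition $q_{xx'}=B_1 R B_2$.

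Given this decomposition, the correctness of $\mathcal{C}_2$ follows directly from Lemmas~\ref{lemma:onlyblack} and~\ref{lemma:onlyred}: the first $\mathcal{A}_1$ call on $\mathcal{N}[\beta_1,\beta_2]\setminus\mathcal{Y}_1$ follows the non-chromatic sub-path $B_1$, the $\Delta(\mathcal{Y}_1)$ call follows the red-chromatic sub-path $R$, and the second $\mathcal{A}_1$ call follows $B_2$. Because the three phases execute in this order, the sequence of relaxation operations along the edges of $q_{xx'}$ is produced in the correct relative order, with empty pieces contributing nothing. The running-time bound is then immediate: $2\cdot O(n\log^3 n)+O(n)=O(n\log^3 n)$. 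The main obstacle is the structural claim itself; the intuition is clean---a unique back-crossing edge, red edges pointing downward in $\beta$, and $\prec$-monotone black bridges---but making the claim fully precise requires carefully tracing how the positions of $x$ and $x'$ along $Q$ interact with the back-crossing edge on $\mathcal{Y}_1$, after which the rest of the proof is routine bookkeeping against the existing lemmas.
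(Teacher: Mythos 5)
Your proof takes the same route as the paper's: decompose $q_{xx'}$ into (black run)$\cdot$(red run)$\cdot$(black run) and match the three pieces to the three steps of $\mathcal{C}_2$ via Lemmas~\ref{lemma:onlyblack} and~\ref{lemma:onlyred}, then add the running times. The one place where you go beyond the paper is in actually \emph{arguing} the decomposition. The paper asserts, essentially in one sentence, that when $q_{xx'}$ crosses back from $\mathcal{N}_2$ to $\mathcal{N}_1$ it ``must consist of the ordering of runs $(r_b,r,r'_b)$,'' appealing only to the definitions of $x$ and $x'$. You instead supply the reason: two disjoint red segments $R_a, R_b$ joined by a $\prec$-monotone black bridge would have nested $\mathcal{Y}_1$-indices with $R_a$ strictly lower, and a case split on which contains the back-crossing edge shows the other segment is forced outside $q_{xx'}$ (after $x'$ if it is a $\mathcal{N}_2$-segment above the crossing, or impossible before the back-cross if it is an $\mathcal{N}_1$-segment below it). This is a sound and genuinely clarifying addition. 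One small imprecision: when you say the $\mathcal{N}_1$-segment ``would sit entirely inside $Q_x$,'' the cleaner statement is that an $\mathcal{N}_1$-only red run cannot occur anywhere between $x$ and the back-cross inside $q_{xx'}$ at all, since that portion of $q_{xx'}$ has at most the single $\mathcal{N}_1$ point $x$; but the conclusion (contradiction) is the same. Your disjointness step also implicitly uses that $Q$ is a simple path so that the two red runs occupy disjoint index intervals of $\mathcal{Y}_1$ --- worth stating, but the paper has the same unspoken assumption.
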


\begin{figure}
    \centering
    \includegraphics[scale=0.32]{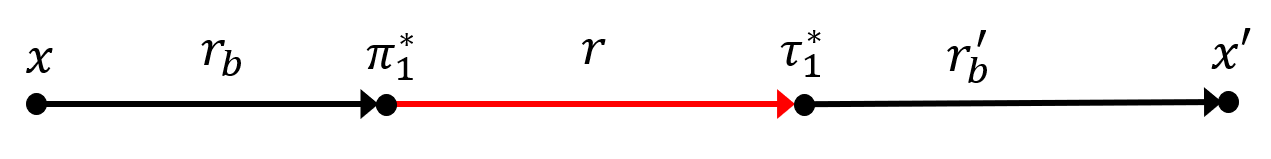}
    \caption{The structure of the sub-path $q_{xx'}$ of $Q$ from $x$ to $x'$ if $q_{xx'}$ crosses from $\mathcal{N}_2$ to $\mathcal{N}_1$.}
    \label{fig:combk1}
\end{figure}

It remains to show the proof of Theorem \ref{lemma:coordinationfork=2} and to conclude the proof of Theorem \ref{Theorem:specialcasek=2}. The following definitions  facilitate the analysis of the combinatorial structure of a path $Q$ in a sub-network $\mathcal{N}[\beta_1,\beta_2]$, followed by algorithms $\mathcal{C}_2$ and $\mathcal{A}_2$.
 \begin{definition}
\label{def:run}
For a path $Q$,  
a \textit{run} $r$ is a maximal sub-path of $Q$ such that all edges in $r$ are of the same colour. 
\end{definition}

 A run is \textit{non-chromatic} if it consists of black edges.  A run is \textit{chromatic} if all of its edges are of red colour. Notice that an edge $e(u,u^{\prime})$ can be traversed at most once by a (simple) path $Q$ and therefore we have the following corollary.
 
 \begin{corollary}
\label{cor:runsdisjoint}
Two chromatic runs $r$ and $r^{\prime}$ in $Q$ 
of the same colour $c_i$ are edge-disjoint.
\end{corollary}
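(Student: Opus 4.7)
The plan is to derive the claim as an essentially immediate consequence of the observation made immediately before the statement (that $Q$ is simple, so no edge is traversed twice) together with the maximality built into Definition \ref{def:run}. First I would make explicit the following consequence of simplicity: every edge appears at most once along $Q$, so every sub-path of $Q$ can be identified with a well-defined contiguous block of edge-positions along $Q$, and the union of two overlapping contiguous blocks is again a contiguous block.

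Then I would argue by contradiction. Suppose two distinct chromatic runs $r$ and $r'$ of colour $c_i$ share an edge $e$. Viewed as contiguous blocks of edge-positions along $Q$, their index ranges must overlap, because the shared edge $e$ occurs at a unique position along $Q$ (this is where simplicity enters). The union of these two index ranges is therefore itself a contiguous block, and every edge in it belongs either to $r$ or to $r'$, so every such edge carries colour $c_i$. But this union strictly contains at least one of $r$ or $r'$, since $r \neq r'$, which contradicts the maximality requirement in Definition \ref{def:run} for the run that is strictly contained. Hence no such shared edge can exist, and $r$ and $r'$ are edge-disjoint.

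The only potentially subtle step is the identification of sub-paths of $Q$ with contiguous index-ranges along $Q$; without the simplicity of $Q$, a repeated edge could a priori appear in two disjoint positions along $Q$ and the two maximal runs need not be forced to merge. Once that identification is in place the contradiction is immediate, so I do not anticipate any real obstacle — the corollary is essentially a restatement of the author's parenthetical remark preceding it.
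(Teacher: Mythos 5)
Your proof is correct and takes essentially the same approach as the paper, which simply observes (in the sentence immediately before the corollary) that a simple path traverses each edge at most once and leaves the rest implicit. You merely make the routine consequence explicit — each edge has a unique position along $Q$, so two runs sharing an edge would have overlapping contiguous position ranges whose monochromatic union violates maximality — which is exactly the unstated reasoning behind the paper's remark.
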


\begin{proof}[Proof of Theorem \ref{lemma:coordinationfork=2}]
  When algorithm $\mathcal{C}_2$ is applied to sub-network $\mathcal{N}[\beta_1,\beta_2]$ the computation consists of the following three steps $\mathcal{A}_1(\mathcal{N}[\beta_1,\beta_2]),\Delta(\mathcal{Y}_1),\mathcal{A}_1(\mathcal{N}[\beta_1,\beta_2])$.
Let $Q$ be a path in the sub-network $\mathcal{N}[\beta_1,\beta_2]$ and let $q_{xx'}$ be the sub-path of $Q$ from $x$ to $x'$.  
We show that algorithm $\mathcal{C}_2$ follows path $q_{xx'}$.

 For $k=2$ there is a unique red edge $(u,u') \in \mathcal{Y}_1$ such that  $u \in \mathcal{N}_2$ and $u' \in \mathcal{N}_1$ crossing from $\mathcal{N}_2$ to $\mathcal{N}_1$. We say that the sub-path $q_{xx'}$ of $Q$ \textit{crosses} from $\mathcal{N}_2$ to $\mathcal{N}_1$ if it has a chromatic run that includes the red edge $(u,u')$. We consider two cases about the sub-path $q_{xx'}$.
 
 The first case is that $q_{xx'}$ does not cross from $\mathcal{N}_2$ to $\mathcal{N}_1$ and the second case is that $q_{xx'}$  crosses from $\mathcal{N}_2$ to $\mathcal{N}_1$ exactly once. In the former case, following the definition of points $x$ and $x'$ the sub-path $q_{xx'}$ must consist of the single black edge $(x,x')$. Thus, according to Lemma \ref{lemma:onlyblack} the computation of the first step of $\mathcal{C}_2$, that is, the computation of algorithm $\mathcal{A}_1$, follows the sub-path $q_{xx'}$ since it traverses only one black edge.
 
 In the latter case, the sub-path $q_{xx'}$ must  consist of the following ordering of runs $(r_b,r,r^{\prime}_b)$ where $r_b$ and $r^{\prime}_b$ are non-chromatic runs and $r$ is a red chromatic run which includes the red edge $(u,u')$. Thus, we can decompose $q_{xx'}$ into three parts $(x,\pi^*_1), (\pi^*_1,\tau^*_1)$ and $(\tau^*_1,x')$ where $\pi^*_1$ and $\tau^*_1$ is the first and last point of the red chromatic run $r$. An example of this decomposition is shown in Figure \ref{fig:combk1}. 
 
 According to Lemma \ref{lemma:onlyblack},  the computation of the first step of $\mathcal{C}_2$, that is, the computation of algorithm $\mathcal{A}_1$, follows the path from $x$ to $\pi^*_1$  since it traverses only black edges. According to Lemma \ref{lemma:onlyred}, the computation of the second step of $\mathcal{C}_2$, that is, algorithm $\Delta(\mathcal{Y}_1)$ follows the path from $\pi^*_1$ to $\tau^*_1$ since it traverses only red edges. Finally, according to Lemma \ref{lemma:onlyblack} the computation of the third step of $\mathcal{C}_2$, that is, algorithm $\mathcal{A}_1$ follows the path from $\tau^*_1$ to $x'$ since it traverses only black edges.
 
 The running time of algorithm $\mathcal{C}_2$ when applied to a sub-network $\mathcal{N}[\beta_1,\beta_2]$ of size $n$ is given by the following relationship $T_{\mathcal{C}_2}(n)=2T_{\mathcal{A}_1}(n)+ T_{\Delta}(n)$. According to Lemma \ref{lemma:onlyblack} and Lemma \ref{lemma:onlyred} we have that $T_{\mathcal{A}_1}(n)=O(n\log^3 n)$ and $T_{\Delta}(n)=O(n)$ and therefore $T_{\mathcal{C}_2}(n)=O(n \log^3 n)$.
\end{proof}

\begin{proof}[Proof of Theorem \ref{Theorem:specialcasek=2}]
 When algorithm $\mathcal{A}_2$ is applied to sub-network  $\mathcal{N}[\beta_1,\beta_2]$ 
the computation consists of the following three phases: $\mathcal{A}_2(\mathcal{N}_1),\mathcal{C}_2(\mathcal{N}[\beta_1,\beta_2]),\mathcal{A}_2(\mathcal{N}_2)$ which  denote the recursive call of algorithm $\mathcal{A}_2$ on $\mathcal{N}_1$, the call of algorithm $\mathcal{C}_2$ on $\mathcal{N}[\beta_1,\beta_2]$ and the recursive call of algorithm $\mathcal{A}_2$ on $\mathcal{N}_2$, respectively.

Let $Q$ be a path in this sub-network and consider the decomposition of $Q$ into $Q_x,q_{xx'},Q_{x'}$. Recall that if sub-path $Q_x$ (resp. $Q_{x'}$) is not empty then it must include points only in $\mathcal{N}_1$ (resp. $\mathcal{N}_2$). By induction, Theorem \ref{Theorem:specialcasek=2} implies that when algorithm $\mathcal{A}_2$ is applied to $\mathcal{N}_1$ then the computation follows the sub-path $Q_x$ of $Q$ since it has points only in $\mathcal{N}_1$. According to Theorem \ref{lemma:coordinationfork=2} when algorithm $\mathcal{C}_2$ is applied to sub-network $\mathcal{N}[\beta_1,\beta_2]$ the computation follows the sub-path $q_{xx'}$ of $Q$. Finally, by induction Theorem \ref{lemma:coordinationfork=2} implies that when algorithm $\mathcal{A}_2$ is applied to sub-network $\mathcal{N}_2$ the computation follows the sub-path $Q_{x'}$ of $Q$ since it has points only in $\mathcal{N}_2$.

The running time of algorithm $\mathcal{A}_2$ when applied to a sub-network of size $n$ is given by the following recurrence relationship: $T_{\mathcal{A}_2}(n)=T_{\mathcal{A}_2}(\frac{n}{2})+T_{\mathcal{C}_2}(n)+T_{\mathcal{A}_2}(\frac{n}{2})$ where $T_{\mathcal{C}_2}(n)$ is the running time of coordination algorithm $\mathcal{C}_2$. According to Theorem \ref{lemma:coordinationfork=2}, we have that $T_{\mathcal{C}_2}(n)=O(n \log^3 n)$ and therefore by solving  the recurrence relationship we obtain that $T_{\mathcal{A}_2}(n)=O(n\log^4 n)$.
\end{proof}
\section{Shortest Path Algorithm $\mathcal{A}_k$ for $k \geq 3$}
\label{knopi}
For $k \geq 3$ when algorithm $\mathcal{A}_k$ is applied to a sub-network 
$(\mathcal{N}; \mathcal{Y}_1, \mathcal{Y}_2,
\ldots,\mathcal{Y}_{k-1})[\beta_1,\beta_2]$ or $\mathcal{N}[\beta_1,\beta_2]$ for short, the computation consists of three phases. The first phase and third phase are
the recursive calls of $\mathcal{A}_k$ on $\mathcal{N}_1$ and $\mathcal{N}_2$, respectively. The second, coordination
phase calls algorithm $\mathcal{C}_k$ to the sub-network $\mathcal{N}[\beta_1,\beta_2]$.

Algorithm 
$\mathcal{C}_k$ repeats for $(k-1)$ times the following three steps. The first and third step consist of the following $k-1$ calls of algorithm $\mathcal{A}_{k-1}$: $\mathcal{A}_{k-1}(\mathcal{N}[\beta_1,\beta_2] \setminus \mathcal{Y}_1),\ldots,$
$\mathcal{A}_{k-1}(\mathcal{N}[\beta_1,\beta_2] \setminus \mathcal{Y}_{k-1})$. Term $\mathcal{N}[\beta_1,\beta_2]\setminus 
\mathcal{Y}_i$ denotes the sub-network $\mathcal{N}[\beta_1,\beta_2]$ without the red edges of paths $\mathcal{Y}_{i}$. We group this sequence of calls to algorithm $\mathcal{A}_{k-1}$ in this order to facilitate analysis and for simplicity we denote this sequence by $\widehat{\mathcal{A}}_{k-1}$.
The second step of algorithm $\mathcal{C}_k$ calls algorithm $\mathcal{Z}_k$ which is applied to 
the sub-network $\mathcal{N}[\beta_1,\beta_2]$.   
  
Algorithm $\mathcal{Z}_k$ has a recursive structure 
similar to algorithm $\mathcal{A}_k$
except that it works in the opposite direction. 
In more detail, the computation of algorithm  $\mathcal{Z}_k$ 
consists of three phases. 
The first and third phase are recursive calls to
$\mathcal{Z}_k$ on $\mathcal{N}_2$ and $\mathcal{N}_1$, respectively
(so the first recursive call is to the top half of the 
sub-network).
The second, coordination phase consists of two steps, as explained below. 

The first step calls algorithm $\Delta(\mathcal{Y}_1,\mathcal{Y}_2,\ldots,\mathcal{Y}_{k-1})$ which is the natural generalisation of algorithm $\Delta(\mathcal{Y}_1)$. That is, algorithm $\Delta(\mathcal{Y}_1,\mathcal{Y}_2,\ldots,\mathcal{Y}_{k-1})$ traverses the red edges (short and long) of each path  $\mathcal{Y}_1,\mathcal{Y}_2,\ldots,\mathcal{Y}_{k-1}$ in the sub-network starting from the last edge and moving towards the first edge. When a red edge $(u,u')$ is considered it performs operation $\relaxop(u,u')$. The second step repeats for $(k-2)$ times 
two calls to the sequence  $\widehat{\mathcal{A}}_{k-1}$. That is, one iteration consists of $2(k-1)$ calls  to algorithm $\mathcal{A}_{k-1}$.
Algorithms $\mathcal{A}_k$, $\mathcal{C}_k$ and
$\mathcal{Z}_k$ are described in pseudocode as 
Algorithms \ref{PseudoAmainnoPi}, \ref{PseudoCmainnoPi} and 
\ref{PseudoZmainnoPi}, respectively.

\begin{algorithm}[H]
 \caption{Algorithm $\mathcal{A}_k$ for input $\mathcal{N}[\beta_1,\beta_2]$ }
\SetAlgoLined
 \begin{algorithmic}
 \State $\mathcal{N}_1 \leftarrow \mathcal{N}[\beta_1,(\beta_1+\beta_2)/2]$ $\mathcal{N}_2 \leftarrow \mathcal{N}[(\beta_1+\beta_2)/2,\beta_2]$
\State $\mathcal{A}_k (\mathcal{N}_1);$
  \State $\mathcal{C}_{k}(\mathcal{N}[\beta_1,\beta_2]);$
\State $\mathcal{A}_k (\mathcal{N}_2);$
 \end{algorithmic}
 \label{PseudoAmainnoPi}
\end{algorithm}

\begin{algorithm}[H]
 \caption{Algorithm $\mathcal{C}_k$ for input  $\mathcal{N}[\beta_1,\beta_2]$ }
\SetAlgoLined
 \begin{algorithmic}

 \State \textbf{Repeat} for $i=1,2,\ldots,k-1$
  \State $\mathcal{A}_{k-1}(\mathcal{N}[\beta_1,\beta_2] \setminus \mathcal{Y}_1);,\ldots, \mathcal{A}_{k-1}(\mathcal{N}[\beta_1,\beta_2] \setminus \mathcal{Y}_{k-1});$
 \State $\mathcal{Z}_k(\mathcal{N}[\beta_1,\beta_2]);$
  \State  $\mathcal{A}_{k-1}(\mathcal{N}[\beta_1,\beta_2]\setminus \mathcal{Y}_1);,\ldots, \mathcal{A}_{k-1}(\mathcal{N}[\beta_1,\beta_2] \setminus \mathcal{Y}_{k-1});$
 \end{algorithmic}
 \label{PseudoCmainnoPi}
\end{algorithm}

\begin{algorithm}[H]
 \caption{Algorithm $\mathcal{Z}_k$ for input  $\mathcal{N}[\beta_1,\beta_2]$}
\SetAlgoLined
 \begin{algorithmic}
 \State $\mathcal{N}_1 \leftarrow \mathcal{N}[\beta_1,(\beta_1+\beta_2)/2]$ $\mathcal{N}_2 \leftarrow \mathcal{N}[(\beta_1+\beta_2)/2,\beta_2]$
\State $\mathcal{Z}_k (\mathcal{N}_2);$
\State $\Delta(\mathcal{Y}_1,\mathcal{Y}_2,\ldots,\mathcal{Y}_{k-1});$
\State \For{$i=1,2,\ldots,(k-2)$}{
    \State  $\mathcal{A}_{k-1}(\mathcal{N}[\beta_1,\beta_2]\setminus \mathcal{Y}_1);,\ldots, \mathcal{A}_{k-1}(\mathcal{N}[\beta_1,\beta_2] \setminus \mathcal{Y}_{k-1});$
     \State  $\mathcal{A}_{k-1}(\mathcal{N}[\beta_1,\beta_2]\setminus \mathcal{Y}_1);,\ldots, \mathcal{A}_{k-1}(\mathcal{N}[\beta_1,\beta_2] \setminus \mathcal{Y}_{k-1});$}
 \State $\mathcal{Z}_k (\mathcal{N}_1);$
\end{algorithmic}
\label{PseudoZmainnoPi}
\end{algorithm}
For  a sub-network $\mathcal{N}[\beta_1,\beta_2]$ let $Q$ be a non-self-crossing path in this sub-network. Without loss of generality, we assume that $Q$ has points both in $\mathcal{N}_1$ and $\mathcal{N}_2$. Analogously as for $k=2$ and according to Definitions \ref{Def:x} and \ref{Def:x'}, path $Q$ can be decomposed into three parts $(Q_x,q_{xx'},Q_{x'})$ where $Q_x$ is the sub-path of $Q$ from its starting point to point $x$, $q_{xx'}$ is the sub-path of $Q$ from $x$ to $x'$ and $Q_{x'}$ is the sub-path of $Q$ from point $x'$ to its ending point.  Recall that if $Q_x$ (resp. $Q_{x'}$) is not empty then it must include points only in $\mathcal{N}_1$ (resp. $\mathcal{N}_2$).

For $k \geq 3$ the proof of Theorem \ref{Thm1noPi} is outlined below. We assume by induction that when algorithm $\mathcal{A}_k$ is applied to $\mathcal{N}_1$ the computation follows every non-self-crossing path that has only points in $\mathcal{N}_1$. Thus, the computation of the recursive call of algorithm $\mathcal{A}_k$ on $\mathcal{N}_1$ follows the sub-path $Q_x$ of $Q$. According to  Theorem \ref{Thm:coordinationnoPi} the computation of algorithm $\mathcal{C}_k$ follows the sub-path $q_{xx'}$ of $Q$. Finally, we assume by induction that  when algorithm $\mathcal{A}_k$  is applied to $\mathcal{N}_2$  the computation follows every non-self-crossing path that has only points in $\mathcal{N}_2$.  Thus, the computation of the second recursive call follows the sub-path $Q_{x'}$ of $Q$.

 The  remaining part of the section is organized in the following way:  In Subsection \ref{sec:algoCknoPi} we specify the structure of paths followed by algorithm $\mathcal{C}_k$. In Subsection \ref{sec:AlgorithmZnoPi} we specify the  structure of paths followed by algorithm $\mathcal{Z}_k$. Finally, based on the analysis of Subsection \ref{sec:AlgorithmZnoPi} in Subsection \ref{sec:knoPi} we show the proof of Theorems \ref{Thm1noPi} and \ref{Thm:coordinationnoPi} for $k \geq 3$.

\subsection{Algorithm $\mathcal{C}_k$}
\label{sec:algoCknoPi}
For a sub-network $\mathcal{N}[\beta_1,\beta_2]$ and a non-self-crossing path $Q$ in the sub-network, algorithm $\mathcal{C}_k$ is employed to follow the sub-path $q_{xx'}$ of $Q$ from $x$ to $x'$ . In this sub-section we outline the combinatorial structure of path $q_{xx'}$. 
To facilitate analysis we  first introduce "shades" of red colour 
to distinguish between red edges of different paths $\mathcal{Y}_1, \mathcal{Y}_2,..,\mathcal{Y}_{k-1}$. Specifically, the edges of path $\mathcal{Y}_i$ for $i=1,2,\ldots,k-1$ are coloured with red colour $c_i$. 

Recall that according to Definition \ref{def:run}, a run $r$ is a maximal sub-path of a path $Q$ such that all edges are of the same colour.  A run is \textit{non-chromatic} if it consists of black edges.  A run is \textit{chromatic} if all of its edges are of the same red colour $c_i$, for some  $1 \leq i \leq k-1$.

\begin{definition}
For $0\leq d\leq k-1$ we say that a non-self-crossing path $Q$ (or a sub-path of $Q$) is $d$-chromatic, if the number of red colours in all chromatic runs is equal to $d$.
A path $Q$ (or a sub-path of $Q$) that does not have a chromatic run, is 
$0$-chromatic, or 
non-chromatic, and traverses only black edges.
\end{definition} 

\begin{definition}
We say that a $(k-1)$-chromatic path $Q$ is short $(k-1)$-chromatic if all chromatic runs of colour $c_1$ appear before all chromatic runs of colour $c_{k-1}$ or vice versa.
\end{definition}

 Recall that for a sub-network $\mathcal{N}[\beta_1,\beta_2]$ we denote by $\widehat{\mathcal{A}}_{k-1}$ the algorithm which performs the following sequence  of $(k-1)$ calls $\mathcal{A}_{k-1}(\mathcal{N}[\beta_1,\beta_2]\setminus \mathcal{Y}_1),\ldots,\mathcal{A}_{k-1}(\mathcal{N}[\beta_1,\beta_2]\setminus \mathcal{Y}_{k-1})$ to algorithm $\mathcal{A}_{k-1}$.  The following describes the specification of $\widehat{\mathcal{A}}_{k-1}$ when applied to a sub-network.
\begin{lemma}
\label{followatmostnoPi}
For $k\geq 3$, assuming that Theorem \ref{Thm1noPi} holds for $k-1$, when algorithm $\widehat{\mathcal{A}}_{k-1}$ is applied to a sub-network  $\mathcal{N}[\beta_1,\beta_2]$  then the computation follows any non-self-crossing path $Q$ in the sub-network such that $Q$ is at most $(k-2)$-chromatic.
\end{lemma}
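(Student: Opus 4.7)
The plan is to reduce the statement to the inductive hypothesis (Theorem~\ref{Thm1noPi} for $k-1$) by a pigeonhole argument. Since the path $Q$ is at most $(k-2)$-chromatic, its chromatic runs use at most $k-2$ of the $k-1$ available red colours $c_1,\ldots,c_{k-1}$, so there must exist an index $i^*\in\{1,\ldots,k-1\}$ such that $Q$ traverses no red edge of $\mathcal{Y}_{i^*}$. I will argue that the single call $\mathcal{A}_{k-1}(\mathcal{N}[\beta_1,\beta_2]\setminus\mathcal{Y}_{i^*})$, which is one of the $k-1$ calls making up $\widehat{\mathcal{A}}_{k-1}$, already follows $Q$. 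Since the remaining $k-2$ calls only interleave additional $\relaxop$ operations around this one, $\widehat{\mathcal{A}}_{k-1}$ as a whole will follow $Q$ too.

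The main technical step is to check that $Q$ is a valid non-self-crossing path in the reduced sub-network $\mathcal{N}[\beta_1,\beta_2]\setminus\mathcal{Y}_{i^*}$, so that the inductive hypothesis applies. Passing from $\mathcal{N}[\beta_1,\beta_2]$ to $\mathcal{N}[\beta_1,\beta_2]\setminus\mathcal{Y}_{i^*}$ only affects the edges of $\mathcal{Y}_{i^*}$: the reversed red short and long edges of $\mathcal{Y}_{i^*}$ are replaced by their non-reversed counterparts (with short edges of pair nodes on $\mathcal{Y}_{i^*}$ regaining weight $-w_x$); all other edges are unchanged in both direction and colour. Because $Q$ uses no red edge of $\mathcal{Y}_{i^*}$, every edge it does use is present in the same direction in both sub-networks. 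The only remaining worry is that $Q$ might visit a pair node $x$ lying on $\mathcal{Y}_{i^*}$. Because the paths $\mathcal{Y}_1,\ldots,\mathcal{Y}_{k-1}$ are node-disjoint, the only non-black edges incident to such an $x$ belong to $\mathcal{Y}_{i^*}$, and a direct inspection of $\mathcal{N}[\beta_1,\beta_2]$ shows that in the residual network $x^-$ has no non-red outgoing edge while $x^+$ has no non-red incoming edge. Hence any path avoiding red edges of $\mathcal{Y}_{i^*}$ can touch such a pair node only as an endpoint ($x^+$ as the start of $Q$ or $x^-$ as its end), never as an internal node, and this causes no issue with validity in the reduced sub-network. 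Non-self-crossing is inherited automatically since $Q$ uses exactly the same line segments.

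With validity established, the sub-network $\mathcal{N}[\beta_1,\beta_2]\setminus\mathcal{Y}_{i^*}$ fits the input specification of $\mathcal{A}_{k-1}$: it carries the $k-2$ pairwise non-crossing paths $\{\mathcal{Y}_j:j\neq i^*\}$. By the inductive hypothesis, the call $\mathcal{A}_{k-1}(\mathcal{N}[\beta_1,\beta_2]\setminus\mathcal{Y}_{i^*})$ follows every non-self-crossing path in that sub-network, so it follows $Q$, and consequently so does $\widehat{\mathcal{A}}_{k-1}$. I expect the main obstacle to be the case analysis in the middle paragraph: proving carefully that no edge of $Q$ either disappears or flips direction when we remove $\mathcal{Y}_{i^*}$, and in particular handling the behaviour of $Q$ at pair nodes on $\mathcal{Y}_{i^*}$. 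The pigeonhole choice of $i^*$ and the interleaving argument are routine.
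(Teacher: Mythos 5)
Your proposal is correct and takes the same route as the paper: a pigeonhole argument to find a colour $c_{i^*}$ that $Q$ avoids, followed by an appeal to the inductive hypothesis on the sub-network $\mathcal{N}[\beta_1,\beta_2]\setminus\mathcal{Y}_{i^*}$, with the remaining $k-2$ calls contributing only interleaved $\relaxop$ operations. Your extra care in checking that $Q$ remains a valid path after removing $\mathcal{Y}_{i^*}$ (in particular that a pair node of $\mathcal{Y}_{i^*}$ can be visited only as an endpoint, since $x^-$ has no non-red outgoing edge and $x^+$ no non-red incoming edge) is a genuine technical detail the paper elides, but it does not change the structure of the argument.
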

\begin{proof}
Sub-networks $\mathcal{N}[\beta_1,\beta_2]\setminus \mathcal{Y}_1,\ldots,\mathcal{N}[\beta_1,\beta_2]\setminus \mathcal{Y}_{k-1}$  do not have a negative cycle  because there are sub-networks of the residual network $\mathcal{N}_{k-1}$ (\textit{i.e.} the sub-network for $\beta_1=0$ and $\beta_2=n+1$) which does not have a negative cycle. Consider a non-self-crossing path $Q$ such that $Q$ is at most $(k-2)$-chromatic. This means that $Q$ can traverse red edges of all paths except one path $\mathcal{Y}_i$ where $i \in [1,k-1]$.

Thus, path $Q$ must be a non-self-crossing path in one of the sub-networks $\mathcal{N}[\beta_1,\beta_2] \setminus \mathcal{Y}_1,\ldots,\mathcal{N}[\beta_1,\beta_2] \setminus \mathcal{Y}_{k-1}$. Without loss of generality, we assume that $Q$ is a path on sub-network $\mathcal{N}[\beta_1,\beta_2] \setminus \mathcal{Y}_i$ where $i \in [1,k-1]$. Algorithm $\widehat{\mathcal{A}}_{k-1}$ consists of applying algorithm $\mathcal{A}_{k-1}$ on sub-networks $\mathcal{N}[\beta_1,\beta_2] \setminus \mathcal{Y}_1,\ldots,\mathcal{N}[\beta_1,\beta_2] \setminus \mathcal{Y}_{k-1}$. Thus, assuming that Theorem \ref{Thm1noPi} holds for $k-1$, when algorithm $\mathcal{A}_{k-1}$ is applied to sub-network $\mathcal{N}[\beta_1,\beta_2] \setminus \mathcal{Y}_i$ the computation follows every non-self-crossing path in the sub-network. This completes the proof.
\end{proof}

When algorithm $\widehat{\mathcal{A}}_{k-1}$ is applied twice on a sub-network $\mathcal{N}[\beta_1,\beta_2]$ (denoted by  $\widehat{\mathcal{A}}_{k-1}$  $\widehat{\mathcal{A}}_{k-1}$) the computation consists of the following calls to algorithm $\mathcal{A}_{k-1}$: $\mathcal{A}_{k-1}(\mathcal{N}[\beta_1,\beta_2] \setminus \mathcal{Y}_1),\ldots,\mathcal{A}_{k-1}(\mathcal{N}[\beta_1,\beta_2] \setminus \mathcal{Y}_{k-1})$ and $\mathcal{A}_{k-1}(\mathcal{N}[\beta_1,\beta_2] \setminus \mathcal{Y}_1),\ldots,\mathcal{A}_{k-1}(\mathcal{N}[\beta_1,\beta_2] \setminus \mathcal{Y}_{k-1})$ in this order.

\begin{lemma}
\label{followatmostshortnoPi}
For $k\geq 3$, assuming that Theorem \ref{Thm1noPi} holds for $k-1$, when algorithm $\widehat{\mathcal{A}}_{k-1}$ is applied twice to a sub-network  $\mathcal{N}[\beta_1,\beta_2]$  then the computation follows any non-self-crossing path $Q$ in the sub-network such that $Q$ which is short $(k-1)$-chromatic.
\end{lemma}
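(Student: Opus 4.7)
The plan is to decompose any short $(k-1)$-chromatic non-self-crossing path $Q$ in $\mathcal{N}[\beta_1,\beta_2]$ into two contiguous sub-paths, each of which is at most $(k-2)$-chromatic, and then invoke Lemma \ref{followatmostnoPi} on each sub-path so that the first and the second application of $\widehat{\mathcal{A}}_{k-1}$ follow the respective halves in the correct relative order.

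More concretely, without loss of generality I would assume that all chromatic runs of colour $c_1$ in $Q$ precede all chromatic runs of colour $c_{k-1}$ (the opposite orientation is handled symmetrically by swapping the roles of the two consecutive $\widehat{\mathcal{A}}_{k-1}$ phases). I would then pick a point $v$ on $Q$ that lies after the last edge of every $c_1$-run and before the first edge of every $c_{k-1}$-run; such a point exists precisely because $Q$ is short $(k-1)$-chromatic. Splitting $Q$ at $v$ yields $Q_1$, the prefix of $Q$ ending at $v$, and $Q_2$, the suffix of $Q$ starting at $v$. Both sub-paths are non-self-crossing, being sub-paths of $Q$. By construction $Q_1$ uses no $c_{k-1}$-edge and therefore is a non-self-crossing path in $\mathcal{N}[\beta_1,\beta_2]\setminus \mathcal{Y}_{k-1}$ that is at most $(k-2)$-chromatic; symmetrically, $Q_2$ is a non-self-crossing path in $\mathcal{N}[\beta_1,\beta_2]\setminus \mathcal{Y}_1$ that is at most $(k-2)$-chromatic. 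Degenerate cases, in which $Q$ contains no $c_1$-run or no $c_{k-1}$-run, make $Q$ itself at most $(k-2)$-chromatic, and the conclusion follows already from a single application of Lemma \ref{followatmostnoPi}.

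Applying Lemma \ref{followatmostnoPi} (whose hypothesis, Theorem \ref{Thm1noPi} for $k-1$, is available by induction) to $Q_1$, the first application of $\widehat{\mathcal{A}}_{k-1}$ emits a sequence of $\relaxop$ operations that includes those of $Q_1$ in the order in which the edges of $Q_1$ appear; applying the same lemma to $Q_2$, the second application of $\widehat{\mathcal{A}}_{k-1}$ does likewise for $Q_2$. Concatenating the two phases, the full computation's $\relaxop$ trace contains the $\relaxop$ operations along the edges of $Q$ in their natural order, which is exactly the condition required for the computation to follow $Q$.

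The only subtlety I anticipate is justifying that the splitting point $v$ is genuinely valid, i.e.\ that the last $c_1$-run and the first $c_{k-1}$-run are edge-disjoint and one can insert a split point strictly between them while keeping $Q_1$ and $Q_2$ well-formed sub-paths whose concatenation is $Q$. This is where the asymmetry baked into the definition of \emph{short} $(k-1)$-chromatic is essential: it is precisely the ordering of the extreme colours that guarantees such a $v$ exists, so the lemma would fail without this condition (an arbitrary $(k-1)$-chromatic path that alternates $c_1$- and $c_{k-1}$-runs cannot be split into two at-most-$(k-2)$-chromatic sub-paths and will require the more elaborate machinery of algorithm $\mathcal{Z}_k$ in the general case).
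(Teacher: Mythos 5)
Your proposal is correct and follows essentially the same approach as the paper's own proof: assume WLOG that all $c_1$-runs precede all $c_{k-1}$-runs, split $Q$ at a point between the last $c_1$-run and the first $c_{k-1}$-run (the paper takes $u''$ to be the first point of the first $c_{k-1}$-run; your $v$ is a minor variant of the same choice), observe that each half is at most $(k-2)$-chromatic, and apply Lemma \ref{followatmostnoPi} once to each half via the two consecutive applications of $\widehat{\mathcal{A}}_{k-1}$.
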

\begin{proof}

Consider a non-self-crossing path $Q$ from a point $u$ to a point $u'$ such that $Q$ is short $(k-1)$-chromatic. Without loss of generality, we assume that all chromatic runs of colour $c_1$ appear before all chromatic runs of colour $c_{k-1}$ in $Q$. Let $u''$ be the first point of the first chromatic run of colour $c_{k-1}$ in $Q$. The sub-path of $Q$ from $u$ to $u''$ and the sub-path of $Q$ from $u''$ to $u'$ can be at most $(k-2)$-chromatic. 

That is, there is no chromatic run of colour $c_{k-1}$ (resp. $c_1$) between $u$ and $u''$ (resp. between $u''$ and $u'$). According to Lemma \ref{followatmostnoPi} when algorithm $\widehat{\mathcal{A}}_{k-1}$ is applied on sub-network $\mathcal{N}[\beta_1,\beta_2]$ the computation follows every non-self-crossing path which is at most $(k-2)$-chromatic. Thus, the first call to $\widehat{\mathcal{A}}_{k-1}$ follows the sub-path of $Q$ from $u$ to $u''$. Similarly, the second call to algorithm $\widehat{\mathcal{A}}_{k-1}$ follows the sub-path of $Q$ from $u''$ to $u'$.
\end{proof}

Note that algorithm $\mathcal{C}_k$ includes at least two calls to algorithm $\widehat{\mathcal{A}}_{k-1}$ (see steps 1 and 3 in Algorithm \ref{PseudoCmainnoPi}). This  means that if the sub-path $q_{xx'}$ is at most $(k-2)$-chromatic or short $(k-1)$-chromatic then according to Lemmas \ref{followatmostnoPi} and \ref{followatmostshortnoPi},  the computation of algorithm $\mathcal{C}_k$ follows the sub-path $q_{xx'}$.
Thus, for the remaining part of the analysis we consider the case where $q_{xx'}$ is $(k-1)$-chromatic.

We need the following definitions to outline the combinatorial structure of a $(k-1)$-chromatic non-self-crossing path in a sub-network $\mathcal{N}[\beta_1,\beta_2]$, with respect to the two sub-networks $\mathcal{N}_1$ and $\mathcal{N}_2$. 
\begin{definition}
 We say that a non-chromatic run crosses from $\mathcal{N}_1$ to $\mathcal{N}_2$ if it traverses a black edge $(u,u')$ such that $u \in \mathcal{N}_1$ and $u' \in \mathcal{N}_2$. We say that a chromatic run $r$ of colour $c_j$ where $j\in [1,k-1]$ crosses from $\mathcal{N}_2$ to $\mathcal{N}_1$ if it  traverses a red edge $(u,u') \in \mathcal{Y}_j$ such that $u \in \mathcal{N}_2$ and $u' \in \mathcal{N}_1$.
\end{definition}

If a run $r$ has only points  in $\mathcal{N}_1$ (resp. $\mathcal{N}_2$) we say that $r$ is placed in $\mathcal{N}_1$  (resp. $\mathcal{N}_2$).
\begin{definition}
We say that a path $Q$  crosses from $\mathcal{N}_2$ to $\mathcal{N}_1$ if $Q$ has a chromatic run $r$ which crosses from $\mathcal{N}_2$ to $\mathcal{N}_1$. We say that a path $Q$  crosses from $\mathcal{N}_1$ to $\mathcal{N}_2$ if $Q$ has a non-chromatic run that crosses from $\mathcal{N}_1$ to $\mathcal{N}_2$.
\end{definition}

 Similarly as for $k=2$, if the sub-path $q_{xx'}$ is not empty and does not cross from $\mathcal{N}_2$ to $\mathcal{N}_1$ it must hold that $x \in \mathcal{N}_1$, $x' \in \mathcal{N}_2$ and the sub-path $q_{xx'}$ simply consists of the black edge $(x,x')$. If the sub-path $q_{xx'}$ crosses at least once  from $\mathcal{N}_2$ to $\mathcal{N}_1$, observe that for $k \geq 3$ there are exactly $(k-1)$ red edges which cross from $\mathcal{N}_2$ to $\mathcal{N}_1$ (one for each path $\mathcal{Y}_1,\mathcal{Y}_2,...,\mathcal{Y}_{k-1}$). 
 
 Therefore, the sub-path $q_{xx'}$ can cross at most $m \leq k-1$ times from $\mathcal{N}_2$ to $\mathcal{N}_1$ (as each such crossing must traverse one of the $k-1$ red 
edges from $\mathcal{N}_2$ to $\mathcal{N}_1$). It is easy to see that every red edge of $q_{xx'}$ crossing from $\mathcal{N}_2$ to $\mathcal{N}_1$  must appear after a black edge crossing from $\mathcal{N}_1$ to $\mathcal{N}_2$. 

\begin{definition}
We define $w_i$ for $i=1,2,\ldots,m$ to be the last point in $\mathcal{N}_1$ before the $i^{th}$ crossing of $q_{xx'}$ from $\mathcal{N}_1$ to $\mathcal{N}_2$. We denote by $q_i$  the sub-path of $q_{xx'}$ from $w_i$ to $w_{i+1}$. 
\end{definition}
 
  For clarity, we denote $x$  and $x'$ by $w_1$ and $w_{k}$, respectively, and w.l.o.g, we assume that $m=k-1$. For $i=1$ and the  special case where $w_1$ is on $\mathcal{N}_2$ then $p_1$ does not cross from $\mathcal{N}_1$ to $\mathcal{N}_2$. Similarly, for $i=k-1$ and the special case where $w_k$ is on $\mathcal{N}_2$ then $p_{k-1}$ does not cross from $\mathcal{N}_2$ to $\mathcal{N}_1$. Denote by $w'_i \in \mathcal{N}_2$ the successor of $w_i$ in $q_i$  for $i=1,2,\ldots,k-1$. The following corollary outlines the structure of $q_i$ for $i =1,2,\ldots,k-1$.
 \begin{corollary}
 \label{cor:CoordinationCk}
 For $i=1,2,\ldots,k-1$ the sub-path $q_i$ of $q_{xx'}$ from $w_i$ to $w_{i+1}$ crosses the boundary between $\mathcal{N}_1$ and $\mathcal{N}_2$ twice. The first crossing is from  $\mathcal{N}_1$ to $\mathcal{N}_2$ identified with the black edge $(w_i,w'_i)$. The second crossing is from  $\mathcal{N}_2$ to $\mathcal{N}_1$ identified with a red chromatic run $r^*$ of colour $c_j$ where $j \in [1,k-1]$.
 \end{corollary}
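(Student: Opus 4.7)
The plan is to argue directly from the definitions of $w_i$, $w_{i+1}$ and the orientation of edges in the residual network $\mathcal{N}_{k-1}$ with respect to the split $\mathcal{N}_1,\mathcal{N}_2$, followed by a simple balance-of-crossings count on $q_i$.

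First I would establish the one geometric fact that drives everything: because every point of $\mathcal{N}_1$ has $\beta \le (\beta_1+\beta_2)/2$ and every point of $\mathcal{N}_2$ has $\beta > (\beta_1+\beta_2)/2$, and because a long black edge $(y^+,x^-)$ satisfies $y\prec x$, every black edge goes from lower $\beta$ to higher $\beta$. Hence a black edge can cross the split only from $\mathcal{N}_1$ to $\mathcal{N}_2$. Symmetrically, a red edge is reversed in $\mathcal{N}_{k-1}$ relative to the direction along some $\mathcal{Y}_j$, so it goes from higher $\beta$ to lower $\beta$ and can cross the split only from $\mathcal{N}_2$ to $\mathcal{N}_1$. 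Thus every $\mathcal{N}_1\!\to\!\mathcal{N}_2$ crossing of $q_{xx'}$ is a single black edge and every $\mathcal{N}_2\!\to\!\mathcal{N}_1$ crossing is a single red edge.

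Next I would pin down the first edge of $q_i$. By the definition of $w_i$ as the last $\mathcal{N}_1$-point of $q_{xx'}$ preceding the $i$-th $\mathcal{N}_1\!\to\!\mathcal{N}_2$ crossing, and by the definition of $w'_i$ as the successor of $w_i$ in $q_i$, the edge $(w_i,w'_i)$ is exactly this $i$-th crossing. By the previous paragraph it is black, and $w'_i\in \mathcal{N}_2$. This gives the first crossing in the statement.

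Then I would do the crossing count. Sub-path $q_i$ starts at $w_i\in\mathcal{N}_1$ and ends at $w_{i+1}\in\mathcal{N}_1$, so the number of $\mathcal{N}_1\!\to\!\mathcal{N}_2$ crossings on $q_i$ equals the number of $\mathcal{N}_2\!\to\!\mathcal{N}_1$ crossings. By the definition of $w_{i+1}$, the $(i+1)$-th $\mathcal{N}_1\!\to\!\mathcal{N}_2$ crossing of $q_{xx'}$ occurs immediately after $w_{i+1}$, so no $\mathcal{N}_1\!\to\!\mathcal{N}_2$ crossing of $q_{xx'}$ lies strictly between the $i$-th one (at $w_i$) and the $(i+1)$-th one (at $w_{i+1}$). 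Hence $q_i$ contains exactly one $\mathcal{N}_1\!\to\!\mathcal{N}_2$ crossing, namely $(w_i,w'_i)$, and therefore exactly one $\mathcal{N}_2\!\to\!\mathcal{N}_1$ crossing. By the first paragraph this second crossing is a single red edge $(u,u')\in\mathcal{Y}_j$ for some $j\in[1,k-1]$; since chromatic runs are the maximal monochromatic sub-paths of $Q$, this edge is contained in a unique chromatic run $r^*$ of colour $c_j$, which is the $r^*$ named in the statement.

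There is no real obstacle: the argument is just careful bookkeeping based on the two directional facts about black and red edges in $\mathcal{N}_{k-1}$ and on the defining property of $w_i$, $w_{i+1}$. The only mild subtlety is to note that one does not need any use of the non-self-crossing hypothesis on $Q$ for the count itself; it is used elsewhere (to guarantee the clean ``runs'' decomposition) but for this corollary the balance of crossings and the edge-orientation rules suffice.
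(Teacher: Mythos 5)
Your proof is correct and matches the paper's treatment: the paper presents Corollary~\ref{cor:CoordinationCk} without an explicit proof, treating it as an immediate consequence of the definitions of $w_i,w'_i$ together with exactly the two directional facts you establish (every black edge in the residual network is $\beta$-non-decreasing, hence can only cross $\mathcal{N}_1\!\to\!\mathcal{N}_2$; every red residual edge is $\beta$-non-increasing, hence can only cross $\mathcal{N}_2\!\to\!\mathcal{N}_1$), and your crossing-balance count supplies the remaining bookkeeping. One caveat worth flagging, applying both to your argument and to the corollary's statement as printed: the count that forces exactly one crossing in each direction relies on $w_i,w_{i+1}\in\mathcal{N}_1$, which can fail at the endpoints --- if $Q$ starts in $\mathcal{N}_2$ then $w_1=x\in\mathcal{N}_2$ and $q_1$ has no $\mathcal{N}_1\!\to\!\mathcal{N}_2$ crossing, and if $Q$ ends in $\mathcal{N}_2$ then $w_k=x'\in\mathcal{N}_2$ and $q_{k-1}$ has no $\mathcal{N}_2\!\to\!\mathcal{N}_1$ crossing. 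The paper acknowledges precisely these two degenerate cases in the sentences immediately preceding the corollary, so your proof would need the same carve-out for $i=1$ and $i=k-1$ to match scope.
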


 For $i=1,2,\ldots,k-1$ the sub-path $q_i$ of $q_{xx'}$ can be either at most $(k-2)$-chromatic or $(k-1)$-chromatic. In the former case, according to  Lemma \ref{followatmostnoPi} the computation of the first step in the $i^{th}$ iteration of algorithm $\mathcal{C}_k$, follows the sub-path $q_i$ of $q_{xx'}$. For the latter case, we provide the following definition to facilitate analysis.
 
 \begin{definition}
 \label{def:pointssplit}
 For a sub-network $\mathcal{N}[\beta_1,\beta_2]$ consider a $(k-1)$-chromatic path $Q$ from a point $u$ to a point $u'$. Let $\pi,\tau$ be the points in $Q$ such that the sub-path of $Q$ from $u$ to $\pi$ (resp. from $\tau$ to $u'$) is a maximal $(k-2)$-chromatic path.
 \end{definition}
 Points $\pi$ and $\tau$ are always unique and well-defined for any $(k-1)$-chromatic path $Q$.   Following Definition \ref{def:pointssplit}, let   $\pi_i$ and $\tau_i$ for $i=1,2,\ldots,k-1$ be the points in the sub-path $q_i$ of $q_{xx'}$ from $w_i$ to $w_{i+1}$.   Consider the decomposition of sub-path $q_{xx'}$ into the following parts $(w_1,\pi_1,\tau_1,w_2),\ldots,(w_{k-1},\pi_{k-1},\tau_{k-1},w_k)$, as shown in Figure \ref{fig:description}. 
 
 Note that for $i=1,2,\ldots,k-1$ if the path from $\pi_i$ to $\tau_i$ is empty (\textit{i.e.} $\tau_i$ appears before $\pi_i$ in $q_i$) then according to Definition \ref{def:pointssplit} the path from $\pi_i$ to $w_{i+1}$ is also $(k-2)$-chromatic. In this case, based on Lemma \ref{followatmostnoPi} we will show that the computation of the first and third step in the $i^{th}$ iteration of algorithm $\mathcal{C}_k$, follows the sub-path $q_i$ of $q_{xx'}$. From now on we consider the case where the path from $\pi_i$ to $\tau_i$ is not empty (\textit{i.e.} $\tau_i$ appears after $\pi_i$ in $q_i$).

\begin{figure}
    \centering
    \includegraphics[scale=0.45]{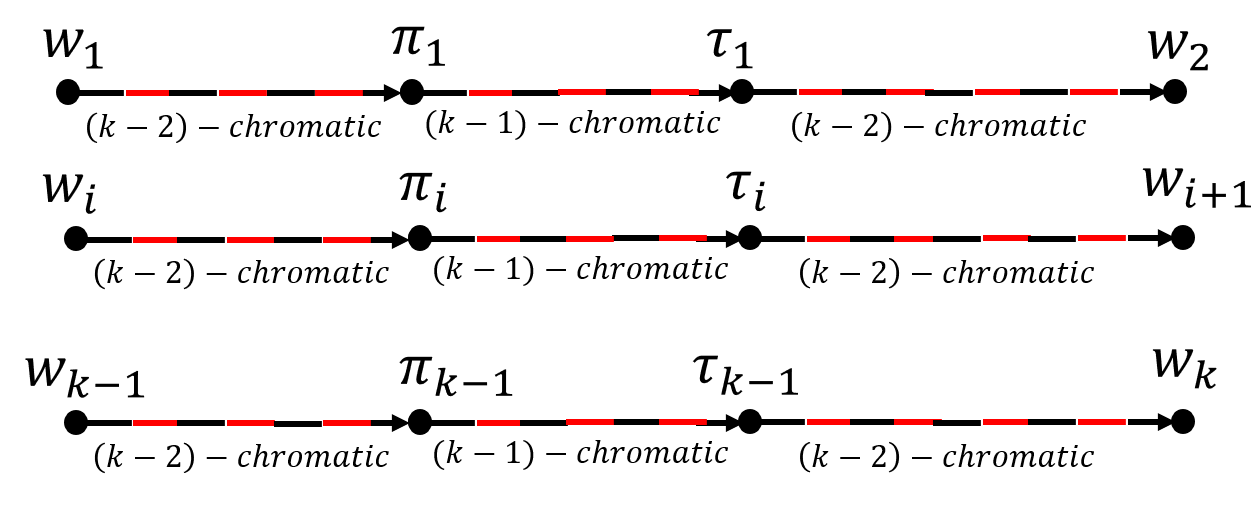}
    \caption{The combinatorial structure of the sub-path $q_{xx'}$ from $x$ to $x'$ of a non-self-crossing path $Q$, for $k \geq 3$. For clarity, we assume that $m=k-1$.}
    \label{fig:description}
\end{figure}

\begin{definition}
\label{def:closedterritory}
Consider the plane representation of the residual network $\mathcal{N}_{k-1}$.
We denote by $\Phi$  the closed subset of the plane whose boundary is described by the leftmost and rightmost path $\mathcal{Y}_1$ and $\mathcal{Y}_{k-1}$, respectively. 
\end{definition}

The exterior of $\Phi$ contains only black points. A black point $u$ in the exterior of $\Phi$ must be either on the left side of path $\mathcal{Y}_1$ or on the right side of path $\mathcal{Y}_{k-1}$. For the former case we say that $u$ is on the left exterior of $\Phi$, whereas in the latter case we say that $u$ is on the right exterior of $\Phi$. We say that a red point $u$ is a left (resp. right) boundary point of $\Phi$ if $u$ is a red point on path $\mathcal{Y}_1$ (resp. $\mathcal{Y}_{k-1}$). We say that a red or black point is in the interior of $\Phi$ if $u$ is a black point between two consecutive paths $\mathcal{Y}_j$ and $\mathcal{Y}_{j+1}$ where $j \in [1,k-2]$ or a red point on path $\mathcal{Y}_{j}$ where $j \in [2,k-2]$. A boundary point or a point in the interior of $\Phi$ is said to be in $\Phi$.

An edge $(u,u')$ is in $\Phi$ if the closed straight line segment $[u,u']$, corresponding to edge $(u,u')$) in the planar representation, is in  $\Phi$. Observe that all red edges of the paths $\mathcal{Y}_1,\mathcal{Y}_2,\ldots,\mathcal{Y}_{k-1}$ must be in $\Phi$. Thus, we have the following corollary.

\begin{corollary}
\label{cor:chromaticrunsinPhi}
For a sub-network $\mathcal{N}[\beta_1,\beta_2]$ and a non-self-crossing path $Q$ in this sub-network, all chromatic runs of  $Q$ are in $\Phi$.
\end{corollary}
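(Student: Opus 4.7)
The plan is to unpack the two definitions involved and reduce the statement to the observation already recorded in the text, namely that every red edge of each path $\mathcal{Y}_1, \mathcal{Y}_2, \ldots, \mathcal{Y}_{k-1}$ lies in $\Phi$. By the shade refinement introduced just before the definition of chromatic runs, every chromatic run $r$ of $Q$ has a single red colour $c_j$ with $j \in [1, k-1]$, so by maximality and colour-consistency each edge of $r$ must be a residual (reversed) red edge belonging to path $\mathcal{Y}_j$. The plane segment of a residual red edge coincides with the plane segment of the forward edge of $\mathcal{Y}_j$, so geometrically the run $r$ is just a concatenation of edges of $\mathcal{Y}_j$.

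Next I would justify the preceding observation that all edges of every $\mathcal{Y}_j$ lie in $\Phi$. For $j = 1$ and $j = k-1$ this is immediate from Definition~\ref{def:closedterritory}, since the plane representations of $\mathcal{Y}_1$ and $\mathcal{Y}_{k-1}$ are by construction the left and right boundary of $\Phi$. For an intermediate $\mathcal{Y}_j$ with $2 \leq j \leq k-2$, I would use that the paths $\mathcal{Y}_1, \ldots, \mathcal{Y}_{k-1}$ are pairwise node-disjoint and non-crossing, and are indexed in left-to-right order; since $\mathcal{Y}_j$ shares only the endpoints $s, t$ with $\mathcal{Y}_1$ and $\mathcal{Y}_{k-1}$, and cannot cross either of them, its plane image is trapped inside the closed region bounded on the left by $\mathcal{Y}_1$ and on the right by $\mathcal{Y}_{k-1}$, which is precisely $\Phi$.

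Combining the two pieces, every edge of $r$ is an edge of some $\mathcal{Y}_j$ and therefore lies in $\Phi$; since $\Phi$ is closed and $r$ is the union of finitely many such closed segments, $r \subseteq \Phi$, which is the statement of the corollary.

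The only real obstacle is the geometric justification that a middle path $\mathcal{Y}_j$ must lie between $\mathcal{Y}_1$ and $\mathcal{Y}_{k-1}$; this is standard once the non-crossing property and the left-to-right indexing are invoked, and both of these are guaranteed by the inductive construction of the paths (Theorem~\ref{maintheoremuncrossing}). Everything else is a direct chase through the definitions of chromatic run and of $\Phi$.
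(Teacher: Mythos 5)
Your proof is correct and takes essentially the same route as the paper: the paper's justification for this corollary is precisely the one-line observation immediately preceding it, that all red edges of $\mathcal{Y}_1,\ldots,\mathcal{Y}_{k-1}$ lie in $\Phi$, combined with the fact that a chromatic run consists only of red edges. Your proposal simply spells out why that observation holds (boundary paths define $\Phi$; intermediate paths are trapped between them by the non-crossing, left-to-right-ordered property), which the paper takes as self-evident.
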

    If a black edge is not in $\Phi$, denoted by $(u,u') \notin \Phi$, then the closed segment $[u,u']$ in the planar representation, must have a closed (sub)-segment in the exterior of $\Phi$.      A non-chromatic run $r$ is not in $\Phi$ if it has at least one black edge $(u,u')$ such that $(u,u') \notin \Phi$. 

 A black edge $(u,u') \notin \Phi$ is a \textit{boundary} edge if point $u$ is a right or left boundary point. A black edge $(u,u') \notin \Phi$ is a \textit{crossing} edge if point $u$ is in the interior of $\Phi$. Recall that paths $\mathcal{Y}_1,\mathcal{Y}_2,\ldots,\mathcal{Y}_{k-1}$ are non crossing pairwise and therefore a crossing edge $(u,u') \notin \Phi$  must necessarily cross at least one  red edge of path $\mathcal{Y}_1$ or path $\mathcal{Y}_{k-1}$.

For a sub-network $\mathcal{N}[\beta_1,\beta_2]$ and a non-self-crossing path $Q$ in the sub-network, consider the geometric representation of $Q$ with the set of points on the plane. Path $Q$ can be seen as a concatenation of straight line segments which represent the edges of $Q$ and  form a continuous segment  $\phi$ in the planar representation.

To facilitate analysis, we distinguish between points and space points. A  point $u$ in $\phi$ corresponds to node $u$ in the directed acyclic graph model (on which the sub-network $\mathcal{N}[\beta_1,\beta_2]$ is based on). A \textit{space} point $l$ in $\phi$ is a geometrical point on the closed segment $[u,u']$ of an edge $(u,u')\in Q$ and does not correspond to a node in the directed acyclic graph model.

\begin{definition}
\label{def:phisegment}
A sub-path $q$ of a non-self-crossing path $Q$ is a covering-path if the continuous segment $\phi$ corresponding to $q$  connects two space points (or points) on the right and left boundary of $\Phi$, respectively.
\end{definition}

Notice that the continuous segment $\phi$  of a covering path $q$ forms a boundary which splits $\Phi$ into two subsets, the bottom subset and the top subset.
Recall that for $i=1,2,\ldots,k-1$ we denote by $q_i$ the path from $w_i$ to $w_{i+1}$. Further according to Definition \ref{def:pointssplit}, for $i=1,2,\ldots,k-1$ path $q_i$ is decomposed into the following parts $(w_i,\pi_i)(\pi_i,\tau_i),(\tau_i,w_{i+1})$ (see Figure \ref{fig:description}).

 \begin{lemma}
 \label{lemma:CoordinationCknoPi}
 For $i=1,2,\ldots,k-1$, all runs (chromatic and non-chromatic) in the sub-path of $q_i$ from $\pi_i$ and $\tau_i$ are in $\Phi$.
 \end{lemma}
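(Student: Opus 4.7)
The plan is to argue by contradiction, reducing the claim to a statement about non-chromatic runs via Corollary \ref{cor:chromaticrunsinPhi}. Since every chromatic run of $Q$ lies in $\Phi$ by that corollary, it suffices to show that every non-chromatic run in the sub-path from $\pi_i$ to $\tau_i$ also lies in $\Phi$. I will establish that the endpoints of each such non-chromatic run are in $\Phi$, and then show that if any edge of the run leaves $\Phi$, the non-self-crossing hypothesis on $Q$ is violated.

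First I would verify that $\pi_i,\tau_i \in \Phi$. By Definition \ref{def:pointssplit}, the edge of $Q$ immediately after $\pi_i$ begins a chromatic run of some colour $c_j$ not appearing in the prefix $[w_i,\pi_i]$; this edge is red, so $\pi_i$ is a red point lying on $\mathcal{Y}_j$ for some $j\in [1,k-1]$, and therefore $\pi_i\in \Phi$. Symmetrically, the edge just before $\tau_i$ is the last edge of a chromatic run of some colour $c_{j'}$, so $\tau_i$ lies on $\mathcal{Y}_{j'}$ and thus in $\Phi$. Every non-chromatic run $r$ inside $[\pi_i,\tau_i]$ is sandwiched between two chromatic runs, so both endpoints of $r$ are red points on some $\mathcal{Y}_m$'s and in particular lie in $\Phi$.

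Now assume for contradiction that some edge $e=(u,u')$ of a non-chromatic run $r\subseteq [\pi_i,\tau_i]$ exits $\Phi$; without loss of generality $e$ exits through the left boundary $\mathcal{Y}_1$, the case through $\mathcal{Y}_{k-1}$ being symmetric. Then $e$ is either a boundary edge (tail on $\mathcal{Y}_1$, head strictly left of $\mathcal{Y}_1$) or a crossing edge whose planar segment crosses some edge $f$ of $\mathcal{Y}_1$; in the latter case $f\notin Q$, for otherwise $Q$ would already contain two crossing edges. Since both endpoints of $r$ lie in $\Phi$, the path must re-enter $\Phi$ via some later edge $e''$ of $r$, which again either ends on $\mathcal{Y}_1$ or crosses an edge of $\mathcal{Y}_1$ absent from $Q$. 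The planar trace of the sub-arc of $r$ from $e$ to $e''$, together with the arc of $\mathcal{Y}_1$ between the corresponding exit and re-entry points, bounds a closed region $R$ whose interior lies entirely in the left exterior of $\Phi$. The remainder of the middle sub-path (in particular the return chromatic run $r^{*}$ supplied by Corollary \ref{cor:CoordinationCk}, and the chromatic runs of the other colours present in $[\pi_i,\tau_i]$) lies in $\Phi$ and hence outside $R$. Because $Q$ must connect the portion of $r$ inside $R$ to $\tau_i$ which is outside $R$, some edge of $Q$ later than $e''$ (or earlier than $e$, depending on the orientation) must cross the boundary of $R$ at a point interior to $e$ or $e''$. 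Since the $\mathcal{Y}_1$-arc portion of $\partial R$ consists of edges not in $Q$, this crossing must fall on $e$ or $e''$ themselves, producing two edges of $Q$ that cross and contradicting the non-self-crossing hypothesis.

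The main obstacle I anticipate is making the topological step fully rigorous, in particular handling uniformly the sub-cases in which $e$ (and $e''$) is a boundary edge versus a crossing edge, and in which $\pi_i$ or $\tau_i$ itself already lies on $\mathcal{Y}_1$ (so the ``start inside $\Phi$'' picture is degenerate on the boundary). The restricted structure of $q_i$ from Corollary \ref{cor:CoordinationCk} — exactly one black crossing of the $\mathcal{N}_1/\mathcal{N}_2$ boundary followed by exactly one red chromatic return — is what makes the Jordan-curve argument work, because it guarantees a point of the middle sub-path which is forced to lie on the side of $R$ opposite to $r$'s excursion.
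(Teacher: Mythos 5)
Your reduction to non-chromatic runs via Corollary~\ref{cor:chromaticrunsinPhi} is the right opening move, and your observation that $\pi_i,\tau_i$ and the endpoints of every non-chromatic run inside $[\pi_i,\tau_i]$ are red points (hence in $\Phi$) is correct. But the Jordan-curve step that is supposed to produce the contradiction does not go through, and the paper's actual argument is structured quite differently.

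The gap is in the claim that ``some edge of $Q$ later than $e''$ (or earlier than $e$) must cross the boundary of $R$.'' Your region $R$ is bounded by the excursion arc of $r$ (from $e$ to $e''$) and an arc of $\mathcal{Y}_1$, and its interior lies entirely in the left exterior of $\Phi$. The excursion arc is on $\partial R$, not in the interior of $R$, and after re-entering $\Phi$ at $e''$ the path $Q$ continues inside $\Phi$, which is disjoint from the interior of $R$. Since $\tau_i$ is also in $\Phi$, both the source and the target of the remaining sub-path are outside $R$, and nothing in your construction places any vertex or space point of $Q$ in the interior of $R$. A closed excursion out of $\Phi$ and back simply does not, by itself, trap anything, so no re-crossing of $\partial R$ is forced and no self-crossing arises. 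Your closing remark acknowledges that Corollary~\ref{cor:CoordinationCk} ``guarantees a point of the middle sub-path forced to lie on the side of $R$ opposite to $r$'s excursion,'' but you never exhibit such a point, and for the region $R$ as you define it there is no such point.

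The paper's proof uses the structure of $q_i$ in an essentially different way and splits into two cases according to whether the first bad black edge $(u,u')$ occurs before or after the chromatic run $r^*$ that crosses from $\mathcal{N}_2$ to $\mathcal{N}_1$. In the first case, a chromatic run $r$ of colour $c_{k-1}$ is shown to occur between $\pi_i$ and $(u,u')$; the sub-path $p$ from the end of $r$ to $u'$ is a covering path whose planar trace $\phi$, together with the $\mathcal{N}_1/\mathcal{N}_2$ dividing line, bounds a region $\widetilde\Phi\subset\Phi\cap\mathcal{N}_2$. The crucial point is that the \emph{first} point of $r^*$ must lie inside $\widetilde\Phi$ (it is in $\mathcal{N}_2$ and $r^*$ cannot cross $\phi$ without a self-crossing), while the excursion of $(u,u')$ places a piece of the later sub-path $p'$ in the exterior of $\widetilde\Phi$; reaching $r^*$'s first point then forces $p'$ to cross $\phi$, hence $p$, giving the contradiction. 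In the second case there is no crossing contradiction at all: the excursion traps the red edges of $\mathcal{Y}_{k-1}$ so that the remainder of $q_i$ from $u'$ to $w_{i+1}$ can be at most $(k-2)$-chromatic, which by the maximality in Definition~\ref{def:pointssplit} forces $\tau_i$ to occur at or before $u$, so $(u,u')$ was never between $\pi_i$ and $\tau_i$ to begin with. Your outline does not recover either mechanism; in particular you treat the two boundary paths $\mathcal{Y}_1$ and $\mathcal{Y}_{k-1}$ as symmetric and do not account for the position of the bad edge relative to $r^*$, which is precisely what determines which of the two very different arguments applies.
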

 \begin{proof}
 According to Corollary \ref{cor:chromaticrunsinPhi}, all chromatic runs in the sub-path of $q_i$ from $\pi_i$ and $\tau_i$ must be in $\Phi$. Thus, it remains to show that all non-chromatic runs are also in $\Phi$.
 Assume towards contradiction that for some $i \in [1,k-1]$ there is a non-chromatic run $r_b$ between $\pi_i$ and $\tau_i$ such that $r_b$ is not in $\Phi$. This means that $r_b$ must include at least one black edge which is not in $\Phi$.

 We denote by $(u,u')$ the first black edge in $r_b$ such that $(u,u') \notin \Phi$. Recall that a black edge which is not in $\Phi$ must be  either a crossing edge or a boundary edge.  Let $l$ be the space point which is defined in the following way. If edge $(u,u')$ is a boundary edge then $l$ is defined as point $u$. If edge $(u,u')$ is a crossing edge then $l$ is defined as the first crossing point on the closed segment $[u,u']$ with a red edge of path $\mathcal{Y}_1$ or path $\mathcal{Y}_{k-1}$.
 
 Without loss of generality, we assume that edge $(u,u')$ is a crossing edge and that space point $l$ is on path $\mathcal{Y}_1$. According to Corollary \ref{cor:CoordinationCk} there is exactly one chromatic run $r^*$ in $q_i$ which crosses from $\mathcal{N}_2$ to $\mathcal{N}_1$.   There are two possible cases: (1) edge $(u,u')$ appears before $r^*$ and (2) edge $(u,u')$ appears after $r^*$.
\paragraph{Case 1}(see Figure \ref{CoordinationCknoPia})\newline
According to Definition \ref{def:pointssplit}, the path from $w_i$ to $\pi_i$ is a maximal $(k-2)$-chromatic path. If $\pi_i$ is a red point on path $\mathcal{Y}_j$ where $j \in [1,k-2]$ then the path from $w_i$ to $\pi_i$ has at least one chromatic run of colour $c_{k-1}$. If $\pi_i$ is a red point on path $\mathcal{Y}_{k-1}$ then clearly the path from $w_i$ to $\pi_i$ has at least one chromatic run of colour $c_{k-1}$. Because $\pi_i$ appears before edge $(u,u')$ we conclude that there is at least one chromatic run of colour $c_{k-1}$ before edge $(u,u')$. Let $r$ be the last chromatic run of colour $c_{k-1}$ before the black edge $(u,u')$. 

Let $p$ be the path from the last point of run $r$ to point $u'$.
According to Definition \ref{def:phisegment} $p$ must be a covering path since there is a continuous segment $\phi$ which connects a right boundary point (the first point of run $r$ on $\mathcal{Y}_{k-1}$) and a left boundary point (the crossing point $l$ on path $\mathcal{Y}_1$). Notice that all runs in $p$ appear before $r^*$ and therefore $p$ has only points in $\mathcal{N}_2$. This means that the continuous segment $\phi$ is above the boundary separating $\mathcal{N}_1$ and $\mathcal{N}_2$. 

Let $\Phi \cap \mathcal{N}_2$ be the subset of $\Phi$ above the  boundary separating $\mathcal{N}_1$ and $\mathcal{N}_2$.  Consider the subset $\widetilde{\Phi}$ of $\Phi \cap \mathcal{N}_2$ which is described with the following two boundaries. The top boundary is the continuous segment $\phi$. The bottom boundary is the boundary separating $\mathcal{N}_1$ and $\mathcal{N}_2$. In  Figure \ref{CoordinationCknoPia}, the subset $\widetilde{\Phi}$ of $\Phi \cap \mathcal{N}_2$ is shown with the shaded area. 

The last point of run $r^*$ must be in $\mathcal{N}_1$  since $r^*$ crosses from $\mathcal{N}_2$ to $\mathcal{N}_1$. Since $\widetilde{\Phi}$ is a subset of $\Phi \cap \mathcal{N}_2$, the last point of run $r^*$ must be in the exterior of $\widetilde{\Phi}$. This means that the first point of $r^*$ must be between the top and bottom boundary of $\widetilde{\Phi}$ since otherwise run $r^*$  crosses with the continuous segment $\phi$  which implies a self-crossing. Thus, the first point of run $r^*$ must be in $\widetilde{\Phi}$.

Let $p'$ be the path from point $u$ to the first point of run $r^*$ and let $\phi'$ be the continuous segment (corresponding to $p'$) from the space point $l$ to the first point of run $r^*$.  All runs in $p'$ appear before $r^*$ which means that $p'$ has only points in $\mathcal{N}_2$. Thus, the continuous segment $\phi'$ is above the boundary separating $\mathcal{N}_1$ and $\mathcal{N}_2$. Since edge $(u,u') \notin \Phi$, the continuous segment $\phi'$ must have a closed segment $[l,l']$ in the exterior of $\Phi \cap \mathcal{N}_2$ and subsequently in the exterior of $\widetilde{\Phi}$. 

If the closed segment $[l,u']$ does not cross any red edges, then space point $l'$ is defined as point $u'$. If the closed segment $[l,u']$ crosses with at least one red edge, then the first crossing point on the closed segment $[l,u']$ must be with a red edge of $\mathcal{Y}_{1}$, since there are no red edges in the exterior of $\Phi$ and subsequently in the exterior of $\widetilde{\Phi}$. In this case, space point $l'$ is defined as the first crossing point on the closed segment $[l,u']$. 

The continuous segment from any arbitrary space point on the closed segment $[l,l']$ which is on the exterior of $\widetilde{\Phi}$ to the first point of run $r^*$ which is in $\widetilde{\Phi}$ must cross the top boundary of $\widetilde{\Phi}$. This implies, that path $p$ crosses with path $p'$, which makes a contradiction.

\paragraph{Case 2}(see Figure \ref{CoordinationCknoPib})\newline
Consider the path $p$ from the last point of run $r^*$ to point $u'$ and let $\phi$ be the continuous segment (corresponding to path $p$) from the last point of run $r^*$ to the space point $l$ on edge $(u,u')$. All runs in $p$ appear after $r^*$, which means that $p$ has only points in $\mathcal{N}_1$ and subsequently the continuous segment $\phi$ must be below the boundary separating $\mathcal{N}_1$  and $\mathcal{N}_2$.

Let $\Phi \cap \mathcal{N}_1$ be the subset of $\Phi$ below the  boundary separating $\mathcal{N}_1$  and $\mathcal{N}_2$. Consider the subset $\widetilde{\Phi}$ of $\Phi \cap \mathcal{N}_1$ which is described with the following top and bottom boundary. The bottom boundary of $\widetilde{\Phi}$  is described with the continuous segment $\phi$. The top boundary of $\widetilde{\Phi}$ is described with the boundary separating $\mathcal{N}_1$ and $\mathcal{N}_2$. In  Figure \ref{CoordinationCknoPib}, the subset $\widetilde{\Phi}$ of $\Phi \cap \mathcal{N}_1$ is shown with the shaded area.

Let $p'$ be the path from $u'$ to point $w_{i+1}$. All runs in $p'$ appear after $r^*$ and therefore $p'$ has only points in $\mathcal{N}_1$. Path $p'$ is non-self-crossing and therefore all chromatic runs in $p'$ must be in $\widetilde{\Phi}$.
The only red edges of path $\mathcal{Y}_{k-1}$ in $\widetilde{\Phi}$ (if any) are the red edges  traversed by path $p$. Therefore any red edges of path $\mathcal{Y}_{k-1}$ in  $\widetilde{\Phi}$  can not be traversed by $p'$ which means that $p'$ can be at most $(k-2)$-chromatic. 

Points $u$ and $u'$ are connected with a black edge. Hence, the path form $u$ to $w_{i+1}$ can also be at most $(k-2)$-chromatic. According to Definition \ref{def:pointssplit} the path from $\tau_i$ to $w_{i+1}$ is a maximal $(k-2)$-chromatic path. Therefore, point $\tau_i$ can not appear after point $u$ in the path from $\pi_i$ to $w_{i+1}$ (\textit{i.e.} either $\tau_i=u$ or $\tau_i$ precedes $u$). All non-chromatic runs in the path from $\pi_i$ to $u$ must be in $\Phi$ since edge $(u,u')$ is the first black edge such that $(u,u')\notin \Phi$.  Therefore,  all non-chromatic runs between $\pi_i$ and $\tau_i$ must also be in $\Phi$ since $\tau_i$ does not appear after $u$.
 \end{proof}
 
 \begin{figure}
\centering
\begin{subfigure}{.5\textwidth}
  \centering
  \includegraphics[scale=0.7]{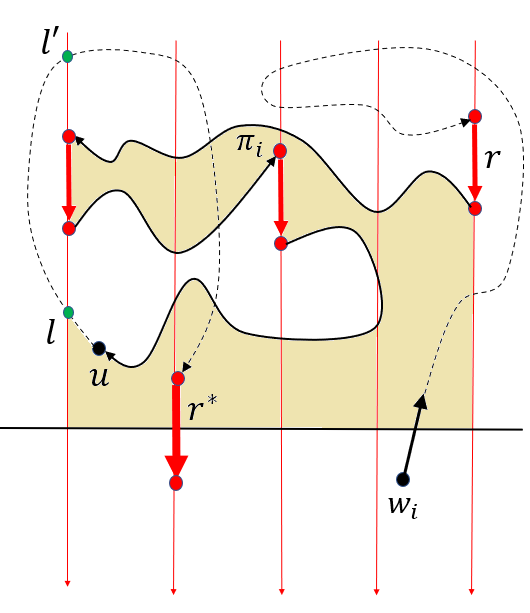}
    \caption{}
    \label{CoordinationCknoPia}
\end{subfigure}%
\begin{subfigure}{.5\textwidth}
  \centering
  \includegraphics[scale=0.8]{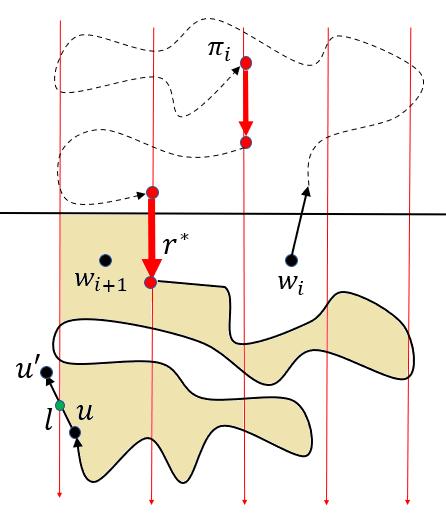}
    \caption{}
    \label{CoordinationCknoPib}
\end{subfigure}
\caption{Figure \ref{CoordinationCknoPia}: The schematic representation of Case 1 for the proof of Lemma \ref{lemma:CoordinationCknoPi}. Figure \ref{CoordinationCknoPib}: The schematic representation of Case 2 for the proof of Lemma \ref{lemma:CoordinationCknoPi}.}
\end{figure}
 
For a sub-network $\mathcal{N}[\beta_1,\beta_2]$ let $Q$ be a non-self-crossing path in this sub-network. Consider the sub-path $q_{xx'}$ of $Q$ from $x$ to $x'$ and more specifically its decomposition as shown in Figure \ref{fig:description} (for clarity we denote $x$ and $x'$ by $w_1$ and $w_{k}$, respectively, and assume that $m=k-1$). For $i=1,2,\ldots,k-1$ recall that $q_i$ denotes the sub-path of $q_{xx'}$ from $w_i$ to $w_{i+1}$. Without loss of generality, we assume that path $q_i$ for $i=1,2,\ldots,k-1$  is $(k-1)$-chromatic.    Let $\mathcal{C}^i_k$ for $i=1,2,\ldots,k-1$ denote the $i^{th}$ iteration of algorithm $\mathcal{C}_k$. The proof of Theorem \ref{Thm:coordinationnoPi} is outlined below. 

For $i=1,2,\ldots,k-1$ the first step of $\mathcal{C}^i_k$ calls algorithm $\widehat{\mathcal{A}}_{k-1}$ on sub-network $\mathcal{N}[\beta_1,\beta_2]$. According to Lemma \ref{followatmostnoPi} when algorithm  $\widehat{\mathcal{A}}_{k-1}$ is applied on a sub-network the computation follows every non-self-crossing path $Q$ such that $Q$ is at most $(k-2)$-chromatic. According to Definition \ref{def:pointssplit},  the  sub-path of $q_i$ from $w_i$ to $\pi_i$ is at most $(k-2)$-chromatic. Thus the computation of the first step follows the sub-path of $q_i$ from $w_i$ to $\pi_i$.

The second step of algorithm $\mathcal{C}^i_k$ calls algorithm $\mathcal{Z}_k$ on sub-network $\mathcal{N}[\beta_1,\beta_2]$. As we will show in the next sub-section when algorithm $\mathcal{Z}_k$ is applied on a sub-network the computation follows any non-self-crossing path $Q$ such that all runs (chromatic and non-chromatic) in $Q$ are in $\Phi$. According to Lemma \ref{lemma:CoordinationCknoPi}, all runs in the sub-path of $q_i$ from $\pi_i$ to $\tau_i$ are in $\Phi$. Thus, the computation of algorithm $\mathcal{Z}_k$ follows the sub-path of $q_i$ from $\pi_i$ to $\tau_i$. 

The third step  of $\mathcal{C}^i_k$ calls algorithm $\widehat{\mathcal{A}}_{k-1}$ on sub-network $\mathcal{N}[\beta_1,\beta_2]$. 
Similarly as for the first step,  according to Lemma \ref{followatmostnoPi} when algorithm  $\widehat{\mathcal{A}}_{k-1}$ is applied on a sub-network the computation follows every non-self-crossing path $Q$ such that $Q$ is at most $(k-2)$-chromatic. According to Definition \ref{def:pointssplit}, the  sub-path of $q_i$ from $\tau_i$ to $w_{i+1}$ is at most $(k-2)$-chromatic. Thus, the computation of the third step of follows the  sub-path of $q_i$ from $\tau_i$ to $w_{i+1}$.

\subsection{Algorithm $\mathcal{Z}_k$}
\label{sec:AlgorithmZnoPi}
In this section we outline the combinatorial structure of paths followed by algorithm $\mathcal{Z}_k$. 
\begin{definition}
\label{def:Phipath}
For a sub-network $\mathcal{N}[\beta_1,\beta_2]$ and a non-self-crossing path $Q$ in this sub-network we say that $Q$ is a $\Phi$-path if all runs (chromatic and non-chromatic) of $Q$ are in $\Phi$.
\end{definition}

Recall that when algorithm $\mathcal{Z}_k$ is applied to a sub-network $\mathcal{N}[\beta_1,\beta_2]$ the computation consists of three phases (see algorithm \ref{PseudoZmainnoPi}). The first and third phase call algorithm $\mathcal{Z}_k$ recursively on $\mathcal{N}_2$ and $\mathcal{N}_1$, respectively. The second phase, coordinates $\mathcal{N}_1$ and $\mathcal{N}_2$ and consists of two steps. The first step calls algorithm $\Delta(\mathcal{Y}_1,\mathcal{Y}_2,\ldots,\mathcal{Y}_{k-1})$. The second step consists of $(k-2)$ iterations where each iteration performs two calls of  algorithm $\widehat{\mathcal{A}}_{k-1}$  to the sub-network  $\mathcal{N}[\beta_1,\beta_2]$. Theorem \ref{algo:Znopi} describes the specification of algorithm $\mathcal{Z}_k$.

\begin{theorem}
\label{algo:Znopi}
Assuming that Theorem \ref{Thm1noPi} holds for $k-1$, when algorithm $\mathcal{Z}_k$ is applied to a sub-network $\mathcal{N}[\beta_1,\beta_2]$ the computation follows every $\Phi$-path in this sub-network.
\end{theorem}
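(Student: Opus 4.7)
The plan is to prove Theorem \ref{algo:Znopi} by induction on $k$ (using Theorem \ref{Thm1noPi} for $k-1$ for the $\widehat{\mathcal A}_{k-1}$ sub-routines) and, nested within that, on the size $n$ of the sub-network (for the recursive calls $\mathcal Z_k(\mathcal N_2)$ and $\mathcal Z_k(\mathcal N_1)$); the base case is a sub-network of size below a small constant threshold, handled directly. For the inductive step, let $Q$ be a $\Phi$-path in $\mathcal N[\beta_1,\beta_2]$. Adapting Definitions \ref{Def:x} and \ref{Def:x'} with the roles of $\mathcal N_1$ and $\mathcal N_2$ exchanged (since $\mathcal Z_k$ recurses first on $\mathcal N_2$ and last on $\mathcal N_1$), I define $\tilde x$ to be the last point of $Q$ all of whose predecessors lie in $\mathcal N_2$ (or the start of $Q$ if $Q$ begins in $\mathcal N_1$), and $\tilde x'$ to be the first point of $Q$ in $\mathcal N_1$ such that every successor lies in $\mathcal N_1$ (or the end of $Q$ if $Q$ terminates in $\mathcal N_2$). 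This yields a decomposition $Q = Q_{\tilde x}\cup q_{\tilde x\tilde x'}\cup Q_{\tilde x'}$ with $Q_{\tilde x}\subseteq\mathcal N_2$, $Q_{\tilde x'}\subseteq\mathcal N_1$, and the zigzag middle $q_{\tilde x\tilde x'}$ straddling both halves.

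By the inner induction on sub-network size, $\mathcal Z_k(\mathcal N_2)$ follows $Q_{\tilde x}$ and $\mathcal Z_k(\mathcal N_1)$ follows $Q_{\tilde x'}$, since each is a $\Phi$-path lying wholly in a single half. It therefore suffices to show that the coordination phase of $\mathcal Z_k$ -- a single call to $\Delta(\mathcal Y_1,\ldots,\mathcal Y_{k-1})$ followed by $k-2$ iterations of $\widehat{\mathcal A}_{k-1}$ invoked twice -- follows $q_{\tilde x\tilde x'}$. The canonical form of the middle is an alternating concatenation: starting at $\tilde x\in\mathcal N_2$ with a boundary-crossing red edge $\text{Red}_1$ into $\mathcal N_1$, then an $\mathcal N_1$-sub-path, a boundary-crossing black edge $\text{Black}_1$ back to $\mathcal N_2$, an $\mathcal N_2$-sub-path, another boundary-crossing red edge $\text{Red}_2$, and so on, terminating at $\tilde x'\in\mathcal N_1$ as the destination of the final $\text{Red}_m$. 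Because each $\text{Red}_i$ must be one of the $k-1$ reversed boundary-crossing edges (one per path $\mathcal Y_j$), the middle has $m\le k-1$ red transitions and $m-1\le k-2$ intermediate meta-blocks of the form $(\mathcal N_1\text{-sub-path}_i,\,\text{Black}_i,\,\mathcal N_2\text{-sub-path}_i,\,\text{Red}_{i+1})$.

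The $\Delta$ step relaxes every red edge in reverse along each $\mathcal Y_j$; in particular, it performs the relax operation on $\text{Red}_1$ and, using that $\mathcal Z_k(\mathcal N_2)$ has already set the correct $h$-value at $\tilde x$, propagates that value across $\text{Red}_1$ to its destination. The $i$-th iteration of $\widehat{\mathcal A}_{k-1}$ invoked twice is then designed to follow the $i$-th meta-block (for $i = 1,\ldots,m-1$), which matches the $(k-2)$ available iterations since $m-1\le k-2$. The later meta-blocks' boundary-crossing edges $\text{Red}_2,\ldots,\text{Red}_m$ are re-relaxed by the $\widehat{\mathcal A}_{k-1}$ calls at the proper points along the middle; the earlier relaxations by $\Delta$ cause no harm because the requirement of ``following'' $Q$ needs only that the relevant relax operations appear as a sub-sequence in the correct order. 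The key combinatorial claim to establish is that each meta-block is a short $(k-1)$-chromatic non-self-crossing path in $\mathcal N[\beta_1,\beta_2]$, so that Lemma \ref{followatmostshortnoPi} applies.

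The principal obstacle is verifying the short $(k-1)$-chromaticity of each meta-block. Although a meta-block contains only one boundary-crossing red edge, its $\mathcal N_1$- and $\mathcal N_2$-sub-paths may traverse interior red edges of any of the $k-1$ paths $\mathcal Y_j$ and could, a priori, form a genuinely non-short $(k-1)$-chromatic path, in which case Lemma \ref{followatmostshortnoPi} would fail. The argument must exploit the non-self-crossing property of $Q$ together with the left-to-right geometric ordering of $\mathcal Y_1,\ldots,\mathcal Y_{k-1}$ inside $\Phi$ to show that within a single meta-block, chromatic runs of the extreme colours $c_1$ and $c_{k-1}$ necessarily appear in one of the two ``short'' patterns. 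A careful case analysis based on the colours and geometric positions of $\text{Red}_i$ and $\text{Red}_{i+1}$, together with the handling of degenerate cases where $Q_{\tilde x}$ or $Q_{\tilde x'}$ is empty or $m\le 1$ (each reducing to a subcase already covered by Lemmas \ref{followatmostnoPi} or \ref{followatmostshortnoPi}, or by the recursive calls alone), will complete the proof.
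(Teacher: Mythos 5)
Your high-level architecture matches the paper's exactly: double induction on $k$ and on the sub-network size, with the prefix $Q_{\tilde x}$ and suffix $Q_{\tilde x'}$ handled by the recursive calls $\mathcal{Z}_k(\mathcal{N}_2)$ and $\mathcal{Z}_k(\mathcal{N}_1)$, the first boundary-crossing red edge handled by $\Delta(\mathcal{Y}_1,\ldots,\mathcal{Y}_{k-1})$, and each of the at most $k-2$ ``meta-blocks'' between successive $\mathcal{N}_2\to\mathcal{N}_1$ crossings handled by one iteration of the doubled $\widehat{\mathcal{A}}_{k-1}$ via Lemmas~\ref{followatmostnoPi} and~\ref{followatmostshortnoPi}. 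Your points $\tilde x,\tilde x'$ and the alternating canonical form correspond precisely to the paper's $w_i, w_i'$ decomposition (Definition~\ref{Def:winZ}).

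The gap is the one you flag yourself and then do not close: the claim that each $(k-1)$-chromatic meta-block is short $(k-1)$-chromatic. This is Lemma~\ref{lemma:ZnoPiCord}, and it is the mathematical content of the theorem --- everything else in the proof is bookkeeping over the recursive structure. The paper earns this lemma in two stages: Lemma~\ref{nocrossinginZnoPi1} and Lemma~\ref{nocrossinginZnoPi} establish that a $(k-1)$-chromatic $\Phi$-path straddling both halves with no $\mathcal{N}_2\to\mathcal{N}_1$ crossing must be short $(k-1)$-chromatic, via a careful argument constructing a sub-region $\widetilde\Phi$ of $\Phi\cap\mathcal{N}_2$ bounded above by a segment of the path and below by the horizontal split, and deriving a self-crossing contradiction; Lemma~\ref{lemma:ZnoPiCord} then appends the terminal red edge $(w_{i+1},w'_{i+1})$ to each meta-block and shows (by another $\widetilde\Phi$/covering-path argument) that this edge cannot be of colour $c_1$, so shortness survives. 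Your ``key combinatorial claim to establish'' paragraph gestures at non-self-crossing and left-to-right ordering of the $\mathcal{Y}_j$, but does not construct $\widetilde\Phi$, does not isolate the covering-path structure, and does not rule out the terminal edge being of colour $c_1$. Without this, the appeal to Lemma~\ref{followatmostshortnoPi} is unjustified and the proof is incomplete. A secondary omission: the case where $Q$ straddles both halves but never crosses $\mathcal{N}_2\to\mathcal{N}_1$ (so $Q_{\tilde x}$ and $Q_{\tilde x'}$ are both empty and $m=0$) is not a trivial ``degenerate case'' --- it requires Lemma~\ref{nocrossinginZnoPi} directly, which is the same missing geometric ingredient.
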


For a sub-network $\mathcal{N}[\beta_1,\beta_2]$, let  $Q$ be a $\Phi$-path in the sub-network from a point $\pi$ to a point $\tau$. Consider the two sub-networks $\mathcal{N}[\beta_1,(\beta_1+\beta_2)/2]$ and $\mathcal{N}[(\beta_1+\beta_2)/2,\beta_2]$, denoted by $\mathcal{N}_1$ and $\mathcal{N}_2$, respectively. Without loss of generality we assume that $Q$ is $(k-1)$-chromatic and that is has points both in $\mathcal{N}_1$ and $\mathcal{N}_2$. There are only two possible cases: Path $Q$ does not cross from $\mathcal{N}_2$ to $\mathcal{N}_1$ or path $Q$ crosses at least once from $\mathcal{N}_2$ to $\mathcal{N}_1$.

\begin{lemma}
\label{nocrossinginZnoPi1}
For a sub-network $\mathcal{N}[\beta_1,\beta_2]$ and a $(k-1)$-chromatic $\Phi$-path $Q$ in the sub-network, if  path $Q$ has points both in $\mathcal{N}_1$ and $\mathcal{N}_2$ but does not cross from $\mathcal{N}_2$ to $\mathcal{N}_1$ then the starting point of $Q$ is in $\mathcal{N}_1$ and the ending point of $Q$ is in $\mathcal{N}_2$.
\end{lemma}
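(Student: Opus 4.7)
The plan is to exploit the directionality of the two types of edges in the residual network with respect to the $\beta$-axis. Recall that a black edge $(u,u')$ satisfies $u\prec u'$, which in particular forces $\beta_u\le \beta_{u'}$, while every long red edge is the reversal of such an edge and therefore goes from higher $\beta$ to lower $\beta$. Short edges never cross the boundary $(\beta_1+\beta_2)/2$ separating $\mathcal{N}_1$ from $\mathcal{N}_2$, because the two endpoints of any short edge share (up to the infinitesimal perturbation $\epsilon$) the same $\beta$-coordinate. Hence the only edges of the sub-network that can cross the boundary are long black edges, which must go from $\mathcal{N}_1$ to $\mathcal{N}_2$, and long red edges, which must go from $\mathcal{N}_2$ to $\mathcal{N}_1$.

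With this observation in hand, the proof reduces to a one-line case analysis. First I would note that the hypothesis ``$Q$ does not cross from $\mathcal{N}_2$ to $\mathcal{N}_1$'' means, by the definition quoted in the excerpt, that no chromatic run of $Q$ traverses a red edge whose tail is in $\mathcal{N}_2$ and whose head is in $\mathcal{N}_1$; combined with the directional observation above, this says that no edge of $Q$ crosses the boundary downward. Consequently, as we traverse $Q$ from start to end, every boundary crossing is an upward crossing performed by a black edge, so once $Q$ enters $\mathcal{N}_2$ it remains in $\mathcal{N}_2$ for the rest of its traversal.

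From this monotonicity, the conclusion is immediate. If the starting point of $Q$ lay in $\mathcal{N}_2$, the whole path would lie in $\mathcal{N}_2$, contradicting the fact that $Q$ has points in $\mathcal{N}_1$. Symmetrically, if the ending point of $Q$ lay in $\mathcal{N}_1$, then since $Q$ also has a point in $\mathcal{N}_2$, it would have to transition from $\mathcal{N}_2$ back to $\mathcal{N}_1$ at some later edge, which by the observation above would require a red edge crossing from $\mathcal{N}_2$ to $\mathcal{N}_1$, again a contradiction. Therefore the starting point of $Q$ lies in $\mathcal{N}_1$ and the ending point lies in $\mathcal{N}_2$.

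I do not anticipate any serious obstacle here, since the $(k-1)$-chromaticity of $Q$ and the $\Phi$-path property are not actually used: the only input needed is the orientation of black vs.\ red edges and the fact that $Q$ visits both halves. The sole subtlety to be careful about is verifying that the short edges inside a pair node do not straddle the boundary; this is handled by the assumption, made when we fixed integral coordinates in subsection~\ref{sub:plane}, that $i^-$ and $i^+$ are separated only by an infinitesimal $\epsilon$ and by the way the boundary $(\beta_1+\beta_2)/2$ is chosen to fall strictly between integer $\beta$-values of distinct pair points.
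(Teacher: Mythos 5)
Your proof is correct and relies on exactly the same core observation as the paper's: in the residual network, black edges go upward in $\beta$ and the only downward boundary-crossing edges are reversed red edges, so the hypothesis that $Q$ does not cross from $\mathcal{N}_2$ to $\mathcal{N}_1$ forces $Q$ to stay in $\mathcal{N}_2$ once it enters. The paper phrases this as a direct case analysis on the endpoint locations (each bad configuration forces a red downward crossing), whereas you extract the monotonicity explicitly and then read off the endpoint constraints; this is a cosmetic difference, not a different route, and you are right that neither proof actually uses the $(k-1)$-chromatic or $\Phi$-path hypotheses.
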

\begin{proof}
Let $\pi$ and $\tau$ be the starting and ending point of path $Q$. We  claim that if $Q$ has points both in $\mathcal{N}_1$ and $\mathcal{N}_2$ but does not cross from $\mathcal{N}_2$ to $\mathcal{N}_1$ then $\pi$ must be on $\mathcal{N}_1$ and $\tau$ must be on $\mathcal{N}_2$. Assume towards contradiction that our claim is not true. If point $\pi$ is on $\mathcal{N}_2$ and point $\tau$ is on $\mathcal{N}_1$ then $Q$ necessarily crosses from $\mathcal{N}_2$ to $\mathcal{N}_1$, which makes a contradiction. Similarly, if both $\pi$ and $\tau$ are on $\mathcal{N}_2$ (resp. $\mathcal{N}_1$) and $Q$ has points both in $\mathcal{N}_1$ and $\mathcal{N}_2$, then there is at least one point $x$ in $\mathcal{N}_1$ (resp. $\mathcal{N}_2$) between $\pi$ and $\tau$, which means that $Q$ crosses from $\mathcal{N}_2$ to $\mathcal{N}_1$. Again, this makes a contradiction. We conclude that the start point $\pi$ of $Q$ must be in $\mathcal{N}_1$ and the ending point $\tau$ of $Q$ must be in $\mathcal{N}_2$.
\end{proof}

Recall that a $(k-1)$-chromatic path $Q$ is short $(k-1)$-chromatic if all chromatic runs of colour $c_1$ appear before all chromatic runs of colour $c_{k-1}$, or vice versa.

\begin{lemma}
\label{nocrossinginZnoPi}
For a sub-network $\mathcal{N}[\beta_1,\beta_2]$ and a $(k-1)$-chromatic $\Phi$-path $Q$ in the sub-network, if  path $Q$ has points both in $\mathcal{N}_1$ and $\mathcal{N}_2$ but does not cross from $\mathcal{N}_2$ to $\mathcal{N}_1$ then $Q$ is short $(k-1)$-chromatic.
\end{lemma}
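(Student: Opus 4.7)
The plan is to reduce the structural claim about $Q$ to a planar crossing contradiction inside the region $\Phi$. By Lemma~\ref{nocrossinginZnoPi1}, $Q$ starts in $\mathcal{N}_1$ and ends in $\mathcal{N}_2$. Because no chromatic run of $Q$ crosses from $\mathcal{N}_2$ to $\mathcal{N}_1$, and every black edge of $\mathcal{N}_{k-1}$ is $\prec$-forward and hence $\beta$-non-decreasing, the path cannot re-enter $\mathcal{N}_1$ once it has left. Consequently $Q$ decomposes as $Q = Q_1 \cdot Q_2$ with $Q_1 \subseteq \mathcal{N}_1$ and $Q_2 \subseteq \mathcal{N}_2$, joined by a single black edge that crosses the $\beta = (\beta_1+\beta_2)/2$ boundary.

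I would then argue by contradiction. Suppose $Q$ is not short $(k-1)$-chromatic. Then, possibly after swapping the roles of $c_1$ and $c_{k-1}$, there is a triple of chromatic runs $r_a, r_b, r_c$ appearing in this order in $Q$ with colours $c_1, c_{k-1}, c_1$. I would choose $r_a$ and $r_c$ to be consecutive in the sub-sequence of $c_1$ runs of $Q$, so that no chromatic run of colour $c_1$ lies strictly between them in $Q$. Let $a$ be the last point of $r_a$ and $c$ the first point of $r_c$, both on $\mathcal{Y}_1$, and let $b, b'$ be the first and last points of $r_b$ on $\mathcal{Y}_{k-1}$; note that $b$ lies strictly above $b'$ along $\mathcal{Y}_{k-1}$ because $r_b$ is traversed in reverse. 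Let $s_1$ be the sub-path of $Q$ from $a$ to $b$ and let $s_2$ be the sub-path from $b'$ to $c$. Since $Q$ is a $\Phi$-path, both $s_1$ and $s_2$ lie inside the simply connected region $\Phi$, whose boundary consists of $\mathcal{Y}_1$ on one side and $\mathcal{Y}_{k-1}$ on the other.

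The heart of the proof is the following planar crossing argument. Viewing $\Phi$ as a topological disk whose boundary circulates from $s$ up $\mathcal{Y}_1$ to $t$ and then down $\mathcal{Y}_{k-1}$ back to $s$, the four points $a, c, b, b'$ lie on $\partial\Phi$. I would show that, given the choice of $r_a, r_c$ as consecutive $c_1$ runs and the fact that $b$ lies strictly above $b'$ on $\mathcal{Y}_{k-1}$, the cyclic order of these four points around $\partial\Phi$ makes the pair $\{a,b\}$ separate the pair $\{b', c\}$, i.e.\ yields a \emph{crossing} pairing. Two disjoint simple arcs inside a topological disk cannot realise a crossing pairing of their boundary endpoints, so $s_1$ and $s_2$ must intersect, contradicting the assumption that $Q$ is non-self-crossing. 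The symmetric interleaving pattern $c_{k-1}, c_1, c_{k-1}$ is handled identically.

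I expect the hardest step to be pinning down the relative order of $a$ and $c$ along $\mathcal{Y}_1$, which is precisely what makes the endpoint pairing crossing rather than parallel. My intended tool is the Jordan curve obtained by concatenating $s_1 \cdot r_b \cdot s_2$ with the segment of $\mathcal{Y}_1$ between $a$ and $c$: this curve separates $\Phi$ into two components, and a careful analysis of which component contains the sub-paths of $Q$ immediately preceding $r_a$ and immediately following $r_c$, using that those sub-paths live inside $\Phi$, cannot cross $Q$, and that their black edges are $\prec$-monotone, should fix the order of $a$ and $c$ so as to yield the crossing pairing and close the argument.
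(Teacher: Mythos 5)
Your proposal takes a genuinely different route from the paper. The paper fixes $r'$ (the first $c_1$ run) and $r$ (the last $c_{k-1}$ run before $r'$), proves that $r\subseteq\mathcal N_1$ and $r'\subseteq\mathcal N_2$, and then constructs a region $\widetilde\Phi\subseteq\Phi\cap\mathcal N_2$ bounded \emph{below} by the horizontal $\mathcal N_1/\mathcal N_2$ line and \emph{above} by a continuous segment of $Q$; it argues that $\widetilde\Phi$ contains no $\mathcal Y_{k-1}$ edges, and traps the tail of $Q$ after $r'$ inside $\widetilde\Phi$ using non-self-crossing plus the no-downward-crossing hypothesis. Your argument instead picks an alternating $c_1,c_{k-1},c_1$ triple and tries to force a boundary-interleaving contradiction for two disjoint arcs in the disk $\Phi$. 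That is a cleaner, more topological picture, and the reduction to an interleaving pairing is a nice reformulation.

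However, there is a genuine gap exactly where you say the hardest step is, and the tools you list to close it are not enough. The pairing $\{a,b\}$ versus $\{b',c\}$ is interleaving around $\partial\Phi$ if and only if $a\prec c$, i.e.\ $r_a$ lies strictly below $r_c$ on $\mathcal Y_1$. Because $r_a,r_c$ are edge-disjoint $c_1$ runs, the opposite order $c\prec a$ is a priori possible, and in that case the cyclic order is $c,a,t,b,b',s$, a nested (non-crossing) pairing, so the topological disk lemma gives nothing. The ingredients you cite for the Jordan-curve fix ($\Phi$-containment, non-self-crossing, $\prec$-monotonicity of black edges) do not suffice on their own: one can draw non-self-crossing $\Phi$-paths with $c\prec a$ if downward $\mathcal N_2\to\mathcal N_1$ crossings are permitted. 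What actually excludes $c\prec a$ is the single upward black crossing from your $Q_1\cdot Q_2$ decomposition, applied through a case analysis on where it lies relative to $r_a,r_b,r_c$: if it sits strictly between $r_a$ and $r_c$, then $r_a\subseteq\mathcal N_1$ and $r_c\subseteq\mathcal N_2$, so $\beta_a<\beta_c$ and $a\prec c$ directly (no Jordan curve needed); if it lies entirely before $r_a$ (resp.\ after $r_c$), then all three runs lie in $\mathcal N_2$ (resp.\ $\mathcal N_1$), your curve $\gamma=s_1\cup r_b\cup s_2\cup[\mathcal Y_1|_{c\to a}]$ lies wholly in $\mathcal N_2$ (resp.\ $\mathcal N_1$), its two exterior pieces inside $\Phi$ are the ``$t$-side'' containing $a^+$ and the ``$s$-side'' containing all of $\mathcal N_1\cap\Phi\ni\pi$ (resp.\ symmetrically with $c^-$, $\tau$, and $\mathcal N_2$), and the sub-path of $Q$ joining them cannot cross $\gamma$ --- contradiction. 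So your route is salvageable, but this case split and the explicit use of the $\mathcal N_1/\mathcal N_2$ equator are a substantive missing piece, not a routine bookkeeping step.
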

\begin{proof}
According to Lemma \ref{nocrossinginZnoPi1} the starting point $\pi$ of $Q$ must be in $\mathcal{N}_1$ and the ending point $\tau$ of $Q$ must be in $\mathcal{N}_2$. This implies that $Q$ has exactly one black edge $(u,u')$ which crosses from $\mathcal{N}_1$ to $\mathcal{N}_2$. Let $l$ be the space point corresponding to the crossing point of edge $(u,u')$ with the boundary separating $\mathcal{N}_1$ and $\mathcal{N}_2$ (shown with green in Figures \ref{fig:NoPiZ1} and \ref{fig:NoPiZ2}).

Since $Q$ is $(k-1)$-chromatic it must have at least one chromatic run of colour $c_1$ and at least one chromatic run of colour $c_{k-1}$. Without loss of generality, we assume that the first run of colour $c_{k-1}$ appears before the first chromatic run $r'$ of colour $c_{1}$. It is sufficient to show that there is no chromatic run of colour $c_{k-1}$ after $r'$ in $Q$.

Among all chromatic runs of colour $c_{k-1}$  before $r'$ let $r$ be the last chromatic run of colour $c_{k-1}$. We first claim that run $r$ must have all of its points in $\mathcal{N}_1$. Assume towards contradiction that our claim is not true. This means that run $r$ has all of its points in $\mathcal{N}_2$.  Notice that $r$ can not have points both in $\mathcal{N}_2$ and $\mathcal{N}_1$, since this implies that  $Q$ crosses from $\mathcal{N}_2$ to $\mathcal{N}_1$. If $r$ has all of its points in $\mathcal{N}_2$ then edge $(u,u')$ must appear before $r$. An example is shown in Figure \ref{fig:NoPiZ1}. 

Let $p$ be the sub-path of $Q$ from point $u$ to the first point of run $r$. Denote by $\phi$ the continuous segment (corresponding to path $p$) from the space point $l$ to the first point of run $r$. Notice that $\phi$ must be above the boundary separating $\mathcal{N}_1$ and $\mathcal{N}_2$. Consider the subset $\widetilde{\Phi}$ of $\Phi\cap \mathcal{N}_2$ which is described by the following two boundaries. The bottom boundary is the boundary separating $\mathcal{N}_1$ and $\mathcal{N}_2$. The top boundary is the continuous segment $\phi$.

Run $r$ appears before the first chromatic run $r'$ of colour $c_1$ in $Q$ which means that $p$ does not have a chromatic run of colour $c_1$.  By definition, all non-chromatic runs in path $Q$ and subsequently in $p$ are in $\Phi$. Therefore, there are no  red edges of path $\mathcal{Y}_1$ in $\widetilde{\Phi}$. Let $p'$ be the sub-path of $Q$ from the last point of run $r$ to the ending point $\tau$ of $Q$. Clearly, run $r'$ must be in path $p'$.

Path $p'$ can not cross the boundary from $\mathcal{N}_2$ to $\mathcal{N}_1$. This means that all runs (chromatic and non-chromatic) in $p'$ must be in $\Phi \cap \mathcal{N}_2$. Further, because $Q$ is non-self-crossing, all chromatic runs in $p'$ must be in $\widetilde{\Phi}$. There are no  red edges of path $\mathcal{Y}_1$ in $\widetilde{\Phi}$ which means that $p'$ can not have a chromatic run of colour $c_1$. However, this makes a contradiction because the chromatic run $r'$ of colour $c_1$ must be in path $p'$. We conclude that $r$ has all of its points in $\mathcal{N}_1$.

 We now claim that run $r'$ must have all of its points in $\mathcal{N}_2$. Assume towards contradiction that our claim is not true. Similarly, as before, it must be that $r'$ has all of its points in $\mathcal{N}_1$, otherwise we obtain a contradiction\footnote{If run $r'$ has point both in $\mathcal{N}_2$ and $\mathcal{N}_1$ this implies that $Q$ crosses from $\mathcal{N}_2$ to $\mathcal{N}_1$.}. This means that $r'$ (and subsequently $r$) appear before edge $(u,u')$ crossing from $\mathcal{N}_1$ to $\mathcal{N}_2$. According to Definition \ref{def:phisegment} the path from the last point of $r$ to the first point of $r'$ has a covering path since there is a continuous segment $\phi$ which connects two points on the left and right boundary of $\Phi$ (\textit{i.e.} the last point of $r$ on path $\mathcal{Y}_{k-1}$ and the first point of $r'$ on path $\mathcal{Y}_1$).
 
 The continuous segment $\phi$ is below the boundary separating $\mathcal{N}_2$ and $\mathcal{N}_1$, since $r$ and $r'$ appear before edge $(u,u')$. An example is shown in Figure \ref{fig:NoPiZ2}. All runs (chromatic and non-chromatic) in $Q$ are in $\Phi$. Further,  $Q$ is non-self-crossing. Thus, all runs  after run $r'$ in $Q$ must be below $\phi$ and subsequently below the boundary separating $\mathcal{N}_1$ and $\mathcal{N}_2$. However, this makes a contradiction since the ending point $\tau$ of $Q$ is in $\mathcal{N}_2$. We conclude that $r'$ must have all of its points in $\mathcal{N}_2$. 

We now show that $Q$ can not have a chromatic run of colour $c_{k-1}$ which appears after the first chromatic run $r'$ of colour $c_1$. Let $p$ be the sub-path  of $Q$ from point $u$ to the first point of run $r'$. Notice that path $p$ can not have a chromatic run of colour $c_{k-1}$ because run $r$ is the last chromatic run of colour $c_{k-1}$ before run $r'$ and appears before edge $(u,u')$. Let $\phi$ be the continuous segment (corresponding to path $p$)  from  space point $l$ to the first point of run $r'$.

Consider the subset $\widetilde{\Phi}$ of $\Phi \cap \mathcal{N}_2$ which is described by the following two boundaries. The bottom boundary is the boundary separating $\mathcal{N}_1$ and $\mathcal{N}_2$. The top boundary is the continuous segment $\phi$. Notice that there are not any red edges of path $\mathcal{Y}_{k-1}$ in $\widetilde{\Phi}$ since path $p$ does not have a chromatic run of colour $c_{k-1}$. Let $p'$ be the sub-path of $Q$ from the last point of run $r'$ to the ending point $\tau$. 

Path $p'$ can not cross the boundary from $\mathcal{N}_2$ to $\mathcal{N}_1$. This means that all runs (chromatic and non-chromatic) in $p'$ must be in $\Phi \cap \mathcal{N}_2$. Further, because $Q$ is non-self-crossing, all chromatic runs in $p'$ must be in $\widetilde{\Phi}$. However, there is no chromatic run of colour $c_{k-1}$ in $\widetilde{\Phi}$ and subsequently $p'$ can not have a chromatic run of colour $c_{k-1}$. Thus, there can not be a chromatic run of colour $c_{k-1}$ after run $r'$, which means that $Q$ is short $(k-1)$-chromatic.
\end{proof}

 \begin{figure}
\centering
\begin{subfigure}{.5\textwidth}
  \centering
  \includegraphics[scale=0.6]{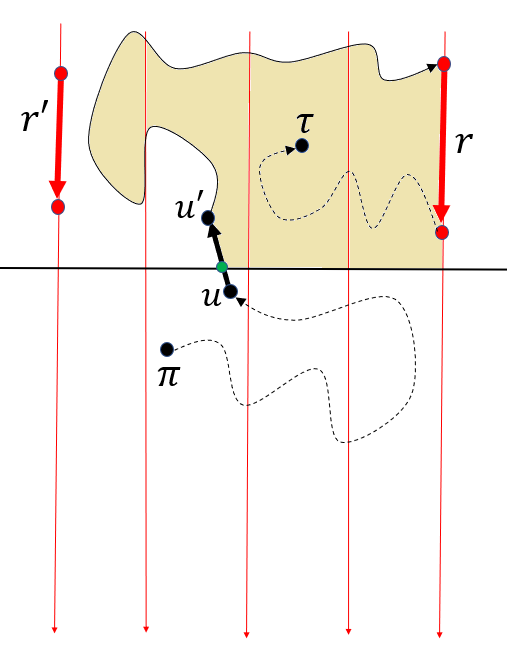}
    \caption{}
    \label{fig:NoPiZ1}
\end{subfigure}%
\begin{subfigure}{.5\textwidth}
  \centering
  \includegraphics[scale=0.6]{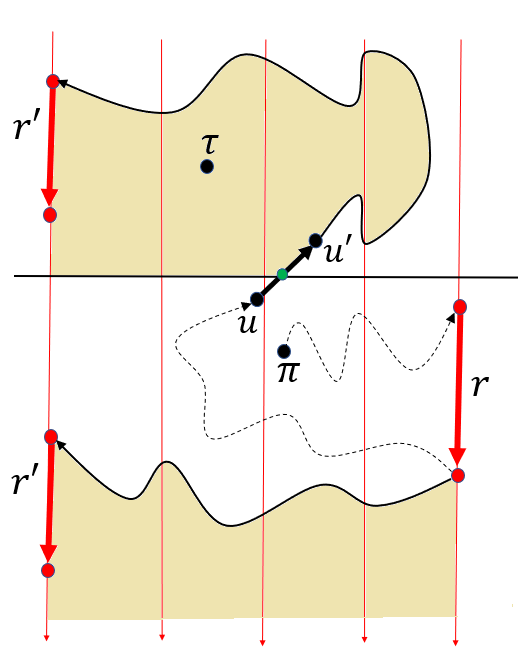}
    \caption{}
    \label{fig:NoPiZ2}
\end{subfigure}
\caption{Figures \ref{fig:NoPiZ1},\ref{fig:NoPiZ2}: The schematic representation of the proof for Lemma \ref{nocrossinginZnoPi}.}
\end{figure}
Lemma \ref{nocrossinginZnoPi} specifies the combinatorial structure of a $(k-1)$-chromatic $\Phi$-path $Q$ for the special case where $Q$ does not cross from $\mathcal{N}_2$ to $\mathcal{N}_1$. If path $Q$ crosses at least once and at most $m \leq k-1$ times from $\mathcal{N}_2$ to $\mathcal{N}_1$, we provide the following definition which will allow us to decompose $Q$ with respect to its crossings from $\mathcal{N}_2$ to $\mathcal{N}_1$.
\begin{definition}
\label{Def:winZ}
Define $w_i$  for $i=1,2,\ldots,m$ to be the last point in $\mathcal{N}_2$ before the $i^{th}$ crossing of $Q$ from $\mathcal{N}_2$ to $\mathcal{N}_1$. Let $w'_i$ be the successor point of $w_i$ in $Q$.
\end{definition} 
Without loss of generality, we assume that $m=k-1$. Let $\pi$ and $\tau$ be the starting and ending point of $Q$, respectively. We decompose path $Q$ into the following parts $(\pi,w'_1),(w'_1,w'_2),\ldots,(w'_{k-2},w'_{k-1}),(w'_{k-1},\tau)$.

For $i=1,2,\ldots,k-2$ the sub-path of $Q$ from $w'_i$ to $w_{i+1}$ does not cross from $\mathcal{N}_2$ to $\mathcal{N}_1$ since the red edge $(w_{i+1},w'_{i+1})$ denotes the next crossing of $Q$ from $\mathcal{N}_2$ to $\mathcal{N}_1$. Further, point $w'_i$ is in $\mathcal{N}_1$ and point $w_{i+1}$ is in $\mathcal{N}_2$ which means that the sub-path of $Q$ from $w'_i$ to $w_{i+1}$ has at least one point both in $\mathcal{N}_1$ and $\mathcal{N}_2$. Thus, according to Lemma \ref{nocrossinginZnoPi} we obtain the following corollary.

\begin{corollary}
\label{cor:Znopisubpath}
For $i=1,2,\ldots,k-2$ if the sub-path of $Q$ from $w'_i$ to $w_{i+1}$ is $(k-1)$-chromatic then it is short $(k-1)$-chromatic. 
\end{corollary}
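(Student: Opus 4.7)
The plan is to derive this corollary as a direct application of Lemma \ref{nocrossinginZnoPi} to the sub-path in question, essentially formalising the sketch already given in the paragraph preceding the statement. First, I would observe that any sub-path of a non-self-crossing $\Phi$-path is itself non-self-crossing, and all of its runs still lie in $\Phi$, so the sub-path of $Q$ from $w'_i$ to $w_{i+1}$ inherits the $\Phi$-path structure. The hypothesis of the corollary supplies the extra assumption that this sub-path is $(k-1)$-chromatic.

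Next, I would verify the two geometric preconditions required by Lemma \ref{nocrossinginZnoPi}. By Definition \ref{Def:winZ}, the edge $(w_{i+1}, w'_{i+1})$ is the $(i+1)$-st crossing of $Q$ from $\mathcal{N}_2$ to $\mathcal{N}_1$, while the $i$-th such crossing is the edge $(w_i, w'_i)$. Hence between $w'_i$ and $w_{i+1}$ the path $Q$ contains no red edge crossing from $\mathcal{N}_2$ to $\mathcal{N}_1$. Second, $w'_i \in \mathcal{N}_1$ (as the successor of $w_i$ after a crossing into $\mathcal{N}_1$) and $w_{i+1} \in \mathcal{N}_2$ (as the last point in $\mathcal{N}_2$ before the next crossing), so the sub-path contains points in both halves.

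Having checked all hypotheses of Lemma \ref{nocrossinginZnoPi}, its conclusion immediately gives that the sub-path is short $(k-1)$-chromatic. I do not foresee any real obstacle: the argument is mechanical, and the only care needed is to confirm that non-self-crossingness and the $\Phi$-path property are preserved under taking sub-paths, which is immediate from Definition \ref{def:Phipath} and the definition of a non-self-crossing path.
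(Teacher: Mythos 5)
Your proposal is correct and matches the paper's argument exactly: both establish that the sub-path from $w'_i$ to $w_{i+1}$ contains no crossing from $\mathcal{N}_2$ to $\mathcal{N}_1$, has points in both halves, and inherits the $\Phi$-path and non-self-crossing properties, and then invoke Lemma~\ref{nocrossinginZnoPi}. Your added remark that sub-paths of non-self-crossing $\Phi$-paths are themselves non-self-crossing $\Phi$-paths is a small but welcome explicitness that the paper leaves implicit.
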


\begin{lemma}
\label{lemma:ZnoPiCord}
For $i=1,2,\ldots,k-2$ if the sub-path of $Q$ from $w'_i$ to $w'_{i+1}$ is $(k-1)$-chromatic then it is short $(k-1)$-chromatic.
\end{lemma}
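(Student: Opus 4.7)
I would decompose the sub-path from $w'_i$ to $w'_{i+1}$ as $p = p' \cup \{e\}$, where $p'$ is the sub-path from $w'_i$ to $w_{i+1}$ and $e = (w_{i+1}, w'_{i+1})$ is the red crossing edge lying on some path $\mathcal{Y}_j$, and hence has colour $c_j$, $j \in [1,k-1]$. By the definition of $w_{i+1}$, the sub-path $p'$ does not cross from $\mathcal{N}_2$ to $\mathcal{N}_1$, so Corollary~\ref{cor:Znopisubpath} and the proof technique of Lemma~\ref{nocrossinginZnoPi} apply directly to $p'$. Since a single edge introduces at most one new colour, $p$ being $(k-1)$-chromatic forces $p'$ to be either $(k-1)$-chromatic, or $(k-2)$-chromatic with $c_j$ as its unique missing colour.

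The plan is to first dispose of the sub-cases where the short property of $p$ transfers easily. If $p'$ is $(k-2)$-chromatic missing an extreme colour (so $j \in \{1,k-1\}$), then the unique run of $p$ of that extreme colour is the final run containing $e$, which lies after every run of the opposite extreme colour (all of which are contained in $p'$), so $p$ is short. Otherwise $p'$ contains runs of both $c_1$ and $c_{k-1}$; the proof of Lemma~\ref{nocrossinginZnoPi} relies only on the presence of both extreme colours together with the no-cross-back and $\Phi$-path properties of $p'$, and therefore forces the $c_1$-versus-$c_{k-1}$ runs of $p'$ to be short-ordered. Whenever $j \notin \{1,k-1\}$, the middle-coloured edge $e$ does not disturb the $c_1$/$c_{k-1}$ ordering and $p$ inherits the short property from $p'$.

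The hard part is the remaining sub-case: $p'$ is $(k-1)$-chromatic and $j \in \{1,k-1\}$. By symmetry assume $j = 1$, so $e$ lies on the leftmost path $\mathcal{Y}_1$. Corollary~\ref{cor:Znopisubpath} gives that $p'$ is short in one of two directions, and if all $c_{k-1}$ runs of $p'$ precede all $c_1$ runs of $p'$, appending the $c_1$ edge $e$ at the end preserves that ordering and $p$ is short. The opposite direction has to be ruled out, and this is where I expect the main work. Let $R_1$ be the last $c_1$ run of $p'$ and $R_{k-1}$ its first $c_{k-1}$ run; by the orientation-swapped version of the proof of Lemma~\ref{nocrossinginZnoPi}, $R_1 \subseteq \mathcal{N}_1$ lies on $\mathcal{Y}_1$ and $R_{k-1} \subseteq \mathcal{N}_2$ lies on $\mathcal{Y}_{k-1}$. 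Since $\mathcal{Y}_1$ is strictly monotone in $\beta$, the point $w'_{i+1}$ is the unique last point of $\mathcal{Y}_1 \cap \mathcal{N}_1$, which, combined with the simplicity of $Q$, pins down the positions of the endpoints of $R_1$ on $\mathcal{Y}_1$ relative to $w'_{i+1}$. I then plan to bound a region $\widetilde{\Phi} \subseteq \Phi$ by the continuous segment of $p$ from the end of $R_1$ through $R_{k-1}$ and $e$ to $w'_{i+1}$, together with the appropriate arc of $\mathcal{Y}_1$, and to derive the contradiction by showing that the non-self-crossing of $p$ combined with the $\Phi$-path property leaves no room inside $\widetilde{\Phi}$ for the arc of $\mathcal{Y}_{k-1}$ supporting $R_{k-1}$.
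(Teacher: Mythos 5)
Your decomposition $q_i = q'_i \cup \{e\}$ and the reduction to ruling out the sub-case where the extra red edge $e$ has an extreme colour in the ``wrong'' direction is exactly the paper's strategy, and your preliminary case analysis is actually a bit more thorough than the paper's write-up: the paper invokes Corollary~\ref{cor:Znopisubpath} and states a ``without loss of generality'' without explicitly discussing the sub-case where $q'_i$ is only $(k-2)$-chromatic, whereas you separate the $(k-2)$-chromatic possibilities and observe that the generalized reading of Lemma~\ref{nocrossinginZnoPi} still supplies the needed short-ordering of the $c_1$/$c_{k-1}$ runs of $q'_i$.

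The gap is in the final contradiction. The paper's region $\widetilde{\Phi}$ is bounded above by the continuous segment $\phi$ of the sub-path of $q'_i$ from $u$ (where $(u,u')$ is the unique black edge of $q'_i$ crossing from $\mathcal{N}_1$ to $\mathcal{N}_2$) to the \emph{first} point of $r'=R_{k-1}$, and below by the $\mathcal{N}_1/\mathcal{N}_2$ boundary line. The decisive property is that this $\phi$ contains no $c_1$ run, so no edge of $\mathcal{Y}_1$ lies in $\widetilde{\Phi}$; hence if $e$ is a $\mathcal{Y}_1$-edge, $w_{i+1}\in\mathcal{Y}_1\cap\mathcal{N}_2$ lies outside $\widetilde{\Phi}$, while the last point of $r'$ lies on the boundary of $\widetilde{\Phi}$ with the remainder $p'$ of $q'_i$ heading inward. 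Since $p'$ is confined to $\Phi\cap\mathcal{N}_2$ it cannot exit through the boundary line, through $\mathcal{Y}_1$, or through $\mathcal{Y}_{k-1}$, so it must cross $\phi$, giving a self-crossing of $Q$. Your region is built from the much longer segment of $q_i$ running from the end of $R_1$ through $R_{k-1}$ and $e$ all the way to $w'_{i+1}$, closed off by an arc of $\mathcal{Y}_1$. With that choice $R_{k-1}$ sits \emph{on the boundary} of your $\widetilde{\Phi}$, not in its interior, so the stated obstruction --- ``no room inside $\widetilde{\Phi}$ for the arc of $\mathcal{Y}_{k-1}$ supporting $R_{k-1}$'' --- does not in itself produce a contradiction, and you never identify which self-crossing of $Q$ or which illegal escape from $\Phi$ would be forced. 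To close the argument you need, as the paper does, a concrete pair of objects (one provably inside the region, one provably outside, here $w_{i+1}$ versus the last point of $r'$) joined by a piece of $Q$ that demonstrably cannot exit except by crossing a part of $Q$ already traversed.
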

\begin{proof}
For any $i \in [1,k-2]$ let $q_i$ be the sub-path of $Q$ from $w'_i$ to $w'_{i+1}$ and let $q'_i=q_i \setminus \{(w_{i+1},w'_{i+1}\}$ be the sub-path of $q_i$ without the last red edge $(w_{i+1},w'_{i+1})$.  According to Corollary \ref{cor:Znopisubpath} if path $q'_i$ is $(k-1)$-chromatic then it must be short $(k-1)$-chromatic. Without loss of generality, we assume that all chromatic runs of colour $c_1$ appear before all chromatic runs of colour $c_{k-1}$ in $q'_i$. 

We show that if we construct $q_i$ from  $q'_i$  by adding the red edge $(w_{i+1},w'_{i+1})$, then the short $(k-1)$-chromatic condition is preserved. That is, all chromatic runs of colour $c_1$ appear before all chromatic runs of colour $c_{k-1}$ in $q_i$.

Let $c_j$ where $j \in [1,k-1]$ be the colour of the red edge $(w_{i+1},w'_{i+1})$. If $j \in [2,k-1]$ then clearly our claim is true.
That is, the red edge  $(w_{i+1},w'_{i+1})$ is not of colour $c_1$ and therefore all chromatic runs of colour $c_1$ appear before all chromatic runs of colour $c_{k-1}$ in path $q_i$. Thus, it remains to show that $j \neq 1$. Assume towards contradiction that $j=1$. 

 Notice that since $w'_i \in \mathcal{N}_1$ and $w_{i+1} \in \mathcal{N}_2$, path $q'_i$ must have exactly one black edge $(u,u')$ which crosses from $\mathcal{N}_1$ to $\mathcal{N}_2$. Let $l$ be the space point corresponding to the crossing point of edge $(u,u')$ with the boundary separating $\mathcal{N}_1$ and $\mathcal{N}_2$. Let $r$ be the last chromatic run of colour $c_1$ before the  first chromatic run $r'$ of colour $c_{k-1}$ in  $q'_i$.  Following the same methodology as in Lemma \ref{nocrossinginZnoPi}, we can obtain  that edge $(u,u')$ must appear between $r$ and $r'$. This means that $r$ has all of its points in $\mathcal{N}_1$ and run $r'$ has all of its points in $\mathcal{N}_2$. 

Let $p$ be the sub-path of  $q'_i$ from point $u$ to the first point of run $r'$ and let $\phi$ be the continuous segment (corresponding to this path) from the space point $l$  to the first point of run $r'$. All points  after $u$ in $q'_i$ must be in $\mathcal{N}_2$ and therefore the continuous segment $\phi$ is above the boundary separating $\mathcal{N}_1$ from $\mathcal{N}_2$. Further, run $r$ is the last chromatic run of colour $c_1$ before $r'$ and $r$ appears before edge $(u,u')$. Thus, path $p$ does not have a chromatic run of colour $c_1$.

   Consider the subset $\widetilde{\Phi}$ of $\Phi \cap \mathcal{N}_2$ which is described with the following two boundaries. The top   boundary  is the continuous segment $\phi$. The bottom boundary is the boundary separating $\mathcal{N}_1$ from $\mathcal{N}_2$. An example is shown in Figure \ref{fig:mainLemmaZnoPi1}.  There are not any red edges of path $\mathcal{Y}_1$ in  $\widetilde{\Phi}$ since path $p$ does not traverse any chromatic run of colour $c_1$. 
   
 Therefore, if $j=1$, that is, the red edge $(w_{i+1},w'_{i+1})$ is of colour $c_1$, then point $w_{i+1} \in \mathcal{N}_2$ must be in the exterior of $\widetilde{\Phi}$. Clearly, the last point of run $r'$ is in $\widetilde{\Phi}$. Let $p'$ be the sub-path of $q'_i$ from  the last point of run $r'$ to point $w_{i+1}$. All runs (chromatic and non-chromatic) in $p'$ must be in $\Phi$. Path $p'$ can not cross the boundary separating $\mathcal{N}_1$ and $\mathcal{N}_2$. Thus, all runs in $p'$ must be in $\Phi \cap \mathcal{N}_2$.  
 
 Therefore, path $p'$ must cross with the top boundary of $\widetilde{\Phi}$ (\textit{i.e.} the continuous segment $\phi$ corresponding to path $p$) since $w_{i+1}$ is on the exterior of  $\widetilde{\Phi}$ and the last point of run $r'$ is in $\widetilde{\Phi}$. This, implies that path $p'$ crosses with path $p$, which makes a contradiction.
\end{proof}

Figure \ref{fig:worstcaseZ} shows an example of a non-self-crossing $\Phi$-path $Q$ which crosses from $\mathcal{N}_2$ to $\mathcal{N}_1$ exactly $(k-1)$ times for $k=6$. Observe that the sub-path of $Q$ from $w'_i$ to $w'_{i+1}$ for $i=1,2,\ldots,k-2$ is short $(k-1)$-chromatic.
 \begin{figure}
\centering
\begin{subfigure}{.5\textwidth}
  \centering
   \includegraphics[scale=0.7]{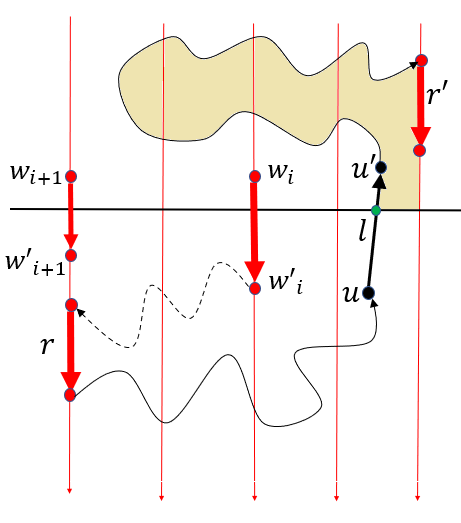}
    \caption{}
    \label{fig:mainLemmaZnoPi1}
\end{subfigure}%
\begin{subfigure}{.5\textwidth}
  \centering
  \includegraphics[scale=0.62]{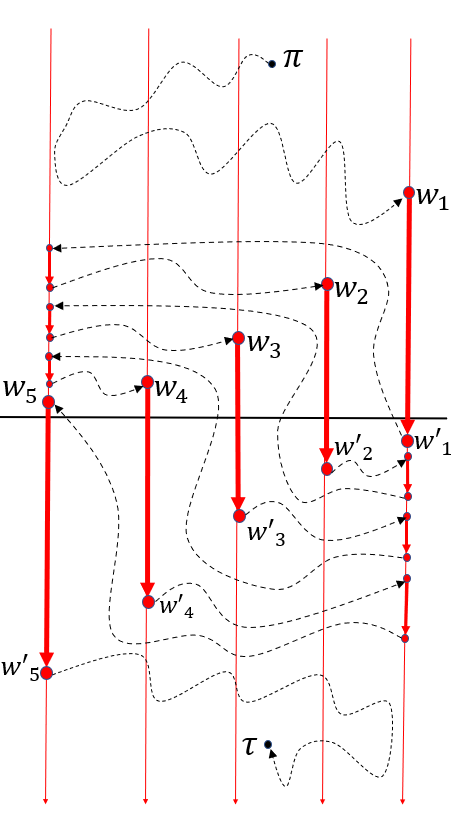}
    \caption{}
    \label{fig:worstcaseZ}
\end{subfigure}
\caption{Figure \ref{fig:mainLemmaZnoPi1}: The subset $\widetilde{\Phi}$ of $\Phi \cap \mathcal{N}_2$ is shown with the shaded territory.
Figure \ref{fig:worstcaseZ}: An example of a non-self-crossing $\Phi$-path $Q$ from a point $\pi$ to a point $\tau$ which crosses exactly $(k-1)$ times from $\mathcal{N}_2$ to $\mathcal{N}_1$ (for $k=6$).}
\end{figure}

\begin{proof}[Proof of Theorem \ref{algo:Znopi}]
For a sub-network $\mathcal{N}[\beta_1,\beta_2]$ consider a  $\Phi$-path $Q$ in this sub-network. Without loss of generality, we assume that $Q$ is $(k-1)$-chromatic and that $Q$ has points both in $\mathcal{N}_1$ and $\mathcal{N}_2$.  Recall that when algorithm $\mathcal{Z}_k$ is applied to sub-network $\mathcal{N}[\beta_1,\beta_2]$ the computation consists of three phases. 

The first and third phase call algorithm $\mathcal{Z}_k$ recursively on $\mathcal{N}_2$ and $\mathcal{N}_1$, respectively. The second phase consists of two steps. The first step calls algorithm $\Delta(\mathcal{Y}_1,\mathcal{Y}_2,\ldots,\mathcal{Y}_{k-1})$. The second step consists of $(k-2)$ iterations, where each iteration performs two calls of algorithm $\widehat{\mathcal{A}}_{k-1}$ on sub-network $\mathcal{N}[\beta_1,\beta_2]$. 

If $Q$ does not cross from $\mathcal{N}_2$ to $\mathcal{N}_1$, according to Lemma \ref{nocrossinginZnoPi} we have that $Q$ short $(k-1)$-chromatic. According to Lemma \ref{followatmostshortnoPi} when algorithm $\widehat{\mathcal{A}}_{k-1}$ is applied twice to sub-network $\mathcal{N}[\beta_1,\beta_2]$  the computation follows every non-self-crossing path which short $(k-1)$-chromatic. Algorithm $\mathcal{Z}_k$ includes at least two calls to algorithm $\widehat{\mathcal{A}}_{k-1}$ and therefore the computation follows path $Q$. 

If $Q$ crosses at least once and at most $m \leq k-1$ from $\mathcal{N}_2$ to $\mathcal{N}_1$, recall that according to Definition \ref{Def:winZ}, we define $w_i$ for $i=1,2,\ldots,m$ to be the last point in $Q$ before the $i^{th}$ crossing from $\mathcal{N}_2$ to $\mathcal{N}_1$. We denote by $w'_i$ the successor of $w_i$ in $Q$.  Consider the decomposition of $Q$ into the following parts $(\pi,w'_1),(w'_1,w'_2),\ldots,(w'_{k-2},w'_{k-1})(w'_{k-1},\tau)$ where $\pi$ and $\tau$ is the starting and ending point of $Q$, respectively. 

If the  the sub-path of $Q$ from $\pi$ to $w_1$ is not empty then it can only have points in $\mathcal{N}_2$. Similarly, if the  the sub-path of $Q$ from $w'_{k-1}$ to $\tau$ is not empty then it can only have points in $\mathcal{N}_1$. We now assume by induction that Theorem \ref{algo:Znopi} holds for $k$ and a sub-network of size less than $n$.

By the induction hypothesis, when algorithm $\mathcal{Z}_k$ is applied to sub-network $\mathcal{N}_2$ the computation follows every $\Phi$-path that has only points in $\mathcal{N}_2$. Thus, the computation of the  first phase (\textit{i.e.} first recursive call) follows follows the sub-path of $Q$ from $\pi$ to $w_1$. 

The second phase consists of two steps. The first step calls algorithm $\Delta(\mathcal{Y}_1,\mathcal{Y}_2,\ldots,\mathcal{Y}_{k-1})$ whose computation follows the red edge $(w_1,w'_1)$. We now claim that for $i=1,2,\ldots,k-2$ the computation in the $i^{th}$ iteration of the second step, follows the sub-path of $Q$ from $w'_i$ to $w'_{i+1}$. According to Lemma \ref{lemma:ZnoPiCord}, the sub-path of $Q$ from $w'_i$ to $w'_{i+1}$ can be either at most $(k-2)$-chromatic or short $(k-1)$-chromatic 

In the former case, according to Lemma \ref{followatmostnoPi} when algorithm $\widehat{\mathcal{A}}_{k-1}$ is applied to sub-network $\mathcal{N}[\beta_1,\beta_2]$ the computation follows every non-self-crossing path which is at most $(k-2)$-chromatic. In the latter case, according to  Lemma \ref{followatmostshortnoPi} when algorithm $\widehat{\mathcal{A}}_{k-1}$ is applied twice to sub-network $\mathcal{N}[\beta_1,\beta_2]$  the computation follows every non-self-crossing path which is short $(k-1)$-chromatic. 

The $i^{th}$ iteration  of the second step for $i=1,2,\ldots,k-2$ consists of two applications of algorithm  $\widehat{\mathcal{A}}_{k-1}$ to the sub-network. We conclude that the computation in the $i^{th}$ iteration for $i=1,2,\ldots,k-2$ follows  the sub-path of $Q$ from $w'_i$ to $w'_{i+1}$.

By the induction hypothesis, when algorithm $\mathcal{Z}_k$ is applied to sub-network $\mathcal{N}_1$ the computation follows every $\Phi$-path that has only points in $\mathcal{N}_2$. Thus, the computation in the  third  phase (\textit{i.e.} second recursive call) follows follows the sub-path of $Q$ from $w'_{k-1}$ to $\tau$.  This completes the proof. 
\end{proof}

\subsection{Proof of Theorem \ref{Thm:coordinationnoPi} and \ref{Thm1noPi} for $k \geq 3$}
\label{sec:knoPi}
\begin{proof}[Proof of Theorem \ref{Thm:coordinationnoPi} for $k \geq 3$]
Recall that when algorithm $\mathcal{C}_k$ is applied to sub-network $\mathcal{N}[\beta_1,\beta_2]$ the computation consists of $(k-1)$ iterations where each iteration has three steps (see algorithm \ref{PseudoCmainnoPi}). The first and third step call algorithm $\widehat{\mathcal{A}}_{k-1}$ on the sub-network  while the second step calls algorithm $\mathcal{Z}_k$ on the sub-network.

For a sub-network $\mathcal{N}[\beta_1,\beta_2]$ let $Q$ be a non-self-crossing path in the sub-network. Consider the sub-path $q_{xx'}$ of $Q$, according to Definitions \ref{Def:x} and \ref{Def:x'}. If  $q_{xx'}$ does not cross from $\mathcal{N}_2$ to $\mathcal{N}_1$ it must be that $x \in \mathcal{N}_1$,$x' \in \mathcal{N}_2$ and path $q_{xx'}$ consists of a single black edge $(x,x')$.

According to Lemma \ref{nocrossinginZnoPi} when algorithm $\widehat{\mathcal{A}}_{k-1}$ is applied to sub-network  $\mathcal{N}[\beta_1,\beta_2]$ the computation follows every non-self-crossing path which is at most $(k-2)$-chromatic. Algorithm $\mathcal{C}_k$ includes at least one call to algorithm $\widehat{\mathcal{A}}_{k-1}$ and therefore the computation of $\mathcal{C}_k$ follows the sub-path $q_{xx'}$ of $Q$. 

If $q_{xx'}$ crosses at least once from $\mathcal{N}_2$ to $\mathcal{N}_1$, then decompose $q_{xx'}$ into the following parts $(w_1,w_2),\ldots,(w_{k-1},w_k)$, where $w_i$ for $i=1,2,\ldots,k$ is the last point in $\mathcal{N}_1$ before the $i^{th}$ crossing of $q_{xx'}$ from $\mathcal{N}_1$ to $\mathcal{N}_2$ (see subsection \ref{sec:algoCknoPi}).

Without loss of generality, assume that the sub-path $q_i$ of $q_{xx'}$ from $w_i$ to $w_{i+1}$ for $i=1,2,\ldots,k-1$ is $(k-1)$-chromatic. Decompose $q_i$ into $(w_i,\pi_i,\tau_i,w_{i+1})$, according to Definition \ref{def:pointssplit}, such that the path from $w_i$ to $\pi_i$  (resp. $\tau_i$ to $w_{i+1}$) is a maximal $(k-2)$-chromatic path (see  Figure~\ref{fig:description}). 

We claim that the computation in the $i^{th}$ iteration $\mathcal{C}^i_k$ of algorithm $\mathcal{C}_k$ for $i=1,2,\ldots,k-1$ follows the sub-path of $q_{xx'}$ from $w'_i$ to $w'_{i+1}$. The first step in  $\mathcal{C}^i_k$  calls algorithm $\widehat{\mathcal{A}}_{k-1}$ on the sub-network, whose computation, follows any at most $(k-2)$-chromatic non-self-crossing path in the sub-network, according to Lemma \ref{followatmostnoPi}. Thus, the computation of the first step follows the path from $w_i$ to $\pi_i$.  

According to Lemma \ref{lemma:CoordinationCknoPi} the path from $\pi_i$ to $\tau_i$ is a $\Phi$-path (\textit{i.e.} all runs are in $\Phi$). The second step in  $\mathcal{C}^i_k$  calls algorithm $\mathcal{Z}_k$ whose computation follows every $\Phi$-path in the sub-network,  according to Theorem \ref{algo:Znopi}. Thus, the computation of the second step follows the path from $\pi_i$ to $\tau_i$. 

The third step in  $\mathcal{C}^i_k$  calls algorithm $\widehat{\mathcal{A}}_{k-1}$ on the sub-network, whose computation, follows any at most $(k-2)$-chromatic non-self-crossing path in the sub-network according to Lemma \ref{followatmostnoPi}. Thus, the computation of the third step follows the path from $\tau_i$ to $w_{i+1}$. This completes the proof.
\end{proof}

\begin{proof}[Proof of Theorem \ref{Thm1noPi} for $k \geq 3$]
The proof follows by induction. The base case of the induction is $k=2$, where according to  Theorem \ref{Theorem:specialcasek=2} when algorithm  $\mathcal{A}_2$ is applied to a sub-network $\mathcal{N}[\beta_1,\beta_2]$ the computation follows every path in the sub-network. We now assume by induction that Theorem \ref{Thm1noPi} holds for any $k'<k$. We also assume by induction that Theorem \ref{Thm1noPi} holds for $k$ and a sub-network of size less than $n$.

Let $Q$ be a non-self-crossing path in a sub-network $\mathcal{N}[\beta_1,\beta_2]$. Consider the decomposition of $Q$ into $(Q_x,q_{xx'},Q_{x'})$ according to Definitions \ref{Def:x} and \ref{Def:x'}. Recall that if path $Q_x$ (resp. $Q_{x'})$ is not empty then all points in this path are  in $\mathcal{N}_1$ (resp. $\mathcal{N}_2$).

Recall that when algorithm $\mathcal{A}_k$ is applied to sub-network $\mathcal{N}[\beta_1,\beta_2]$ the computation consists of three phases (see algorithm \ref{PseudoAmainnoPi}). The first and third phase call algorithm $\mathcal{A}_k$ recursively on $\mathcal{N}_1$ and $\mathcal{N}_2$, respectively. The second phase calls algorithm $\mathcal{C}_k$ on the sub-network.

By the induction hypothesis, when algorithm $\mathcal{A}_k$ is applied  recursively to $\mathcal{N}_1$ the computation  follows the sub-path $Q_x$ of $Q$ since $Q_x$ has only points in $\mathcal{N}_1$. According to Theorem \ref{Thm:coordinationnoPi} when algorithm $\mathcal{C}_k$ is applied to $\mathcal{N}[\beta_1,\beta_2]$ the computation follows the sub-path $q_{xx'}$ of $Q$. By the induction hypothesis,  when algorithm $\mathcal{A}_k$ is applied to $\mathcal{N}_2$ the computation follows the sub-path $Q_{x'}$ of $Q$ since it has only points in $\mathcal{N}_2$.

For $k \ge 2$, the running time of algorithm $\mathcal{A}_{k}$ over $n$ points is given by the following relationship $\mathcal{T}^k_A(n)=\mathcal{T}^k_A(\frac{n}{2})+\mathcal{T}^k_C(n)+ \mathcal{T}^k_A(\frac{n}{2})$, where $\mathcal{T}^k_C(n)$ is the running time of algorithm $\mathcal{C}_k$ over $n$ points. The running time of algorithm $\mathcal{C}_k$ is given by the following relationship $\mathcal{T}^k_C(n)=(k-1)[\mathcal{T}^{k-1}_A(n)+\mathcal{T}^k_Z(n)+\mathcal{T}^{k-1}_A(n)]$, where $\mathcal{T}^k_Z(n)$ is the running time of algorithm $\mathcal{Z}_k$ over $n$ points.

The running time of algorithm $\mathcal{Z}_k$ over $n$ points is given by the relationship $\mathcal{T}^k_Z(n)=\mathcal{T}^k_Z(\frac{n}{2})+k^2 \mathcal{T}^{k-1}_A(n)+\mathcal{T}^k_Z(\frac{n}{2})+O(n)$. By solving the recurrence relationship we obtain that $\mathcal{T}^k_Z(n)=O(k^2\log n) \mathcal{T}^{k-1}_A(n)$ and $\mathcal{T}^k_C(n)=O(k^3 \log n)\mathcal{T}^{k-1}_A(n)$ which results into $\mathcal{T}^k_A(n)=O(k^{3k}\log^{2k}n) \mathcal{T}^1_A(n)$. Putting everything together, we conclude that $\mathcal{T}^k_A(n)=O(k^{3k}n \log^{2k+3}n)$.
\end{proof}

\section{Proof of Theorem \ref{maintheoremuncrossing} for $k \geq 3$}
\label{noncrossingpathsfork}
In this section we show the proof of Theorem \ref{maintheoremuncrossing}. That is, for a set of points $\mathcal{P}_k$ such that all points can be covered with $k$ paths we show an algorithm $\widetilde{U}_k$ with the running time of $O(k n\log n)$ which computes a collection of $k$ node-disjoint non-crossing paths covering all points in $\mathcal{P}_k$.
For $k \geq 3$ consider the implicit plane $\alpha-\beta$ representation of the directed acyclic graph $G^*$ with the points in $\mathcal{P}$ (see sub-section \ref{sub:plane}). Recall that for an edge $(u,u')$ in $G^*$, point $u \in \mathcal{P}$ is (Pareto) dominated by point $u' \in \mathcal{P}$ ($\alpha_u \le \alpha_v$ and $\beta_u \le \beta_v$), which is denoted by $u \prec u'$. An edge $(u,u')$ in $G^*$ can be seen as a closed segment $[u,u']$ in the plane representation. An $s-t$ path in $G^*$, forms a continuous segment on the plane which consists of a concatenation of straight line segments, corresponding to the edges of this path.

Recall that two node-disjoint edges $(u,u')$ and $(x,x')$ in $G^*$
 cross, if the two (closed) segments $[u,u']$ and $[x,x']$ in the plane have a common point. All points are in general position and therefore  the common point $\pi$ on the closed segments $[u,u']$ and $[x,x']$ is a \textit{crossing} point, which is geometrically defined with coordinates $\alpha_{\pi}$ and $\beta_{\pi}$, but it does not correspond to a point in $\mathcal{P}$ and subsequently to a node in $G^*$. 

\begin{lemma}
\label{l5}
If two edges $(u,u')$ and  $(x,x')$ in $G^*$ cross then $G^*$ has edges $(u,x')$ and $(x,u')$.
\end{lemma}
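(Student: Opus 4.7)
The plan is to argue directly from the coordinate characterization of edges in $G^*$: recall that $(a,b)$ is an edge of $G^*$ if and only if $a \prec b$, i.e. $\alpha_a \le \alpha_b$ and $\beta_a \le \beta_b$. So I need to show that the crossing hypothesis forces $u \prec x'$ and $x \prec u'$.

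First, I would use the hypothesis that the two closed segments $[u,u']$ and $[x,x']$ share a common geometric point $\pi$, with coordinates $(\alpha_\pi,\beta_\pi)$. Membership of $\pi$ in $[u,u']$ (which is a segment of a non-decreasing edge, since $u\prec u'$) yields $\alpha_u \le \alpha_\pi \le \alpha_{u'}$ and $\beta_u \le \beta_\pi \le \beta_{u'}$. Symmetrically, $\pi \in [x,x']$ gives $\alpha_x \le \alpha_\pi \le \alpha_{x'}$ and $\beta_x \le \beta_\pi \le \beta_{x'}$.

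Next, I would chain these inequalities through $\pi$. Combining $\alpha_u \le \alpha_\pi \le \alpha_{x'}$ and $\beta_u \le \beta_\pi \le \beta_{x'}$ yields $u \prec x'$; combining $\alpha_x \le \alpha_\pi \le \alpha_{u'}$ and $\beta_x \le \beta_\pi \le \beta_{u'}$ yields $x \prec u'$. Distinctness of the endpoints required by $\prec$ is guaranteed by node-disjointness of the two edges (in particular $u \ne x'$ and $x \ne u'$, since all four endpoints are distinct points of $\mathcal{P}$ in general position). Therefore both $(u,x')$ and $(x,u')$ are edges of $G^*$.

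There is essentially no obstacle here; the only thing to be slightly careful about is the case $\pi \in \mathcal{P}$, but the general-position assumption made in subsection~\ref{sub:plane} rules this out, so $\pi$ is a genuine crossing point rather than an endpoint and the weak inequalities above are all that is needed. The proof is therefore a one-line manipulation of Pareto dominance through the common point $\pi$.
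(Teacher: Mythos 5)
Your proof is correct and takes essentially the same route as the paper's: both observe that the crossing point $\pi$ lies on each closed segment and hence is Pareto-dominated by the upper endpoint and dominates the lower endpoint of each edge, then chain $u \prec \pi \prec x'$ and $x \prec \pi \prec u'$. You simply unfold the $\prec$ relation into explicit coordinate inequalities and add the remark on distinctness and general position, which the paper leaves implicit.
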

\begin{proof}
It suffices to show that that $u \prec x'$ and $x \prec u'$. 
Clearly, for edges $(u,u')$ and $(x,x')$ we have $u \prec u'$ and $x \prec x'$, respectively.
Let $\pi$ be the crossing point of edge of the two closed segments $[u,u']$ and $[x,x']$. We have that $u \prec \pi \prec u'$ (resp. $x \prec \pi \prec x'$) since the crossing point $\pi$ is on the closed segment $[u,u']$ (resp. $[x,x']$). This implies that $u \prec \pi \prec x'$ and $x \prec \pi \prec u'$. 
\end{proof}

 We say that two paths in $G^*$ cross if there is an edge of the first path crossing with an edge of the second path. From now on we consider explicitly the plane representation of $G^*$ with the points in $\mathcal{P}$. 
 \begin{definition}
 For $k \geq 3$, let $\mathcal{P}_k\subseteq \mathcal{P} $ be a point set such that all points in $\mathcal{P}_k$ can be covered by $k$ node-disjoint $s-t$ paths $(Y_1,Y_2,\ldots,Y_k)$.
 \end{definition}

To facilitate analysis, we denote a collection of $k$ node-disjoint (but not necessarily non-crossing) paths by $(Y_1,Y_2,\ldots,Y_k)$ and a collection of $k$ node-disjoint, non-crossing paths by $(\mathcal{Y}_1,\mathcal{Y}_2,\ldots,\mathcal{Y}_k)$. 
We show an $O(kn\log n )$-time algorithm $\widetilde U_k$ for $k \geq 3$, which takes as an input a point set $\mathcal{P}_k$ and gives as an output  a collection $(\mathcal{Y}_1,\mathcal{Y}_2,..,\mathcal{Y}_k)$ $k$ node-disjoint, non-crossing $s-t$ paths. The paths $(\mathcal{Y}_1,\mathcal{Y}_2,\ldots,\mathcal{Y}_k)$ are given in left to right order in their planar representation, that is, $\mathcal{Y}_1$ is the leftmost path and $\mathcal{Y}_k$ is the rightmost path. 

The rest of the section is organized as follows: In Subsection \ref{collection}, based on a simple argument, we show that for a point set $\mathcal{P}_k$ there exists at least one collection  of $k$ node-disjoint, non-crossing paths $(\mathcal{Y}_1,\mathcal{Y}_2,..,\mathcal{Y}_k)$ that cover all points in $\mathcal{P}_k$. In Subsection \ref{selection} we discuss a subroutine algorithm $\mathcal{S}$ which is employed by algorithm $\widetilde U_k$. Given a point set $\mathcal{P}_k$, algorithm $\mathcal{S}$  computes a collection of $k$ node-disjoint (but not necessarily non-crossing) paths by $(Y_1,Y_2,\ldots,Y_k)$ covering all points in $\mathcal{P}_k$. The collection of paths obtained by algorithm $\mathcal{S}$ satisfies some geometrical properties which we use in the analysis of algorithm $\widetilde U_k$. In Subsection \ref{algoUK} we provide the detailed description of algorithm $\widetilde U_k$ and show the proof of Theorem \ref{maintheoremuncrossing}.

\subsection{Existence of $k$ non-crossing paths}
\label{collection}
For a point set $\mathcal{P}_k$ let $(Y_1,Y_2,..,Y_k)$ be a collection of $k$ node-disjoint $s-t$ paths (but not necessarily non-crossing) covering all points in $\mathcal{P}_k$. 

\begin{definition}
For two indexes $i,j \in [1,k]$ such that $i \neq j$ and two crossing edges $(u,u') \in Y_i$ and $(x,x') \in Y_j$, we define operation uncross which replaces edge $(u,u')\in Y_i$ and edge $(x,x') \in Y_j$ with edge $(u,x')$ and $(x,u')$, respectively.
\end{definition}

Essentially, for a collection  $Y=(Y_1,Y_2,..,Y_k)$ of $k$ paths covering all points in $\mathcal{P}_k$, operation uncross takes as an input two crossing edges $(u,u') \in Y_i$ and $(x,x') \in Y_j$ and outputs a  collection $Y'=Y \setminus{\{Y_i,Y_j\}} \cup \{Y'_i,Y'_j\}$ of $k$ paths. Path $Y'_i$ consists of the sub-path of $Y_i$ from $s$ to point $u$, edge $(u,x')$ and the sub-path of $Y_j$ from $x'$ to $t$. Similarly, path $Y'_j$ consists of the  sub-path of $Y_j$ from $s$ to point $x$, edge $(x,u')$ and the sub-path of $Y_i$ from $u'$ to $t$. 

Notice that $Y'$ is also a collection of $k$ node-disjoint $s-t$ paths, covering all points in $\mathcal{P}_k$. That is, for two crossing edges $(u,u') \in Y_i$ and $(x,x') \in Y_j$  operation uncross simply removes point $u'$ from $Y_i$ (resp. point $x'$ from $Y_{j}$) and adds point $u'$ to path $Y_j$ (resp. point $x'$ to path $Y_i$). Therefore, every point in $\mathcal{P}_k$ belongs to a path in $Y'$.

For two crossing edges $(u,u') \in Y_i$ and $(x,x') \in Y_j$ and an operation uncross, notice that the length of the closed segment $[u,x']$ is smaller than the length of the closed segments $[u,\pi]$ and $[\pi,x']$ because of the triangle inequality. Similarly, the length of the closed segment $[x,u']$ is smaller than the length of the closed segments $[x,\pi]$ and $[\pi,u']$. 

Therefore, starting from an arbitrary collection $(Y_1,Y_2,..,Y_k)$ of $k$ node-disjoint $s-t$ paths, covering all points in $\mathcal{P}_k$  we can select pairs of crossing edges among paths  in arbitrary order and perform operation uncross. Clearly, this procedure must terminate since the length of the edges appended by operation uncross is monotonically decreasing. Therefore, we have the following corollary. 

\begin{corollary}
\label{collectionwithPi}
Given a point set $\mathcal{P}_k$ such that all points $\mathcal{P}_k$ can be covered with $k$ paths, there is at least one collection $(\mathcal{Y}_1,\mathcal{Y}_2,..,\mathcal{Y}_k)$ of $k$ node-disjoint, non-crossing paths that covers all points in $\mathcal{P}_k$.
\end{corollary}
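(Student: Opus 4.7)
The plan is to formalize the exchange argument already sketched immediately above Corollary \ref{collectionwithPi} into a single potential-function proof. Starting from any collection $(Y_1,\ldots,Y_k)$ of $k$ node-disjoint $s$-$t$ paths covering $\mathcal{P}_k$ (whose existence is the hypothesis of the corollary), I would define the potential
\[
\Phi(Y_1,\ldots,Y_k) \;=\; \sum_{i=1}^{k}\sum_{(u,v)\in Y_i} \|u-v\|_2,
\]
the sum of Euclidean lengths of all edges across the collection, and then iterate the uncross operation as long as some pair of paths contains crossing edges. The two things to check at each step are that the new collection remains admissible and that $\Phi$ strictly drops.

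For admissibility, given crossing edges $(u,u')\in Y_i$ and $(x,x')\in Y_j$, Lemma \ref{l5} supplies the replacement edges $(u,x')$ and $(x,u')$ as edges of $G^*$, so the two new paths are genuine $s$-$t$ walks; node-disjointness is preserved because the uncross only permutes the tails of two paths past the crossing, leaving the multiset of visited vertices unchanged, and this is also why every point of $\mathcal{P}_k$ continues to be covered. For the potential drop, letting $\pi$ be the crossing point of $[u,u']$ and $[x,x']$, the triangle inequality applied to the triangles $\{u,\pi,x'\}$ and $\{x,\pi,u'\}$ gives
\[
\|u-x'\|+\|x-u'\| \;<\; \bigl(\|u-\pi\|+\|\pi-x'\|\bigr) + \bigl(\|x-\pi\|+\|\pi-u'\|\bigr) \;=\; \|u-u'\|+\|x-x'\|,
\]
with strictness following from the general-position assumption on $\mathcal{P}$ (the three vertices of each triangle are non-collinear, since otherwise the two original edges would overlap along a segment rather than meet at an isolated point).

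The remaining step is termination. Because $G^*$ has only finitely many edges, the set of candidate collections of $k$ node-disjoint $s$-$t$ paths covering $\mathcal{P}_k$ is finite, hence $\Phi$ takes only finitely many values; a strictly decreasing sequence through a finite set cannot revisit a value and therefore must halt. At termination no two paths share a crossing pair of edges, yielding the required non-crossing collection $(\mathcal{Y}_1,\ldots,\mathcal{Y}_k)$. The one place needing a line of justification is the strictness in the triangle inequality, but this is automatic under the paper's general-position convention; note that the corollary makes no efficiency claim, so the number of uncross operations need not be bounded here — that bookkeeping is the business of algorithm $\widetilde U_k$ in Subsection \ref{algoUK}.
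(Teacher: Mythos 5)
Your proof is correct and follows essentially the same uncrossing argument the paper gives: replace a crossing pair via Lemma~\ref{l5}, verify that admissibility and coverage are preserved, and use the triangle inequality to show progress. The only difference is that you make the termination argument fully rigorous by introducing the explicit potential $\Phi$ (total Euclidean edge length) and invoking finiteness of the set of candidate collections, whereas the paper's statement that ``the length of the edges appended by operation uncross is monotonically decreasing'' is a looser phrasing of the same idea --- your potential-function formulation is the cleaner way to say it.
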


\subsection{Selection algorithm $\mathcal{S}$}
\label{selection}
Asahiro et al. \cite{DBLP:journals/dam/AsahiroHMOSY06} show an $O(n \log n)$-time algorithm $\mathcal{S}$ which takes as an input a point set $P_j$ such that all points in $P_j$ can be covered with $j$ paths, and gives as an output a collection of $j$ node-disjoint $s-t$ paths $(Y_1,Y_2,..,Y_j)$ that cover all points in $P_j$. Paths $(Y_1,Y_2,..,Y_j)$ computed by algorithm $\mathcal{S}$ may be crossing, but satisfy some useful geometrical properties. Algorithm $\widetilde{U}_k$ is based on these geometrical properties to obtain a collection of $k$ non-crossing $s-t$ paths.

\begin{definition}
For a point $x \in P_j$ we denote by $D^+_x$ all points $x^{\prime}$ in $P_j$ such that $x \prec x^{\prime}$. Similarly, we denote by $D^-_x$ all points $x^{\prime}$ in $P_j$ such that $x^{\prime} \prec x$.
\end{definition} 

 Algorithm $\mathcal{S}$ is an iterative process based on point variable $v$. At each iteration we select the point $x$ in $D^+_u$ such that $\alpha_x < \alpha_{x^{\prime}}$ for all points $x^{\prime}$ in $D^+_u \setminus{x}$. We append edge $(u,x)$, set $u \leftarrow x$ repeat the same  until the sink $t$ is selected. Upon termination of this process we have path $Y_1$. To obtain the next path $Y_2$ we repeat the same iterative process for all points in $P_j \setminus{P(Y_1)}$ where $P(Y_1)$ is the set of all points on path $Y_1$. 
 
 As shown in  Asahiro et al. \cite{DBLP:journals/dam/AsahiroHMOSY06}  after $j$ repetitions of this iterative process we have a collection of $j$ paths $(Y_1,Y_2,...Y_j)$ covering all points in $P_j$. For the remaining part of the section we denote by $(Y_1,Y_2,\ldots,Y_j)$ the collection of $j$ paths obtained by algorithm $\mathcal{S}$ over a set of points $P_j$ such that all points in $P_j$ can be covered with $j$ paths.

\begin{definition}
\label{def:rightleft}
We say that a point $v$ is on the left (resp. right) side of path $Y_i$ where $i \in [1,j]$ if the horizontal $\beta_v$ crosses with an edge $(u,u') \in Y_i$ and the crossing point $x$ on the closed segment $[u,u']$ satisfies  $\alpha_x \leq \alpha_v$ (resp. $\alpha_x \geq \alpha_v$).
\end{definition}

For an edge $(u,w) \in Y_i$ where $i \in [1,j]$ we denote by $B(u,w)=[\alpha_u,\alpha_w]\cdot [\beta_u,\beta_w]$ the rectangle formed by the verticals $\alpha_u$ and $\alpha_w$ and the horizontals $\beta_u$ and $\beta_w$. Following the description of algorithm $\mathcal{S}$ we obtain the following two corollaries.

\begin{corollary}
\label{nopointsinbox}
For an edge $(u,w) \in Y_i$ where $i \in [1,j]$ it holds that there are no points of paths $Y_{z>i}$ in the rectangle $B(u,w)$.
\end{corollary}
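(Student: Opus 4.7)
The plan is to prove this by contradiction, directly invoking the greedy rule that defines algorithm $\mathcal{S}$. Suppose toward contradiction that some point $p$ lying on a later path $Y_z$ with $z > i$ falls inside the rectangle $B(u,w) = [\alpha_u,\alpha_w] \times [\beta_u,\beta_w]$. Unpacking the rectangle membership gives $\alpha_u \leq \alpha_p \leq \alpha_w$ and $\beta_u \leq \beta_p \leq \beta_w$; together with $p \neq u$ this yields $u \prec p$, so $p$ is a dominator of $u$ and belongs to $D^{+}_u$.

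Next I would examine the state of $\mathcal{S}$ at the moment when it has selected $u$ as the current vertex of $Y_i$ and is about to append the next edge. Because $\mathcal{S}$ constructs the paths in the order $Y_1, Y_2, \ldots, Y_j$ and $p$ eventually gets assigned to $Y_z$ with $z > i$, the point $p$ has not yet been removed from the pool of available, uncovered points when $Y_i$ is being built. Thus $p$ is a legitimate candidate among the dominators of $u$ at that iteration. The greedy rule picks the candidate with minimum $\alpha$ coordinate, and the edge chosen by $\mathcal{S}$ was $(u,w)$; therefore $\alpha_w \leq \alpha_p$.

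Combining $\alpha_w \leq \alpha_p$ with the inequality $\alpha_p \leq \alpha_w$ inherited from the rectangle forces $\alpha_p = \alpha_w$. By the general position assumption stated in Subsection~\ref{sub:plane}, distinct points carry distinct $\alpha$ coordinates, so $p = w$. But then $p$ lies on $Y_i$, which contradicts $p$ being a point of $Y_z$ together with the node-disjointness of the collection produced by $\mathcal{S}$. I do not expect any serious obstacle; the only delicate point is verifying that $p$ is genuinely available to $\mathcal{S}$ at the relevant step, which follows directly from the top-to-bottom order in which $\mathcal{S}$ constructs $Y_1, Y_2, \ldots, Y_j$ and the fact that a point slated for some later $Y_z$ has not yet been committed.
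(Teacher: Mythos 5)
Your proof is correct and is precisely the implicit argument the paper is relying on (the paper states Corollary~\ref{nopointsinbox} without proof as following directly from the description of algorithm $\mathcal{S}$). You correctly identify the two key facts — that $p$ is still an available candidate in $D^+_u$ when the greedy rule at $u$ fires, because $z>i$ means $p$ has not yet been assigned to an earlier path, and that the greedy choice of minimum-$\alpha$ dominator forces $\alpha_w \le \alpha_p$, which clashes with the rectangle inequality $\alpha_p \le \alpha_w$ and the distinct-coordinates assumption to yield $p=w$, contradicting node-disjointness.
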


\begin{corollary}
\label{Scor2}
For two indexes $i,i' \in [1,j]$ such that $i< i'$ it holds that all points of path $Y_{i'}$ are on the right side of path $Y_i$.
\end{corollary}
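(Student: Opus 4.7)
The plan is to fix $i<i'$ and a point $v\in Y_{i'}$ and to deduce the claim from the greedy rule of $\mathcal{S}$ combined with Corollary~\ref{nopointsinbox}. I would first observe that every edge $(a,b)$ of $Y_i$ satisfies $a\prec b$, so both the $\alpha$- and the $\beta$-coordinates of the points along $Y_i$ are strictly increasing from $(0,0)$ at $s$ to $(n+1,n+1)$ at $t$. In particular, the horizontal line $\beta=\beta_v$ meets $Y_i$ at a unique point $x$, lying on a unique edge $(u,u')$ with $\beta_u<\beta_v<\beta_{u'}$; likewise the vertical line $\alpha=\alpha_v$ meets $Y_i$ at a unique edge $(p_{l-1},p_l)$ with $\alpha_{p_{l-1}}<\alpha_v<\alpha_{p_l}$. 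Both auxiliary edges exist because $\alpha_v,\beta_v\in(0,n+1)$.

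Arguing by contradiction I would assume that $v$ is not on the right side of $Y_i$, i.e.\ $\alpha_x>\alpha_v$, and split on the position of $\alpha_v$ relative to $\alpha_u$. In Case~A, $\alpha_u\le\alpha_v$; then $\alpha_u\le\alpha_v<\alpha_x\le\alpha_{u'}$ together with $\beta_u<\beta_v<\beta_{u'}$ place $v$ in the rectangle $B(u,u')$. Since $v\in Y_{i'}$ with $i'>i$, this directly contradicts Corollary~\ref{nopointsinbox}.

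In Case~B, $\alpha_v<\alpha_u$, so the edge $(p_{l-1},p_l)$ appears strictly before $(u,u')$ on $Y_i$. The monotonicity of $Y_i$ converts the contradiction hypothesis into a statement at the vertical line $\alpha=\alpha_v$: because $Y_i$ already has $\alpha=\alpha_x>\alpha_v$ at height $\beta_v$, it must reach abscissa $\alpha_v$ at height strictly less than $\beta_v$, which forces $\beta_{p_{l-1}}<\beta_v$. Combined with $\alpha_{p_{l-1}}<\alpha_v$ this yields $p_{l-1}\prec v$, so $v\in D^+_{p_{l-1}}$. Moreover $v$ was available to $\mathcal{S}$ at the moment $Y_i$ was being constructed: the paths $Y_1,\ldots,Y_j$ are pairwise node-disjoint and $v\in Y_{i'}$, so $v$ does not lie on any of $Y_1,\ldots,Y_{i-1}$. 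Therefore the greedy step at $p_{l-1}$ had $v$ available as a dominating candidate with $\alpha_v<\alpha_{p_l}$, contradicting its choice of $p_l$ as the successor of $p_{l-1}$ with the smallest $\alpha$.

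The main obstacle is Case~B, where one has to coordinate two different edges of $Y_i$ (the one the horizontal $\beta=\beta_v$ hits and the one the vertical $\alpha=\alpha_v$ hits) and use the monotonicity of $Y_i$ to convert the wrong-side condition at height $\beta_v$ into the strict inequality $\beta_{p_{l-1}}<\beta_v$ needed to place $v$ in $D^+_{p_{l-1}}$. The availability of $v$ is a one-line observation but must be stated explicitly, because the greedy rule of $\mathcal{S}$ only filters out points already used by $Y_1,\ldots,Y_{i-1}$.
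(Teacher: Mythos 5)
Your proof is correct, and it supplies the explicit argument that the paper elides: the paper merely states that Corollary~\ref{Scor2} (and Corollary~\ref{nopointsinbox}) ``follow from the description of algorithm $\mathcal{S}$'' without proof. One remark on reading the paper: Definition~\ref{def:rightleft} as literally written pairs ``right side'' with $\alpha_x\ge\alpha_v$, but that is inconsistent with Corollary~\ref{Scor2} itself and with the step in the proof of Lemma~\ref{Sprop} that derives $\alpha_x<\alpha_{v_1}$ from ``$v_1$ is on the right side of edge $(x,x')$''; your reading (``right side'' means $\alpha_x\le\alpha_v$, i.e.\ $Y_i$ passes to the left of $v$ at height $\beta_v$) is the intended one.

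Your case split is sound. In Case~A you place $v$ in $B(u,u')$ and appeal to Corollary~\ref{nopointsinbox}. In Case~B you correctly use the vertical line through $v$ to locate the edge $(p_{l-1},p_l)$, and the deduction $\beta_{p_{l-1}}<\beta_v$ is forced: since $\alpha_{p_{l-1}}<\alpha_v<\alpha_u$ and $\alpha$ is strictly increasing along $Y_i$, the vertex $p_{l-1}$ precedes $u$ on $Y_i$, hence $\beta_{p_{l-1}}<\beta_u<\beta_v$. With $p_{l-1}\prec v$ established, the availability observation is the only remaining ingredient, and it is exactly right: node-disjointness of the $Y_z$ and $i'>i$ guarantee $v$ is still unassigned when $\mathcal{S}$ chooses the successor of $p_{l-1}$. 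That contradicts the minimality of $\alpha_{p_l}$. Note that Case~B is the same greedy contradiction that proves Corollary~\ref{nopointsinbox} itself (applied at $p_{l-1}$ instead of $u$), so the two cases could have been unified into a single greedy argument; invoking \ref{nopointsinbox} in Case~A is a stylistic shortcut, not a logical necessity.
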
 

Notice that for two paths $Y_i$ and $Y_{i'}$ such that $i,i' \in [1,j]$ and $i<i'$, it is possible that path  $Y_i$ has points on the right side of path $Y_{i'}$ (as paths $Y_i$ and $Y_{i'}$ can cross).

Consider a set of points $P_j$ such that all points in  can be covered with $j$ paths. Let $(Y_1,Y_2,\ldots,Y_j)$ be a collection of $j$ paths obtained by algorithm $\mathcal{S}$, covering all points in $P_j$. Lemmas \ref{Sprop}, \ref{Sprop1} and \ref{sprop2} specify the geometrical properties satisfied by paths $(Y_1,Y_2,\ldots,Y_j)$.

\begin{lemma}
\label{Sprop}
For two indexes $i,i' \in [1,j]$ such that $i<i'$ and an edge $(x,x^{\prime}) \in Y_{i'}$, let $v_1,v_2,...,v_m$ be all points of path $Y_i$ (ordered in topological order)  within the horizontals $\beta_x$ and $\beta_{x'}$ on the right side of path $Y_{i'}$. It holds that $\alpha_x < \alpha_{v_i} < \alpha_{x^{\prime}}$ for $i=1,2,..,m$.
\end{lemma}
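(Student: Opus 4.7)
The plan is to establish the two inequalities $\alpha_x<\alpha_{v_l}$ and $\alpha_{v_l}<\alpha_{x'}$ separately: the lower bound will follow almost directly from the ``right side'' hypothesis of Definition~\ref{def:rightleft}, while the upper bound will be reduced to the extremal instance $l=m$ using the strict monotonicity of the $\alpha$-coordinate along $Y_i$, after which Corollary~\ref{nopointsinbox} and Corollary~\ref{Scor2} applied to a single well-chosen edge of $Y_i$ will finish the argument.

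For the lower bound I fix an arbitrary $l$ and observe that general position (distinct $\beta$-coordinates among points of $\mathcal{P}$) together with $v_l\ne x,x'$ forces $\beta_{v_l}$ to lie strictly inside $(\beta_x,\beta_{x'})$. Because $\beta$ is strictly increasing along the edges of $Y_{i'}$, the horizontal line $\beta=\beta_{v_l}$ meets $Y_{i'}$ precisely once, on the closed segment $[x,x']$, at a crossing point $\tilde\pi_l$; linear interpolation gives $\alpha_{\tilde\pi_l}\in(\alpha_x,\alpha_{x'})$. The hypothesis that $v_l$ lies on the right side of $Y_{i'}$ together with Definition~\ref{def:rightleft} then yields $\alpha_{v_l}\ge\alpha_{\tilde\pi_l}>\alpha_x$.

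For the upper bound I first note that consecutive nodes on $Y_i$ satisfy $p\prec p'$ and general position forces all $\alpha$-coordinates on $Y_i$ to be pairwise distinct, so $\alpha_{v_1}<\alpha_{v_2}<\dots<\alpha_{v_m}$; it therefore suffices to prove $\alpha_{v_m}<\alpha_{x'}$. Let $v_m^+$ denote the successor of $v_m$ on $Y_i$; this point exists because $v_m$ cannot equal $t$ (the sink lies on $Y_{i'}$ itself, and so is not on its right side), and because $v_m$ is the last $v_l$ in the band we have $\beta_{v_m^+}>\beta_{x'}$. Applying Corollary~\ref{nopointsinbox} to the edge $(v_m,v_m^+)\in Y_i$ with $i'>i$ shows that $x'\in Y_{i'}$ does not lie in the rectangle $B(v_m,v_m^+)$. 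Since $\beta_{x'}$ lies strictly in $(\beta_{v_m},\beta_{v_m^+})$, exclusion from this rectangle forces $\alpha_{x'}\notin[\alpha_{v_m},\alpha_{v_m^+}]$. On the other hand, Corollary~\ref{Scor2} asserts that $x'$ lies on the right side of $Y_i$, and the unique edge of $Y_i$ whose $\beta$-range contains $\beta_{x'}$ is precisely $(v_m,v_m^+)$; the crossing $\alpha$ on this edge lies strictly in $(\alpha_{v_m},\alpha_{v_m^+})$, and the right-side condition forces $\alpha_{x'}$ to be at least this crossing value, giving $\alpha_{x'}>\alpha_{v_m}$. Combining the two facts rules out $\alpha_{x'}<\alpha_{v_m}$, leaving $\alpha_{x'}>\alpha_{v_m^+}\ge\alpha_{v_m}$, so $\alpha_{v_m}<\alpha_{x'}$, and $\alpha$-monotonicity on $Y_i$ delivers $\alpha_{v_l}<\alpha_{x'}$ for every $l$.

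I expect the main conceptual hurdle to be resisting the temptation to attempt a ``local'' box argument around the edge $(v_{l-1},v_l)$ for each intermediate $l$; such an argument fails because $\beta_{x'}$ need not lie in the $\beta$-range of that particular box, so $x'$ can legitimately avoid it without contradiction. The reduction to the extremal index $l=m$ via $\alpha$-monotonicity on $Y_i$ is the key simplification: only at $l=m$ is the successor edge of $v_l$ forced to straddle $\beta_{x'}$, and only then does the combination of the ``no $Y_{i'}$-point inside the box'' property (a fact recording how $Y_{i'}$ was generated after $Y_i$) with the global ``$Y_{i'}$ lies on the right of $Y_i$'' property pin $\alpha_{x'}$ strictly above $\alpha_{v_m}$.
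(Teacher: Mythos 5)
The upper-bound step is where your argument breaks. You write that ``because $v_m$ is the last $v_l$ in the band we have $\beta_{v_m^+}>\beta_{x'}$'', but this does not follow and is in fact false in general: $v_m$ is the last point of $Y_i$ that is both inside the band \emph{and} on the right side of $Y_{i'}$, yet its successor $v_m^+$ can still lie inside the band on the \emph{left} side of $Y_{i'}$, since $Y_i$ and $Y_{i'}$ are allowed to cross. When that happens, $\beta_{x'}$ is not in the $\beta$-range of $B(v_m,v_m^+)$, so your appeal to Corollary~\ref{nopointsinbox} is vacuous, and $(v_m,v_m^+)$ is not the edge of $Y_i$ that straddles $\beta_{x'}$, so Corollary~\ref{Scor2} applied to that edge gives nothing either. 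A concrete instance produced by algorithm~$\mathcal{S}$: on $\{s,\,t,\,(0.05,1),\,(0.1,0.1),\,(9,5),\,(9.5,9.8),\,(9.9,50),\,(10,10)\}$ with $s=(0,0)$, $t=(100,100)$, the greedy yields $Y_1=s\to(0.05,1)\to(9,5)\to(9.5,9.8)\to(9.9,50)\to t$ and $Y_2=s\to(0.1,0.1)\to(10,10)\to t$; taking $(x,x')=((0.1,0.1),(10,10))\in Y_2$, the only band point of $Y_1$ on the right of $Y_2$ is $v_m=(9,5)$, and its successor $v_m^+=(9.5,9.8)$ has $\beta_{v_m^+}=9.8<10=\beta_{x'}$.

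This is exactly what the contradiction hypothesis buys the paper: assuming $\alpha_{v_m}>\alpha_{x'}$, any successor of $v_m$ still inside the band has $\alpha>\alpha_{v_m}>\alpha_{x'}$, which exceeds the $\alpha$-coordinate of the $(x,x')$-crossing at its own height, so it too would be on the right side of $Y_{i'}$, contradicting maximality of $v_m$; only with this hypothesis in place may one assert that the successor edge exits the band above $\beta_{x'}$ at a crossing $\alpha>\alpha_{x'}$ and then invoke Corollary~\ref{Scor2}. If you want to keep a direct argument, drop $v_m^+$ and instead take the unique edge $(\sigma^-,\sigma)\in Y_i$ with $\beta_{\sigma^-}<\beta_{x'}<\beta_\sigma$: since $v_m$ precedes or equals $\sigma^-$ on $Y_i$ you get $\alpha_{v_m}\le\alpha_{\sigma^-}$, and Corollary~\ref{Scor2} applied to $x'$ at this edge forces $\alpha_{x'}$ to exceed the crossing $\alpha$, which is strictly greater than $\alpha_{\sigma^-}\ge\alpha_{v_m}$. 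As written, however, the key intermediate claim is wrong, and the proof does not go through.
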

\begin{proof}
We refer the reader to Figure \ref{fig:withintriangle} for the schematic representation of the proof.
Notice that since points $v_1,v_2,\ldots,v_m$ are given in topological order we naturally have $\alpha_{v_i} < \alpha_{v_{i+1}}$ for $i=1,2,...,m-1$. Thus, it suffices to show that $\alpha_x < \alpha_{v_1}$ and $\alpha_{v_m} \leq \alpha_{x'}$. The inequality $\alpha_x < \alpha_{v_1}$  holds because point $v_1$ is on the right side of edge  $(x,x^{\prime})$ (see Definition \ref{def:rightleft}).

Assume towards contradiction that $\alpha_{v_m} > \alpha_{x^{\prime}}$. All points in $Y_i$ after $v_m$ must be above the horizontal $\beta_{v_m}$ and on the right of the vertical $\alpha_{v_m}$. Thus, the edge $(v_m,\sigma(v_m))$,  where $\sigma(v_m)$ is the successor of $v_m$ in $Y_i$ crosses the horizontal $\beta_{x'}$ at a crossing point $p$ such that $\alpha_p > \alpha_{x'}$. This means that $x'$ is on the left of $Y_i$. However, this contradicts Corollary \ref{Scor2} since $x^{\prime}$ is a point in path $Y_{i'>i}$ and therefore must be on the right side of $Y_i$.
\end{proof}

\begin{lemma}
\label{Sprop1}
For an edge $(x,x') \in Y_z$ where  $z \in [1,j]$, let $f_i$ and $f'_i$ be the first point of path $Y_i$ for $i=1,2,..,z-1$ above the horizontal $\beta_x$ and above the horizontal $\beta_{x'}$, respectively. It holds that for $i=1,2,..,z-1$ point $f_i$ is on the left of the vertical $\alpha_x$ and point $f'_i$ is on  the left of vertical $\alpha_{x'}$.
\end{lemma}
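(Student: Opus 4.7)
The plan is to derive both inequalities by a single symmetric argument; I will write out the contradiction for $f_i$, and the statement for $f_i'$ follows verbatim by replacing $x$ with $x'$ throughout. The two ingredients I expect to rely on are Corollary~\ref{nopointsinbox} (no point of $Y_{z'}$ with $z'>i$ lies inside a rectangle $B(u,w)$ spanned by an edge of $Y_i$) and Corollary~\ref{Scor2} (every point of $Y_{z}$ with $z>i$ lies on the right side of $Y_i$). These are exactly what is needed to rule out the bad configuration.

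My first step will be to locate the unique edge of $Y_i$ that crosses the horizontal line $\beta=\beta_x$. Because each edge $(u,v)$ of $Y_i$ respects the dominance relation $\prec$, the $\beta$-coordinates of the points of $Y_i$ are strictly increasing along the path (distinctness from general position). Consequently, by the definition of $f_i$, its predecessor $u$ in $Y_i$ satisfies $\beta_u<\beta_x<\beta_{f_i}$, and the edge $(u,f_i)$ is the unique edge of $Y_i$ meeting the line $\beta=\beta_x$. (Existence of $f_i$ is automatic because $Y_i$ is an $s$-$t$ path and the sink sits above every regular horizontal.)

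Next I will argue by contradiction: suppose $\alpha_{f_i}>\alpha_x$. Let $p$ denote the intersection of the closed segment $[u,f_i]$ with the horizontal $\beta=\beta_x$. Since $u\prec f_i$, the $\alpha$-coordinate of $p$ is a convex combination of $\alpha_u$ and $\alpha_{f_i}$, so $\alpha_u\le\alpha_p\le\alpha_{f_i}$. Since $x\in Y_z$ with $z>i$, Corollary~\ref{Scor2} places $x$ on the right side of $Y_i$, which by Definition~\ref{def:rightleft} means $\alpha_p\le\alpha_x$. Chaining these bounds gives $\alpha_u\le\alpha_x<\alpha_{f_i}$, which together with $\beta_u<\beta_x<\beta_{f_i}$ shows that $x$ lies inside the rectangle $B(u,f_i)$. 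This contradicts Corollary~\ref{nopointsinbox}, so $\alpha_{f_i}\le\alpha_x$, and strict inequality $\alpha_{f_i}<\alpha_x$ follows from general position.

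I do not anticipate a serious obstacle; the argument is a short rectangle-argument unlocked by the two corollaries above. The only point requiring brief care is verifying that the crossing point $p$ is well defined and that its $\alpha$-coordinate is squeezed between $\alpha_u$ and $\alpha_{f_i}$, which is immediate once one observes $u\prec f_i$ and $\beta_u<\beta_x<\beta_{f_i}$.
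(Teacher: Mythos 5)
Your proof is correct and follows essentially the same argument as the paper's: identify the edge $(\pi(f_i),f_i)$ of $Y_i$ straddling the horizontal $\beta_x$, use Corollary~\ref{Scor2} to pin the crossing point to the left of $\alpha_x$, and then derive $\pi(f_i)\prec x\prec f_i$ and invoke Corollary~\ref{nopointsinbox} for the contradiction (your $u$ is the paper's $\pi(f_i)$). The paper spells out the $f'_i$ case separately rather than invoking the substitution $x\mapsto x'$, but the content is identical.
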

\begin{proof}

We refer the reader to Figure \ref{fig:pointsbelowabove} for the schematic representation of the proof. 
Consider an edge $(x,x') \in Y_z$ where $z \in [1,j]$ and assume towards contradiction that for some $i<z$ point $f_i$ is not placed  on the left of the vertical $\alpha_x$. Let $\pi(f_i)$ be the predecessor of $f_i$ in $Y_i$. Since $f_i$ is the first point of $Y_i$ above the horizontals  $\beta_x$ we have that edge $(\pi(f_i),f_i)$ must cross the horizontal  $\beta_x$. Thus, we have $\beta_{\pi(f_i)} \leq \beta_x \leq \beta_{f_i}$. 

Because $x$ is a point on path $Y_{z>i}$, according to Corollary \ref{Scor2} it must be on the right side of path $Y_i$. Therefore the horizontal $\beta_x$ must cross with edge $(\pi(f_i),f_i)$ at a point $p$ such that $\alpha_p < \alpha_x$ which means that point  $\pi(f_i)$ is on the left of the vertical $\alpha_x$.

If $f_i$ is not on the left of the vertical $\alpha_x$ then $x$ satisfies the inequality $\alpha_{\pi(f_i)} < \alpha_x < \alpha_{f_i}$. As discussed above, we have that $\beta_{\pi(f_i)} \leq \beta_x \leq \beta_{f_i}$ and therefore $\pi(f_i) \prec x \prec f_i$. However, this  contradicts Corollary \ref{nopointsinbox} because there is an edge $(\pi(f_i),f_i) \in Y_i$ and a point $x$ on $Y_{z>i}$ such that $x$ is within the rectangle $B(\pi(f_i),f_i)$.

Similarly, assume towards contradiction that $f^{\prime}_i$ is not on the left of the vertical $\alpha_{x^{\prime}}$. Let $\pi(f^{\prime}_i)$ be the predecessor of $f^{\prime}_i$ in $Y_i$. Since $f^{\prime}_i$ is the first point of $Y_i$ above the horizontal $\beta_{x^{\prime}}$ it holds that the horizontal $\beta_{x^{\prime}}$ crosses with edge $(\pi(f^{\prime}_i),f^{\prime}_i)$. Thus, we have that  $\beta_{\pi(f'_i)} < \beta_x' < \beta_{f'_i}$.

Because $x^{\prime}$ is a point on  $Y_{z>i}$,  according to Corollary \ref{Scor2} it must be on the right side of path $Y_i$. Thus, the horizontal  $\beta_{x^{\prime}}$  crosses with edge $(\pi(f^{\prime}_i),f^{\prime}_i)$ at a point $p$ such that $\alpha_p < \alpha_{x^{\prime}}$. Therefore, point $\pi(f^{\prime}_i)$ must be on the left of the vertical $\alpha_{x^{\prime}}$.

If $f^{\prime}_i$ is not on the left of the vertical  $\alpha_{x^{\prime}}$ then $x^{\prime}$ satisfies the inequality $\alpha_{\pi(f^{\prime}_i)} < \alpha_x^{\prime} < \alpha_{f^{\prime}_i}$. We also have $\beta_{\pi(f'_i)} < \beta_x' < \beta_{f'_i}$ which implies that $\pi(f'_i) \prec x' \prec f'_i$ which subsequently contradicts Corollary \ref{nopointsinbox} because $x'$ is a point on $Y_{z>i}$ and is within the rectangle $B(\pi(f'_i),f'_i)$ where $(\pi(f_i),f_i) \in Y_i$.
\end{proof}

\begin{figure}
\centering
\begin{subfigure}{.5\textwidth}
  \centering
  \includegraphics[scale=0.5]{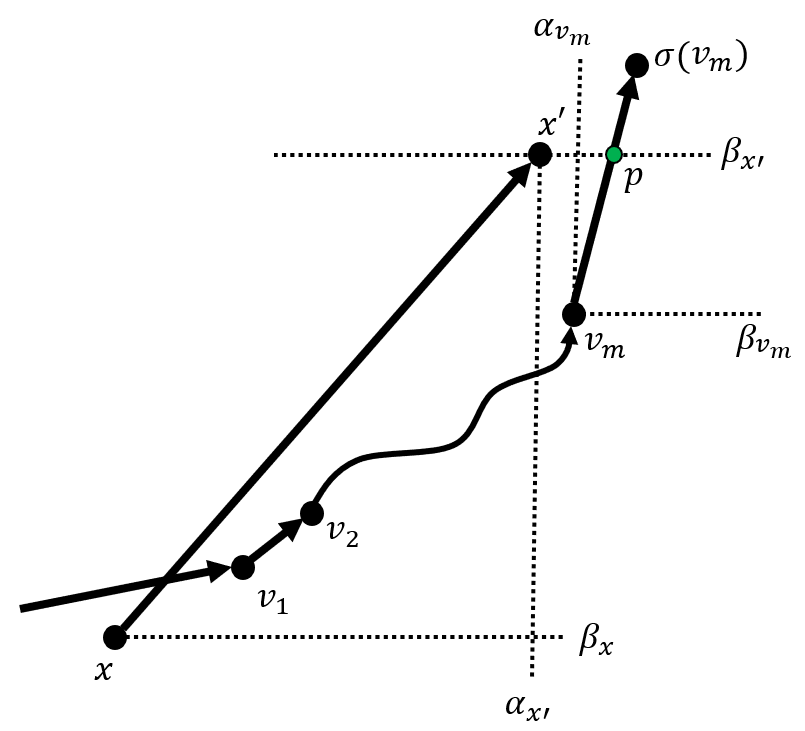}
        \caption{}
        \label{fig:withintriangle}
\end{subfigure}%
\begin{subfigure}{.5\textwidth}
  \centering
  \includegraphics[scale=0.5]{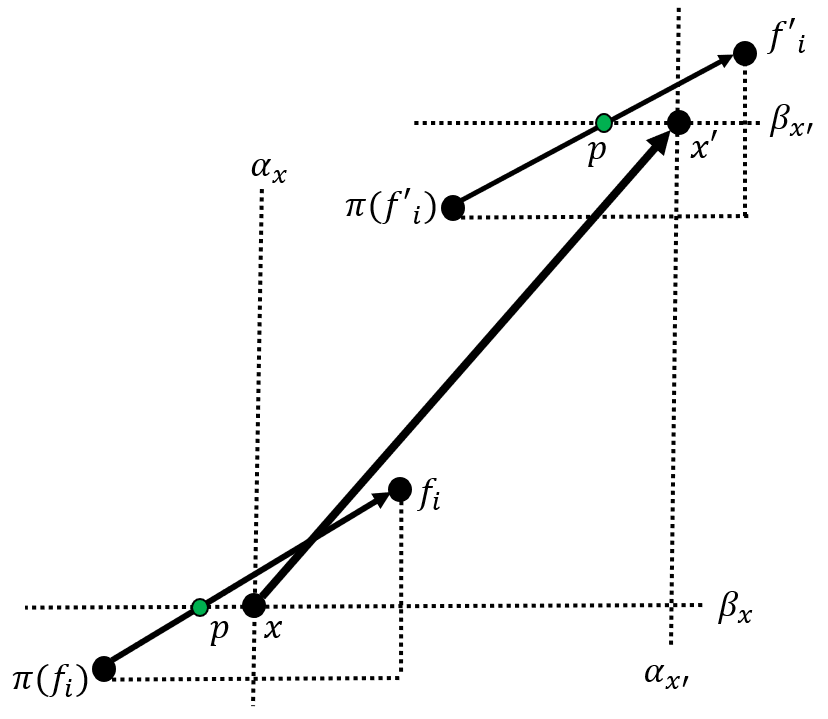}
        \caption{}
        \label{fig:pointsbelowabove}
\end{subfigure}
\caption{Figure \ref{fig:withintriangle} shows the schematic representation of the proof for Lemma \ref{Sprop}.  Figure \ref{fig:pointsbelowabove} the schematic representation of the proof for Lemma \ref{Sprop1}.}
\end{figure}

\begin{lemma}
\label{sprop2} 
 For any $i \in [1,j]$ and a point $x$ on path $Y_i$ there is no collection of $i-1$ paths that covers all points on paths $Y_1,Y_2,..,Y_{i-1}$ and point $x$.
\end{lemma}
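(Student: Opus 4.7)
My plan is to reduce the lemma to a Dilworth-style antichain argument and then construct the required antichain recursively by repeated application of Lemma \ref{Sprop1}.

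I would first observe that every collection of $m$ node-disjoint $s$-$t$ paths in the plane representation partitions its regular points into $m$ chains under the dominance order $\prec$, since an edge exists in $G^*$ exactly when two points are comparable. Consequently, if the set $S := Y_1 \cup \cdots \cup Y_{i-1} \cup \{x\}$ contains an antichain of size $i$, no collection of $i-1$ node-disjoint $s$-$t$ paths can cover $S$, because each such path forms a chain and therefore meets any antichain in at most one point. It thus suffices to exhibit an antichain of size $i$ inside $S$ that contains $x$.

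To build the antichain I would proceed recursively. Let $v_{i-1}$ be the first point of $Y_{i-1}$ whose $\beta$-coordinate strictly exceeds $\beta_x$ (traversing $Y_{i-1}$ from $s$ to $t$), and for $i' = i-2, i-3, \ldots, 1$ let $v_{i'}$ be the first point of $Y_{i'}$ above the horizontal $\beta_{v_{i'+1}}$. Each $v_{i'}$ is a well-defined regular point: by Corollary \ref{Scor2}, $v_{i'+1}$ (as a point of $Y_{i'+1}$) lies on the right side of $Y_{i'}$, so the horizontal at height $\beta_{v_{i'+1}}$ must cross some edge of $Y_{i'}$, whose upper endpoint serves as $v_{i'}$; applying the same reasoning to $x \in Y_i$ makes $v_{i-1}$ well-defined.

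Next I would verify the antichain property through Lemma \ref{Sprop1}. Applied to the edge of $Y_i$ terminating at $x$, the lemma yields $\alpha_{v_{i-1}} < \alpha_x$; applied to the edge of $Y_{i'+1}$ terminating at $v_{i'+1}$, it yields $\alpha_{v_{i'}} < \alpha_{v_{i'+1}}$ for every $i' < i-1$. By construction, the $\beta$-coordinates satisfy $\beta_x < \beta_{v_{i-1}} < \beta_{v_{i-2}} < \cdots < \beta_{v_1}$, strictly because the points are in general position. Hence any two distinct elements of $\{v_1, v_2, \ldots, v_{i-1}, x\}$ have strictly opposite orderings in the two coordinates and are incomparable under $\prec$, producing the required antichain of size $i$ inside $S$ and, together with the reduction above, finishing the proof.

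The main subtlety I anticipate is ensuring that each $v_{i'}$ exists as the recursion walks upward in $\beta$, since one might fear running past the topmost regular point of $Y_{i'}$. This worry is precisely dispatched by the right-side property of Corollary \ref{Scor2}, which forces the horizontal at height $\beta_{v_{i'+1}}$ to intersect each path $Y_{i'}$ rather than pass above it.
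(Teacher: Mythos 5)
Your proof is correct, but it takes a genuinely different route from the paper's. The paper argues by contradiction in one shot: if a collection of $i-1$ paths covered $Y_1,\ldots,Y_{i-1}$ and $x$, then $x$ would have a predecessor $u$ on some $Y_z$ with $z<i$, and Corollary~\ref{nopointsinbox} applied to the $Y_z$-edge $(u,\sigma_u)$ is invoked to forbid $u\prec x$. Your proof instead reduces the statement to exhibiting an antichain of size $i$ inside $Y_1\cup\cdots\cup Y_{i-1}\cup\{x\}$, since each $s$-$t$ path is a chain under $\prec$ and hence meets any antichain at most once; you then build that antichain by an upward sweep, invoking Corollary~\ref{Scor2} to ensure each $v_{i'}$ exists as a regular point and Lemma~\ref{Sprop1} to certify the $\alpha$-inequalities $\alpha_{v_1}<\cdots<\alpha_{v_{i-1}}<\alpha_x$, which together with the strictly decreasing $\beta$-coordinates gives pairwise incomparability. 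Both routes rest on the same structural facts about the greedy algorithm $\mathcal{S}$ (your Lemma~\ref{Sprop1} is itself proved via Corollary~\ref{nopointsinbox}), but yours makes the Dilworth-style counting explicit and spells out exactly why the coordinates nest, which is cleaner than the paper's rather terse one-line contradiction — the paper never explains why $x$ should land inside $B(u,\sigma_u)$, whereas your chain of inequalities leaves no such gap. The only thing to state explicitly would be that $x$ is a regular point (otherwise the lemma is vacuous), so that the edge of $Y_i$ terminating at $x$ exists and Lemma~\ref{Sprop1} applies.
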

\begin{proof}
 Assume towards contradiction that  for some $i \in [1,j]$ and a point $x$ on path $Y_i$, there is a collection of $i-1$ paths that includes all points on paths $Y_1,Y_2,..,Y_{i-1}$ and additionally point $x$. Notice that if such a collection exists then there must be at least one point $u$ on some path $Y_{z<i}$ such that $u \prec x $, since we need to append edge $(u,x)$ (\textit{i.e.} in order to cover point $x$). Let  $\sigma_u$ be the successor of point $u$ in $Y_z$. From corollary \ref{nopointsinbox} we have that there are no points of paths $Y_{i>z}$ within the rectangle $B(u,\sigma_u)$ which contradicts that $u \prec x$, since $x$ is on path $Y_i$.
\end{proof}
 \subsection{Algorithm $\widetilde U_k$ description}
\label{algoUK}
 For $k \geq 2$ algorithm  $\widetilde U_k$ takes as an input a set of points $\mathcal{P}_k$ such that all points in $\mathcal{P}_k$ can be covered with $k$ node disjoint $s-t$ paths and outputs a collection of $(\mathcal{Y}_1,\mathcal{Y}_2,..,\mathcal{Y}_k)$ of $k$ node-disjoint non-crossing paths. The description of  algorithm $\widetilde{U}_k$ for $k \geq 2$ is given below.
 
 For $k \geq 2$ the working of algorithm $\widetilde U_k$ is described as sequence of algorithms $U_k,U_{k-1},..,U_1$ such that for $j=k,k-1,\ldots,1$ algorithm $U_j$ computes the $j^{th}$ path $\mathcal{Y}_j$ of the collection $(\mathcal{Y}_1,\mathcal{Y}_2,..,\mathcal{Y}_k)$. That is, algorithm $\widetilde U_k$ outputs the paths $(\mathcal{Y}_1,\mathcal{Y}_2,..,\mathcal{Y}_k)$ from right to left, starting with the rightmost path  $\mathcal{Y}_k$ and ending with  the leftmost path $\mathcal{Y}_1$. 
 
 For $j =k,k-1,\ldots,1$ the input to algorithm $U_j$ is a set of points $P_j$ such that all points in $P_j$ can be covered with  $j$ paths. We call this condition the \textit{input condition} of algorithm $U_j$. For the base case $j=k$, the  input condition is satisfied since $P_k=\mathcal{P}_k$. For $j<k$ the input condition of algorithm $U_j$ will be satisfied inductively, as explained below.

For $j =k,k-1,\ldots,1$, algorithm $U_j$ gives as an output the $j^{th}$ path $\mathcal{Y}_j$ of the collection $(\mathcal{Y}_1,\mathcal{Y}_2,..,\mathcal{Y}_k)$. Let $P(\mathcal{Y}_j)$ be all points on path $\mathcal{Y}_j$.  For $j=k,k-1,..,1$ the  \textit{output condition} of algorithm $U_j$ is described by the two properties shown below:

\begin{itemize}
    \item Property 1: All points in $P_j \setminus{P(\mathcal{Y}_j)}$ can be covered with $(j-1)$ paths.
    \item Property 2: Any collection of $(j-1)$ paths that collects all points in $P_j \setminus{P(\mathcal{Y}_j)}$ does not cross with path $\mathcal{Y}_j$. 
\end{itemize}

 For $j=k,k-1,\ldots,2$ the input to algorithm $U_{j-1}$ is the set of points $P_j \setminus{P(\mathcal{Y}_j)}$. It is easy to see that for $j \leq k$ if algorithm $U_j$ satisfies Property 1 of the output condition, then the set of points $P_j \setminus{P(\mathcal{Y}_j)}$ satisfies the input condition of algorithm $U_{j-1}$. That is, all points in $P_{j}  \setminus{P(\mathcal{Y}_{j})}$ can be covered with $j-1$ paths.

For $j=k,k-1,\ldots,1$ and a point set $P_j$ such that all points in $P_j$ can be covered with $j$ paths, algorithm $U_j$ consists of two computational steps. The first step obtains a collection of $j$ paths $(Y_1,Y_2,..,Y_j)$ covering all points in $P_j$  by calling algorithm $S$, as described in   subsection \ref{selection}. 

The second step traverses the edges of the leftmost path $Y_j$ from $s$ to $t$ in order to  build the output path $\mathcal{Y}_j$. The output path $\mathcal{Y}_j$ has all points of path $Y_j$ and some additional points of paths $Y_1,Y_2,\ldots,Y_{j-1}$.  For an edge $(x,x') \in Y_j$ algorithm $U_j$ builds the segment of the output path $\mathcal{Y}_j$ from $x$ to $x'$ by performing either operation $A$ or operation $B$. Before we describe operation $A$ and operation $B$ for an edge $(x,x') \in Y_j$  we provide the following definition. 
  
  \begin{definition}
  \label{defprop2}
  Let $P_j \setminus{P(Y_j)}$ be the set of all points on paths $Y_1,Y_2,..,Y_{j-1}$.
  For an edge $(x,x') \in Y_j$ we denote by $P_{xx'}$ the set of all points in $P_j \setminus{P(Y_j)}$ within the horizontals $\beta_x$ and $\beta_{x'}$. We denote by $V_{xx'}$ the set of all points on the output  path $\mathcal{Y}_j$ within the horizontals $\beta_x$ and $\beta_{x'}$.
\end{definition}

Operation A,  sets $V_{xx'}= \emptyset$ and therefore the segment of $\mathcal{Y}_j$ from $x$ to $x'$ simply consists of edge $(x,x')$.  Operation B sets $V_{xx'}= (v_1,v_2,..,v_m)$  where point $v_i \in P_{xx'}$ for $i=1,2,..m$ and therefore the segment of $\mathcal{Y}_j$ from $x$ to $x'$ is a path $(x,v_1,v_2,..,v_m,x')$. We will shortly explain in detail how we select points $v_1,v_2,..,v_m$. 

Essentially, operation $A$ simply considers the next edge in $Y_j$ whereas operation $B$ can be seen as $m$ insertions where for $i=1,2,..,m$ the $i^{th}$ insertion  removes a point $v_i \in P_{xx'}$ on a path $Y_{z<j}$ and adds $v_i$ to the segment of $\mathcal{Y}_j$ from $x$ to $x'$. 

\subsubsection{Property 1}
\label{property1}
For $j=k,k-1,..,2,1$ and a point set $P_j$ such that all points can be covered with $j$ paths let $(Y_1,Y_2,\ldots,Y_j)$ be the collection of paths obtained by algorithm $\mathcal{S}$, covering all points in $P_j$. According to Lemma \ref{sprop2}, there is no collection of $(j-1)$ paths that covers all points in~$(P_j \setminus{P(Y_j)}) \cup x$ where $x$ is a point on path $Y_j$. This implies, that  for $j=k,k-1,\ldots,1$
if algorithm $U_j$ satisfies Property 1 of the {output condition}, then any point $x$ on path $Y_j$ is also a point on path $\mathcal{Y}_j$.

\begin{lemma}
\label{LemmaProperty1}
For $j=k,k-1,\ldots,2$ algorithm $U_j$ takes as an input a point set $P_j$ such that all points can be covered with $j$ paths and outputs a path $\mathcal{Y}_j$ such that all points in $P_j \setminus{P(\mathcal{Y}_j})$ can be covered with $j-1$ paths.
\end{lemma}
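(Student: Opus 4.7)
The plan is to exhibit an explicit cover of $P_j \setminus P(\mathcal{Y}_j)$ by $j-1$ node-disjoint $s$-$t$ paths, obtained by surgically modifying the collection $(Y_1, \ldots, Y_{j-1})$ returned by algorithm $\mathcal{S}$ in step~1 of $U_j$. Recall that $\mathcal{S}$ covers $P_j$ with $(Y_1, \ldots, Y_j)$, and that $U_j$ builds $\mathcal{Y}_j$ by keeping every point of $Y_j$ and inserting additional points $V := P(\mathcal{Y}_j) \setminus P(Y_j)$ contributed by the operation~$B$ invocations. Since every $V_{xx'}$ is a subset of $P_{xx'} \subseteq P_j \setminus P(Y_j)$ by Definition~\ref{defprop2}, and since $(Y_1, \ldots, Y_j)$ partition $P_j$, every $v \in V$ lies on some $Y_{z(v)}$ with $z(v) < j$. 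The natural candidates for the cover are then the paths obtained by deleting exactly the points of $V$ from their respective host paths.

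Concretely, for each $z \in [1, j-1]$ let $V_z := V \cap P(Y_z)$, and let $Y'_z$ be obtained from $Y_z$ by removing every $v \in V_z$ and joining the surviving predecessor and successor of each removed block by a single long edge. This bypass is always legal because $\prec$ is transitive: for any maximal run of consecutive points $u_0 \prec u_1 \prec \cdots \prec u_r$ of $Y_z$ with $u_1, \ldots, u_{r-1} \in V_z$ and $u_0, u_r \notin V_z$, we have $u_0 \prec u_r$, so the long edge from $u_0^+$ to $u_r^-$ already exists in $G^*$. Because $s$ and $t$ belong to neither $V$ nor any $V_z$, the endpoints are preserved and each $Y'_z$ remains an $s$-$t$ path in $G^*$.

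The two remaining requirements are then immediate. Node-disjointness of $(Y'_1, \ldots, Y'_{j-1})$ is inherited from $(Y_1, \ldots, Y_{j-1})$, since bypassing only removes nodes from a path and never introduces shared ones. For the point set, $\bigcup_{z<j} P(Y'_z) = \bigcup_{z<j} P(Y_z) \setminus V = (P_j \setminus P(Y_j)) \setminus V$; on the other hand, $P(\mathcal{Y}_j) = P(Y_j) \cup V$ as a disjoint union by construction of $V$, so $P_j \setminus P(\mathcal{Y}_j) = (P_j \setminus P(Y_j)) \setminus V$, and the two expressions coincide. The main point worth spelling out carefully, and essentially the only non-bookkeeping step, is the transitivity bypass: it guarantees that $Y'_z$ remains well-defined even when several consecutive points of $Y_z$ are absorbed into $\mathcal{Y}_j$ at once. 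The deeper geometric lemmas~\ref{Sprop}, \ref{Sprop1}, and~\ref{sprop2} about $\mathcal{S}$ will instead be needed for Property~2 (non-crossing of $\mathcal{Y}_j$ with any $(j-1)$-cover), not for Property~1.
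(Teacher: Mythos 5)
Your proof is correct and takes essentially the same route as the paper's: both rest on the observation that $P(\mathcal{Y}_j) = P(Y_j) \sqcup V$ with $V \subseteq P_j \setminus P(Y_j)$, so $P_j \setminus P(\mathcal{Y}_j)$ sits inside $P_j \setminus P(Y_j)$, which is covered by $Y_1,\ldots,Y_{j-1}$. The difference is one of rigour: the paper simply asserts that any subset $X \subseteq P_j \setminus P(Y_j)$ can ``naturally'' be covered with $j-1$ paths and leaves it there, whereas you supply the actual witness --- the bypass paths $Y'_z$ obtained by deleting $V_z$ from $Y_z$ and stitching the neighbours across each removed block with a single long edge, legality of which follows from transitivity of $\prec$ and from $s,t \notin V$. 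Since algorithm $\mathcal{S}$ (which $U_{j-1}$ invokes on $P_{j-1}$) only ever builds paths through the given point set, the cover really does need to live on $P_j \setminus P(\mathcal{Y}_j)$ itself rather than on a superset, so your explicit construction is the step that actually discharges the obligation; the paper's ``naturally'' is glossing over exactly this. You also dispense with the outer induction on $j$, which the paper uses only to restate ``the same methodology applies''; your argument works uniformly for any fixed $j$ and is cleaner for it. One small remark: you are implicitly using that $Y_1,\ldots,Y_j$ partition $P_j$ (so that $\bigcup_{z<j} P(Y_z) = P_j \setminus P(Y_j)$), which is indeed how $\mathcal{S}$ is specified, but worth saying out loud since the whole equality chain hinges on it.
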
 
\begin{proof}
 We show the Lemma by induction. For the base case $j=k$, we assume that the point set $P_k$ can be covered with $k$ path  and show that algorithm $U_k$ outputs path $\mathcal{Y}_k$ such that  all points in $P_k \setminus{P(\mathcal{Y}_k)}$ can be covered with $k-1$ paths.
 
 For $j=k$ the first step of algorithm $U_k$ obtains an initial collection of $k$ paths $(Y_1,Y_2,..,Y_k)$ using algorithm $S$. If all points in $P_k$ can be covered with $k$ paths, then naturally all points in $P_k \setminus{P(Y_k)}$ or any subset $X \subseteq P_k \setminus{P(Y_k)}$  can be covered with $k-1$ paths.
 
 Algorithm $U_k$ builds path $\mathcal{Y}_k$ by considering the edges of $Y_k$ from $s$ to $t$ and for an edge $(x,x') \in Y_k$ performs either operation $A$ or operation $B$ to build the segment of $\mathcal{Y}_k$ from $x$ to $x'$.  Notice that for an edge $(x,x') \in Y_k$, neither operation $A$ nor operation $B$ removes a point from path $Y_j$. 
 
  Therefore, for path $\mathcal{Y}_k$ we have that $P(\mathcal{Y}_k)=P(Y_k) \cup V $ where  $V \subseteq P_k \setminus{P(Y_k)}$. This implies that $P_k \setminus{P(\mathcal{Y}_k)} \subseteq P_k \setminus{P(Y_k)}$ since $|P(\mathcal{Y}_k)|>|P(Y_k)|$. Therefore, all points in $P_k \setminus{P(\mathcal{Y}_k)}$ can be covered with $k-1$ paths.
 
We now assume that our induction holds for $k,k-1,\ldots,j+1$ and show that it holds for $j$. If our induction holds for $j+1$ then this  means that the set of points $P_{j+1} \setminus{P(\mathcal{Y}_{j+1}})$ can be covered with $j$ paths. We show that when algorithm $U_j$ takes as input point set $P_j=P_{j+1} \setminus{P(\mathcal{Y}_{j+1}})$ and  outputs a path $\mathcal{Y}_j$ then the point set $P_j \setminus{P(\mathcal{Y}_j})$ can be covered with $j-1$ paths.  To complete the proof we simply follow the same methodology as for $j=k$.
\end{proof}

\subsubsection{Property 2}
\label{Property2}
For $j=k,k-1,..,2,1$ and a point set $P_j$ such that all points can be covered with $j$ paths let $(Y_1,Y_2,\ldots,Y_j)$ be the collection of paths obtained by algorithm $\mathcal{S}$. Recall that for an edge $(x,x') \in Y_j$, we denote by $f_i$ and $f'_i$ for $i=1,2,...,j-1$  the first point of path $Y_i$ above the horizontal $\beta_{x}$ and $\beta_{x'}$, respectively. Further, according to Definition \ref{defprop2}, $P_{xx'}$ is the set of all points in $P_j \setminus{P(Y_j)}$ within the horizontals $\beta_x$ and $\beta_{x'}$ and $V_{xx'}$ is the set of all points on the segment of $\mathcal{Y}_j$ from $x$ to $x'$.

\begin{definition}
\label{def:invariantIxx}
For an edge $(x,x') \in Y_j$ we say that invariant $I_{xx'}$ is satisfied if the edge between two points $u,u' \in(P_{xx'} \cup (f'_1,f'_2,..,f'_{j-1}) \setminus{V_{xx'}})$ can not cross the segment of $\mathcal{Y}_j$ from $x$ to $x'$.
\end{definition}

 For an edge $(x,x') \in Y_j$ let $(x,x',\pi)$ be the triangle formed by the closed segments $[x,x'],[x,\pi]$ and $[x',\pi]$ where $\pi$ is the crossing point of the vertical $\alpha_x$ with the horizontal $\beta_{x'}$. Notice that any point $u \in P_{xx'}$   within the triangle $(x,x',\pi)$ must satisfy $x \prec u \prec x'$ according to Lemma \ref{Sprop}.  Thus, for an edge $(x,x') \in Y_j$ there are only two  cases, as shown below.

\begin{itemize}
    \item Case 1: There is no point $u \in P_{xx'}$  within the triangle $(x,x',\pi)$.
    \item Case 2: There is at least one point $u \in P_{xx'}$  within the triangle $(x,x',\pi)$.
\end{itemize}

If an edge $(x,x') \in Y_j$ belongs to Case 1, algorithm $U_j$ performs operation $A$ which sets  $V_{xx'}=\emptyset$, whereas if edge $(x,x')$ belongs to Case 2, algorithm $U_j$ performs operation $B$, as defined below.

\begin{definition}
For an edge $(x,x') \in Y_j$ we define $\mathcal{C}_{xx'}=(v_1,v_2,...,v_m)$ to be the convex hull of all points in $P_{xx'}$ within the triangle $(x,x',\pi)$. For an edge $(x,x') \in Y_j$ operation $B$ sets $V_{xx'}=\mathcal{C}_{xx'}$.
\end{definition}

\begin{lemma}
\label{case1indedge}
If an edge $(x,x') \in Y_j$ belongs to Case 1, algorithm $U_j$ performs operation $A$ and invariant $I_{xx'}$ is satisfied.
\end{lemma}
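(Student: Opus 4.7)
The plan is to prove the invariant by contradiction: assume there exist $u, u' \in P_{xx'} \cup \{f'_1,\ldots,f'_{j-1}\}$ with $u \prec u'$ such that the straight segment $[u,u']$ crosses $[x,x']$, and derive a contradiction from the Case 1 hypothesis together with the structural properties of the paths $Y_1,\ldots,Y_{j-1}$ produced by algorithm $\mathcal{S}$. Since operation $A$ sets $V_{xx'}=\emptyset$, the segment of $\mathcal{Y}_j$ from $x$ to $x'$ is simply the edge $(x,x')$, so it suffices to rule out any such straight-line crossing.

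First, I would apply Lemma \ref{l5} to the assumed crossing to obtain $u \prec x'$ and $x \prec u'$. Next, I would eliminate the possibility that $u$ is one of the $f'_i$ points: each $f'_i$ satisfies $\beta_{f'_i} > \beta_{x'}$, while any crossing point $c$ of $[u,u']$ with $[x,x']$ must satisfy $\beta_c < \beta_{x'}$, and since $\beta_u \le \beta_c$ we have $\beta_u < \beta_{x'}$. Consequently $u \in P_{xx'}$, and in particular $\beta_u \in [\beta_x,\beta_{x'}]$.

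The crux of the argument is to show that, under Case 1, every point $v \in P_{xx'} \cup \{f'_1,\ldots,f'_{j-1}\}$ lies on the \emph{same} side of the line $L$ through $x$ and $x'$, so that no straight segment between two such points can cross $L$ (hence cannot cross the sub-segment $[x,x']$). I would argue this in two parts. For $v \in P_{xx'}$, the hypothesis that no point of $P_{xx'}$ lies in the triangle $(x,x',\pi)$ forces $v$ to lie either outside the axis-aligned rectangle $[\alpha_x,\alpha_{x'}] \times [\beta_x,\beta_{x'}]$ or in the complementary right-angled half of that rectangle; combining Lemma \ref{Sprop} (which pins any $P_{xx'}$-point that would end up on the wrong side of $L$ inside the strip $\alpha \in (\alpha_x,\alpha_{x'})$, and therefore inside the forbidden triangle) with the constraint $\beta_v \in [\beta_x,\beta_{x'}]$ then locates $v$ strictly on one side of $L$. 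For $v = f'_i$, Lemma \ref{Sprop1} yields $\alpha_{f'_i} < \alpha_{x'}$ and $\beta_{f'_i} > \beta_{x'}$, which a direct comparison with the line $L$ shows places $f'_i$ on the same side of $L$ as any $P_{xx'}$-point surviving Case 1. With $u$ and $u'$ on the same side of $L$, the straight segment $[u,u']$ cannot cross $L$, contradicting the assumed crossing of $[x,x']$.

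The main obstacle is the third step: the case analysis on the position of $u$ (and symmetrically $u'$) with respect to the rectangle $[\alpha_x,\alpha_{x'}]\times[\beta_x,\beta_{x'}]$ must enumerate several geometric configurations (inside the box, to the left of the vertical $\alpha_x$, above the horizontal $\beta_{x'}$) and in each one invoke the right combination of Lemmas \ref{Sprop} and \ref{Sprop1} together with Corollary \ref{Scor2} to place the point consistently on one side of $L$. Care is needed to handle boundary situations cleanly using the general-position assumption, so that the sidedness argument yields a strict contradiction rather than a degenerate touching of segments at a shared endpoint.
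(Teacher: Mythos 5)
Your proposal is correct and takes essentially the same route as the paper's proof: both rest on showing that every point of $P_{xx'}$ (via Case~1 together with Lemma~\ref{Sprop}) and every $f'_i$ (via Lemma~\ref{Sprop1}) lies on the same side of the line through $x$ and $x'$, so that no segment joining two such points can cross $(x,x')$. Your detour through Lemma~\ref{l5} and the contradiction framing are cosmetic differences; the paper argues the sidedness directly and adds a one-line treatment of the special first edge $(s,x)$, which your sketch omits but which is immediate since the two segments share the endpoint $s$.
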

\begin{proof}
Operation $A$ sets $V_{xx'}=\emptyset$ and therefore the segment of $\mathcal{Y}_j$ from $x$ to $x'$ simply consists of edge $(x,x')$. Thus, it is sufficient to show that the edge between two points $u,u' \in P_{xx'} \cup{(f'_1,f'_2,..,f'_m)}$ does not cross edge $(x,x')$. Because edge $(x,x')$ belongs to Case 1, we have that all points in $P_{xx'}$ are on the exterior of the triangle $(x,x',\pi)$. Thus, any point $u \in P_{xx'}$ is on the left of edge $(x,x')$.

Therefore, any edge $(u,u')$ such that $u,u' \in P_{xx'}$ can not cross edge $(x,x')$. 
For the special case where the first edge $(s,x) \in Y_j$ belongs to Case 1, then an edge $(s,u')$ where $u' \in P_{xx'}$ can not cross edge $(s,x)$ because they share a point.

We now consider an edge of the form $(u,f'_i)$ for any $i \in [1,j-1]$ such that $u \in P_{xx'}$. According to Lemma  \ref{Sprop1}, for $i=1,2,..,j-1$ we have that $f'_i$ is on the left of vertical $\alpha_{x'}$. Since every point in $P_{xx'}$ is on the left of edge $(x,x')$ we have that edge $(u,f'_i)$ can not cross edge $(x,x')$.  For the special case where the first edge $(s,x) \in Y_j$ belongs to Case 1, trivially an edge $(s,f'_i)$ can not cross edge $(s,x)$ because they share a point.

We conclude that any edge $(u,u')$ where $u,u' \in P_{xx'} \cup (f'_1,f'_2,..,f'_{j-1})$ does not cross the segment of $\mathcal{Y}_j$ from $x$ to $x'$ and therefore invariant $I_{xx'}$ is satisfied. This completes the proof.
\end{proof}

\begin{lemma}
\label{Case2ConvH}
If an edge $(x,x') \in Y_j$ belongs to Case 2, algorithm $U_j$ performs operation $B$, sets $V_{xx'}=\mathcal{C}_{xx'}$ and invariant $I_{xx'}$ is satisfied.
\end{lemma}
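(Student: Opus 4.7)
The plan is to verify invariant $I_{xx'}$ by a careful geometric case analysis that exploits the convexity of the region enclosed by the upper hull chain together with the placement constraints on the $f'_i$ points. I first establish that the segment of $\mathcal{Y}_j$ traversing $x \to v_1 \to \cdots \to v_m \to x'$ is a valid $\prec$-monotone directed path: since every point of $P_{xx'}\cap\mathrm{triangle}(x,x',\pi)$ lies below the horizontal $\beta_{x'}$ and to the right of the vertical $\alpha_x$, the peak of the concave upper chain of $\{x,x'\}\cup(P_{xx'}\cap\mathrm{triangle}(x,x',\pi))$ occurs at $x'$, forcing $\beta$ to increase monotonically along the chain. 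Together with the segment $[x,x']$, this chain bounds a closed convex region $\mathcal{R}$ whose interior contains every point of $P_{xx'}\cap\mathrm{triangle}$ that is not on the chain.

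Next, I would classify the remaining set $(P_{xx'}\cup\{f'_1,\ldots,f'_{j-1}\})\setminus V_{xx'}$ by location: (i) points of $P_{xx'}\cap\mathrm{triangle}$ strictly inside $\mathcal{R}$; (ii) points of $P_{xx'}$ outside the triangle, which by combining Corollary~\ref{Scor2} with the strip constraint must lie to the left of the vertical $\alpha_x$ and above the line $[x,x']$; and (iii) the vertices $f'_i$, which by Lemma~\ref{Sprop1} lie above the horizontal $\beta_{x'}$ and to the left of the vertical $\alpha_{x'}$. Since $V_{xx'}=\mathcal{C}_{xx'}$ by operation $B$, no $v_i$ appears in the remaining set.

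The invariant then follows from an exhaustive case analysis on the endpoint types of a hypothetical edge $(u,u')$. If both endpoints are of type (i), the segment $[u,u']$ stays inside $\mathcal{R}$ by convexity. If both are of type (ii), both satisfy $\alpha\le\alpha_x$, so the edge stays in the convex half-plane $\alpha\le\alpha_x$ disjoint from the chain. If both are of type (iii), the edge lies in the convex half-plane $\beta\ge\beta_{x'}$ and, because both endpoints have $\alpha<\alpha_{x'}$, cannot pass through the only chain point at $\beta=\beta_{x'}$, namely $x'$. In the mixed cases, I would argue that the edge is forced to remain entirely inside one of these regions by combining Lemma~\ref{Sprop1}, which pins the $f'_i$ to the left of $\alpha_{x'}$, with Corollary~\ref{nopointsinbox}, which limits the positioning of lower-indexed points relative to the segments of the upper chain inherited from $Y_1,\ldots,Y_{j-1}$.

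The main technical obstacle will be the mixed case in which $u$ lies strictly inside $\mathcal{R}$ (type i) and $u'$ is some $f'_i$ (type iii): a priori the segment $[u,u']$ could exit $\mathcal{R}$ by piercing the upper hull chain. Resolving this requires combining Lemma~\ref{Sprop1} (which forces $u'$ to lie above and left of $x'$) with the concavity of the chain to show the exit point lies on the boundary $\alpha=\alpha_x$ or above $\pi$ rather than on the chain; otherwise one obtains a point witness contradicting either Corollary~\ref{nopointsinbox} applied to the edge of $Y_i$ entering the strip at $f'_i$, or the maximality of $\mathcal{C}_{xx'}$, which would have forced $u$ itself to lie on the chain. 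A symmetric argument handles a type-(ii) point paired with a type-(iii) point.
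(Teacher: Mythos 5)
Your overall strategy is the same as the paper's: define the convex chain $x\to v_1\to\cdots\to v_m\to x'$, observe that pairs of remaining points fall into a few positional classes, and then argue by convexity that a segment between two such points cannot cross the chain. Lemma~\ref{Sprop1} (pinning each $f'_i$ to the left of the vertical $\alpha_{x'}$) is the same ingredient used by the paper. So this is not a genuinely different route.

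There is, however, a real gap in the resolution of what you correctly flag as the critical mixed case (a point $u$ interior to $\mathcal{R}$ paired with some $f'_i$). Your sketch says the exit point of $[u,f'_i]$ ``lies on the boundary $\alpha=\alpha_x$ or above $\pi$ rather than on the chain'' --- but those are boundaries of the triangle $(x,x',\pi)$, not of $\mathcal{R}$, whose boundary is the chain and the segment $[x,x']$; a segment leaving $\mathcal{R}$ upward must in fact exit through the chain, so this reasoning does not close the case. The proposed fallback contradictions are also not load-bearing: that $f'_i$ lies outside the hull is true and unproblematic (it does not violate maximality of $\mathcal{C}_{xx'}$), and it is not specified how Corollary~\ref{nopointsinbox} is applied to produce a witness. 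The paper resolves this case with a single direct observation you are missing: extend the hull region upward past $\beta_{x'}$ while cutting it off on the right by the vertical $\alpha=\alpha_{x'}$, producing a convex body whose boundary still contains the whole chain; by Lemma~\ref{Sprop1} every $f'_z$ lies to the left of $\alpha_{x'}$ and hence inside this body, and $u$ is inside it as well, so $[u,f'_z]$ stays inside and cannot cross the chain. Replacing your indirect contradiction sketch with this explicit extended-convex-body argument closes the gap.
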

\begin{proof}
For an edge $(x,x') \in Y_j$ that belongs to Case 2 let $\mathcal{C}_{xx'}=(v_1,v_2,...,v_m)$ be the convex hull of all points in $P_{xx'}$. Because $\mathcal{C}_{xx'}=V_{xx'}$ it suffices to show that an edge $(u,u')$ such that $u,u'  \in P_{xx'}\cup (f'_1,f'_2,..,f'_m) \setminus{C_{xx'}}$ can not cross the segment of $\mathcal{Y}_j$ from $x$ to $x'$. That is, edge $(u,u')$ can not cross any of the edges $(x,v_1),\ldots,(v_i,v_{i+1}),\ldots, (v_m,x')$.

Notice that for $i=1,2,..,m-1$ we have that  $v_i \prec v_{i+1}$  because $v_i$ and $v_{i+1}$ are points in the convex hull $\mathcal{C}_{xx'}$. We also have $x,\prec v_1$ and $v_m \prec x'$ since according to Lemma \ref{Sprop} for every point $u \in P_{xx'}$ within the triangle $(x,x',\pi)$ we have $x \prec u \prec x'$. 

We refer the reader to Figure \ref{fig:convexhull} for the schematic representation of the proof.
According to the convex hull principle, an edge between two points $u,u' \in P_{xx'} \setminus{C_{xx'}}$ can not cross edge $(v_i,v_{i+1})$ for $i=1,2,..,m-1$, as this would imply that point $u'$ is in the exterior of the convex hull. For the same reason the edge between two points $u,u' \in P_{xx'} \setminus{C_{xx'}}$ can not cross edge $(x,v_1)$ or edge $(v_m,x')$.

We now consider an edge of the form $(u,f'_z)$ where $z \in [1,j-1]$ such that $u \in P_{xx'} \setminus{C_{xx'}}$. Consider the convex body which extends above the horizontal $\beta_{x'}$ by taking the vertical $\alpha_{x'}$, as shown in Figure \ref{fig:convexhull}. 

Point $u$ is in the interior of the convex body. According to Lemma \ref{Sprop1}, for $z=1,2,..,j-1$ we have that $f'_z$ is on the left of vertical $\alpha_{x'}$ and therefore $f'_z$ is also on the interior of the convex body. Thus, edge $(u,f'_z)$ can not cross any of the edges $(x,v_1),\ldots,(v_i,v_{i+1}),\ldots, (v_m,x')$. This completes the proof.
\end{proof}

\begin{figure}
    \centering
    \includegraphics[scale=0.46]{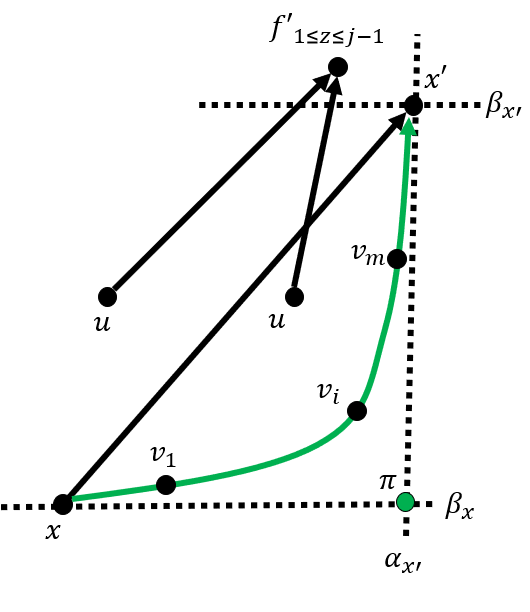}
    \caption{The schematic representation of the proof for Lemma \ref{Case2ConvH}.}
    \label{fig:convexhull}
\end{figure}

\begin{lemma}
\label{LemmaProperty2}
For $j=k,k-1,\ldots,2$ algorithm $U_j$ takes as an input a point set $P_j$ such that all points in $P_j$ can be covered with $j$ paths and outputs a path $\mathcal{Y}_j$ such that any collection of $j-1$ paths that covers all points in $P_j \setminus{P(\mathcal{Y}_j})$ does not cross with path $\mathcal{Y}_j$.
\end{lemma}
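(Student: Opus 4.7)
The plan is to derive Lemma \ref{LemmaProperty2} by contradiction, leveraging the per-strip invariant $I_{xx'}$ already established by Lemmas \ref{case1indedge} and \ref{Case2ConvH}. That invariant forbids a crossing between the segment of $\mathcal{Y}_j$ from $x$ to $x'$ and any edge whose endpoints lie in $(P_{xx'} \cup \{f'_1,\dots,f'_{j-1}\}) \setminus V_{xx'}$, so the task reduces to showing that an arbitrary crossing edge from an arbitrary covering collection can always be forced into this local set.

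I would fix any collection $\mathcal{Z} = \{Z_1,\dots,Z_{j-1}\}$ of $(j-1)$ node-disjoint $s$-$t$ paths covering $P_j \setminus P(\mathcal{Y}_j)$ and assume, for contradiction, that some edge $(u,u')$ of some $Z_i$ crosses $\mathcal{Y}_j$. The observation in Subsection~\ref{property1} that every point of $Y_j$ already lies on $\mathcal{Y}_j$ places $u$ and $u'$ on paths $Y_z$ with $z<j$. Because each edge of $\mathcal{Y}_j$ and the edge $(u,u')$ are monotone in the dominance order, any crossing of $(u,u')$ with $\mathcal{Y}_j$ occurs on a well-defined edge $(a,b)$ of $\mathcal{Y}_j$, which belongs to the segment of $\mathcal{Y}_j$ between two consecutive $Y_j$-points $x$ and $x'$. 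Among all crossings between edges of $\mathcal{Z}$ and $\mathcal{Y}_j$, I would select one whose crossing point has minimum $\beta$-coordinate, so that below the chosen crossing none of the $Z_i$'s meets $\mathcal{Y}_j$; this localisation step is what lets me argue inside the single strip $[\beta_x,\beta_{x'}]$.

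A case analysis on the vertical position of $u$ and $u'$ with respect to the strip then completes the argument. When both endpoints lie in $[\beta_x,\beta_{x'}]$, they belong to $P_{xx'}$ and, being covered by $Z_i$, are disjoint from $\mathcal{Y}_j$ and hence from $V_{xx'}$; invariant $I_{xx'}$ contradicts the crossing directly. When $u'$ lies above $\beta_{x'}$, I would argue that $u'$ must coincide with $f'_z$, the first point of $Y_z$ above $\beta_{x'}$: otherwise $f'_z$ lies strictly between $x'$ and $u'$ on $Y_z$, and the sub-path of $Z_i$ joining the ambient configuration to $u'$ is forced either to cross $\mathcal{Y}_j$ at a smaller $\beta$ (contradicting minimality) or to require $f'_z \in P(\mathcal{Y}_j)$ (ruled out by Lemma~\ref{sprop2}, which prevents $Y_j$-points from being traded away). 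Once $u'=f'_z$, the edge lies in the invariant's local set and the contradiction follows. The symmetric subcase $\beta_u<\beta_x$ is handled analogously by appealing to the invariant on the segment of $\mathcal{Y}_j$ ending at $x$.

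The hardest step will be this ``endpoint outside the strip'' subcase. Formalising it requires combining the minimality of the chosen crossing with the geometric information in Lemmas~\ref{Sprop} and~\ref{Sprop1}, which pin down the position of $f'_z$ relative to the vertical $\alpha_{x'}$, and exploiting the topology of $\mathcal{Z}$ as a collection of node-disjoint $s$-$t$ paths to rule out detours that would reintroduce a lower-$\beta$ crossing. Once this subcase is settled, every possible configuration of a putative crossing is contradicted, yielding Lemma~\ref{LemmaProperty2}.
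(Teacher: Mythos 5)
Your proposal follows the same overall strategy as the paper: reduce the global non-crossing claim to the per-strip invariants $I_{xx'}$ established in Lemmas~\ref{case1indedge} and~\ref{Case2ConvH}, and then argue that an edge of a putative covering collection would have to violate some $I_{xx'}$. The paper's own proof is in fact terser than yours; it says little more than that ``summing over all edges $(x,x') \in Y_j$'' the invariants imply no edge between points of $P_j \setminus P(\mathcal{Y}_j)$ can cross $\mathcal{Y}_j$, and it does not explicitly walk through the case in which an endpoint of the crossing edge lies outside the strip $[\beta_x,\beta_{x'}]$. In that sense you have correctly identified that the aggregation step requires an argument, and your idea of taking a minimum-$\beta$ crossing is a reasonable way to try to localise the problem.

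However, the way you close the ``endpoint above the strip'' case does not hold up. You assert that if $u' \neq f'_z$ the only alternatives are a lower-$\beta$ crossing or $f'_z \in P(\mathcal{Y}_j)$, and you dismiss the latter by citing Lemma~\ref{sprop2}. That lemma is about points on $Y_j$: it shows (via the discussion in Subsection~\ref{property1}) that no point of $Y_j$ can be excluded from $\mathcal{Y}_j$. It says nothing about a point $f'_z$ lying on a path $Y_z$ with $z<j$, and indeed such a point \emph{can} belong to $P(\mathcal{Y}_j)$: operation $B$ on the strip of $Y_j$ lying above $[x,x']$ may place $f'_z$ on the convex hull $\mathcal{C}_{x_0x_0'}$ for that strip and thus add it to $\mathcal{Y}_j$. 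So the dichotomy you set up does not close, and the intended reduction $u' = f'_z$ is not established. The ``forced to cross $\mathcal{Y}_j$ at a smaller $\beta$'' branch is likewise only a sketch: it invokes the topology of the $s$-$t$ paths $Z_i$ without showing why the relevant sub-path must meet $\mathcal{Y}_j$ below the chosen crossing. Until these two points are actually argued (and until the case where \emph{both} endpoints of $(u,u')$ lie outside $[\beta_x,\beta_{x'}]$ is explicitly addressed), the proposal does not go beyond restating the target as a plausible consequence of $I_{xx'}$.
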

\begin{proof}
For $j=k,k-1,\ldots,2$, let $(Y_1,Y_2,\ldots,Y_j)$ be the collection of $j$ paths covering $P_j$ obtained by algorithm $\mathcal{S}$. Algorithm $U_j$ traverses the edges of path $Y_j$ from $s$ to $t$ and when an edge $(x,x') \in Y_j$ is considered, it performs operation $A$ if edge $(x,x')$ belongs to Case 1 and operation $B$ if edge $(x,x')$ belongs to Case 2. According to Lemmas \ref{case1indedge} and \ref{Case2ConvH}, for an edge $(x,x') \in Y_j$ when algorithm $U_j$ performs either operation $A$ or operation $B$, invariant $I_{xx'}$ is satisfied. 

According to Definition \ref{def:invariantIxx}, if invariant $I_{xx'}$ is satisfied for an edge edge $(x,x') \in Y_j$ then the edge between two points $u,u' \in(P_{xx'} \cup (f'_1,f'_2,..,f'_{j-1}) \setminus{V_{xx'}})$ can not cross the segment of $\mathcal{Y}_j$ from $x$ to $x'$. Summing over all edges $(x,x') \in Y_j$, invariant $I_{xx'}$ implies that any edge $(u,u)$ such that $u,u' \in P_j \setminus{P(\mathcal{Y}_j)}$ can not cross with an edge of path $\mathcal{Y}_j$. Thus, any collection of $j-1$ paths that covers all points in $P_j \setminus{P(\mathcal{Y}_j})$ can not cross with path $\mathcal{Y}_j$.
 \end{proof}

\begin{proof}[Proof of Theorem \ref{maintheoremuncrossing}]
We show the proof of Theorem \ref{maintheoremuncrossing} by induction using Lemmas \ref{LemmaProperty1} and \ref{LemmaProperty2}. For $k \geq 2$, algorithm $\widetilde U_k$ consists of the following sequence of algorithm $U_k,U_{k-1},\ldots,U_2,U_1$. We show that for $j=k,k-1,\ldots,2,1$ algorithm $U_j$ computes the  $j^{th}$ path $\mathcal{Y}_j$ of a collection of $k$ node-disjoint, non-crossing $s-t$ paths  $(\mathcal{Y}_1,\mathcal{Y}_2,\ldots,\mathcal{Y}_k)$. The paths are given in left to right order in their planar representation, meaning that $\mathcal{Y}_1$ is the leftmost path and $\mathcal{Y}_{k}$ is the rightmost path. Algorithm $\widetilde U_k$ iteratively outputs the paths from right to left (\textit{i.e.}. from $\mathcal{Y}_k$ to $\mathcal{Y}_1$).

For the base case $j=k$,  algorithm $U_k$ takes as input a set of points  $P_k$ such that all points in $P_k$ can be collected with $k$ paths and  gives as an output path $\mathcal{Y}_k$. According to Lemma \ref{LemmaProperty1} we have that all points in $P_k \setminus{P(\mathcal{Y}_k})$ can be covered with $k-1$ paths, which means that the set of points $P_k \setminus{P(\mathcal{Y}_k})$ is a valid input for algorithm  $U_{k-1}$. According to Lemma \ref{LemmaProperty2} we have that any collection of $k-1$ paths covering all points in $P_k \setminus{P(\mathcal{Y}_k})$ can not cross with path $\mathcal{Y}_k$.  

We now assume that our induction holds for $i=k,k-1,\ldots,j+1$ and show that it also holds for $j$. According to our inductive hypothesis, we have a collection of $(k-j)$ non-crossing $s-t$ paths $\mathcal{Y}_k,\mathcal{Y}_{k-1},\ldots, \mathcal{Y}_{j+1}$ and a set of points $P_j=\mathcal{P}_k \setminus{P(\mathcal{Y}_k,\mathcal{Y}_{k-1},\ldots, \mathcal{Y}_{j+1})}$ such that all points in $P_j$ can be covered with $j$ paths. 

Furthermore,  any collection of $j$ paths over $P_j$ can not cross with path $\mathcal{Y}_{j+1}$. This implies that  any collection of $j$ paths over $P_j$ can not cross with paths $\mathcal{Y}_{j+2},\ldots,\mathcal{Y}_{k-1},\mathcal{Y}_k$ since they are on the right side of $\mathcal{Y}_{j+1}$ and do not cross pairwise.

Algorithm $U_j$ takes point set $P_j$ as an input and outputs path $\mathcal{Y}_j$. According to Lemma  \ref{LemmaProperty1}, the set of points $P_j \setminus{P(\mathcal{Y}_j)}$ can be covered with $j-1$ paths, which means that point set $P_j \setminus{P(\mathcal{Y}_j)}$ is a valid input for algorithm $U_{j-1}$. According to Lemma  \ref{LemmaProperty2}, any collection of $j-1$ paths that covers all points in $P_j \setminus{P(\mathcal{Y}_j})$ can not cross with path $\mathcal{Y}_j$.

Notice that the computed path $\mathcal{Y}_j$ and any collection of $j-1$ paths that covers all points in $P_j \setminus{P(\mathcal{Y}_j)}$ is a collection of $j$ paths that covers all points in $P_j$. According to the induction hypothesis, any collection of $j$ paths that covers all points in $P_j$ can not cross with paths $\mathcal{Y}_k,\mathcal{Y}_{k-1},\ldots, \mathcal{Y}_{j+1}$. Thus, the computed path $\mathcal{Y}_j$ can not cross with paths $\mathcal{Y}_k,\mathcal{Y}_{k-1},\ldots, \mathcal{Y}_{j+1}$. 

To complete the proof it remains to show that algorithm $\widetilde U_k$ requires $O(k n \log n)$ time. We show that  for $j=k,k-1,\ldots,2,1$ algorithm $U_j$ requires $O(n \log n)$ time. For any $j \in [1,k]$ algorithm $U_j$ consists of two computational steps. The first step obtains a collection of $j$ paths $(Y_1,Y_2,\ldots,Y_j)$ using algorithm $\mathcal{S}$ which requires $O(n \log n)$ time, as shown in Asahiro et al.\cite{DBLP:journals/dam/AsahiroHMOSY06}. The second step, traverses the edges of the rightmost path $Y_j$ from $s$ to $t$ and for an edge $(x,x') \in Y_j$ performs either operation $A$ or operation $B$.

It is easy to see that operation $A$ requires $O(1)$ time since we simply consider the next edge of $Y_j$. We recall that for an edge $(x,x') \in Y_j$ we denote by  $P_{xx'}$ the set of all points on paths $Y_1,Y_2,\ldots,Y_{j-1}$ within the horizontals $\beta_x$ and $\beta_{x'}$. For an edge $(x,x') \in Y_j$, operation $B$ computes the convex hull of all points  in $P_{xx'}$ using Graham's scan algorithm \cite{Cormen2001introduction} and therefore the running time of operation $B$ is $O(|P_{xx'}| \log |P_{xx'}|)$. Summing over all edges of path $Y_j$, it is easy to see that the total running time of all operations $B$ is $O(n \log n)$ since the input point set $P_j$ can have at most $n$ points.
\end{proof}

\section{Conclusion and Future Work}
We study the optimisation problem of servicing $n$ timed requests on a line by $k$ robots, a generalisation of the Ball Collecting Problem \cite{DBLP:journals/dam/AsahiroHMOSY06} for arbitrary ball weights.  The optimisation problem is modelled as a minimum cost flow problem on a flow network $\mathcal{N}$, which can be implicitly represented by a set of points in the two-dimensional plane. 
We  show an algorithm with the running time of $O(k^{2k}n \log^{2k} n)$ for computing a minimum-cost  flow of value $k$
in $\mathcal{N}$, which improves the previous upper bound of $O(kn^2)$ if $k$ is considered constant.
For $k \geq 2$, a natural question is whether there exists  an algorithm with the running time of $O(kn \log^c n)$ for some constant $c \geq 1$ (or ideally independent of $k$), that computes a minimum-cost flow of value $k$ in the flow network $\mathcal{N}$.

For $k \geq 2$, we compute a minimum cost flow of value $k$ in $\mathcal{N}$ by iteratively finding an $s-t$ shortest path in the residual network $\mathcal{N}_i$ for $i=1,2,\ldots,k-1$. For $k=1$, an $s-t$ shortest path is computed in the standard way using appropriate data structures \cite{Lueker197828} for efficiency.
Our algorithm is based on the analysis of the geometric structure of non-self-crossing shortest paths, using the implicit representation of the flow network. 

We also rely on the fact that for $k \geq 2$, a minimum-cost flow  of value $k$ can be represented by $k$ non-crossing red paths $\mathcal{Y}_1,\mathcal{Y}_2,\ldots,\mathcal{Y}_{k}$ (Theorem \ref{maintheoremuncrossing}). We do not know how to find efficiently a non-self-crossing $s-t$ shortest path  in the residual network without the assumption that the red paths are non-crossing. We also do not know how to find efficiently an $s-t$ shortest path that does not cross the red paths, and subsequently ensures that the new $k$ paths do  not cross, if we obtain the flow in the usual way. This is also why we need the follow-up computation of 
un-crossing (algorithm $\widetilde{U}_k$).

Faster algorithms may depend on the existence of shortest paths with more specialised structures. For example,  non-self-crossing shortest paths which additionally do not cross the red paths. For $k =2$, we do not rely on the existence of non-self-crossing shortest paths (Theorem \ref{Theorem:specialcasek=2}) so these two conditions may not be useful. However, for $k \geq 3$, a $(k-2)$-chromatic path satisfying these two conditions, can traverse red edges of either path $\mathcal{Y}_1$ or path $\mathcal{Y}_{k-1}$, but not both, since the red paths are non-crossing. This observation can yield to more efficient algorithms, but it requires the existence of shortest paths with this specialised structure (not trivial).

Finally, an interesting research direction is generalising the problem of $n$ servicing timed requests with $k$ robots, in  three dimensions, where now  a request takes place at time $t_i$ at point $(x_i,y_i)$ in space. Considering the fact that the appropriate data structures for orthogonal search queries can be generalized to higher dimensions, this is an interesting research direction even for $k=1$ robot.

\bibliographystyle{unsrt}
\bibliography{foo.bib}

\begin{thebibliography}{10}

\bibitem{DBLP:journals/dam/AsahiroHMOSY06}
Yuichi Asahiro, Takashi Horiyama, Kazuhisa Makino, Hirotaka Ono, Toshinori
  Sakuma, and Masafumi Yamashita.
\newblock How to collect balls moving in the euclidean plane.
\newblock {\em Discret. Appl. Math.}, 154(16):2247--2262, 2006.

\bibitem{Hammar:1999:ARK:646229.681565}
Mikael Hammar and Bengt~J. Nilsson.
\newblock Approximation results for kinetic variants of tsp.
\newblock In {\em Proceedings of the 26th International Colloquium on Automata,
  Languages and Programming}, ICAL '99, pages 392--401, London, UK, UK, 1999.
  Springer-Verlag.

\bibitem{Chalasani1996AlgorithmsFR}
Prasad~R. Chalasani, Rajeev Motwani, and Anil~S. Rao.
\newblock Algorithms for robot grasp and delivery.
\newblock 1996.

\bibitem{Asahiro2008}
Yuichi Asahiro, Eiji Miyano, and Shinichi Shimoirisa.
\newblock Grasp and delivery for moving objects on broken lines.
\newblock {\em Theory of Computing Systems}, 42(3):289--305, Apr 2008.

\bibitem{HELVIG2003153}
C.S. Helvig, Gabriel Robins, and Alex Zelikovsky.
\newblock The moving-target traveling salesman problem.
\newblock {\em Journal of Algorithms}, 49(1):153 -- 174, 2003.
\newblock 1998 European Symposium on Algorithms.

\bibitem{RePEc:spr:annopr:v:289:y:2020:i:2:d:10.1007_s10479-019-03412-x}
Michael Hassoun, Shraga Shoval, Eran Simchon, and Liron Yedidsion.
\newblock {The single line moving target traveling salesman problem with
  release times}.
\newblock {\em Annals of Operations Research}, 289(2):449--458, June 2020.

\bibitem{Menezes2015158}
M.B.C. Menezes, M.~Ketzenberg, R.~Oliva, and R.~Metters.
\newblock Service delivery to moving demand points using mobile servers.
\newblock {\em International Journal of Production Economics}, 168:158--166,
  2015.

\bibitem{6858759}
P.~Agharkar and F.~Bullo.
\newblock Vehicle routing algorithms to intercept escaping targets.
\newblock In {\em 2014 American Control Conference}, pages 952--957, June 2014.

\bibitem{5400538}
S.~L. Smith, S.~D. Bopardikar, and F.~Bullo.
\newblock A dynamic boundary guarding problem with translating targets.
\newblock In {\em Proceedings of the 48h IEEE Conference on Decision and
  Control (CDC) held jointly with 2009 28th Chinese Control Conference}, pages
  8543--8548, Dec 2009.

\bibitem{Ahuja:1993:NFT:137406}
Ravindra~K. Ahuja, Thomas~L. Magnanti, and James~B. Orlin.
\newblock {\em Network Flows: Theory, Algorithms, and Applications}.
\newblock Prentice-Hall, Inc., Upper Saddle River, NJ, USA, 1993.

\bibitem{Cormen2001introduction}
Thomas~H. Cormen, Charles~E. Leiserson, Ronald~L. Rivest, and Clifford Stein.
\newblock {\em Introduction to Algorithms}.
\newblock The MIT Press, 2 edition, 2001.

\bibitem{Willard:1985:NDS:3674.3690}
Dan~E. Willard.
\newblock New data structures for orthogonal range queries.
\newblock {\em SIAM J. Comput.}, 14(1):232--253, February 1985.

\bibitem{Lueker197828}
G.S. Lueker.
\newblock A data structure for orthogonal range queries.
\newblock {\em Proc. 19th IEEE Symposium on Foundations of Computer Science},
  pages 28--34, 1978.
\newblock cited By 116.

\bibitem{x}
J.~Bentley.
\newblock Multidimensional binary search trees in database applications.
\newblock {\em IEEE Transactions on Software Engineering}, 5:333--340, 07 1979.

\bibitem{Lee:1980:QTF:320613.320618}
D.~T. Lee and C.~K. Wong.
\newblock Quintary trees: A file structure for multidimensional datbase sytems.
\newblock {\em ACM Trans. Database Syst.}, 5(3):339--353, September 1980.

\bibitem{Gabow:1984:SRT:800057.808675}
Harold~N. Gabow, Jon~Louis Bentley, and Robert~E. Tarjan.
\newblock Scaling and related techniques for geometry problems.
\newblock In {\em Proceedings of the Sixteenth Annual ACM Symposium on Theory
  of Computing}, STOC '84, pages 135--143, New York, NY, USA, 1984. ACM.

\bibitem{Nievergelt:1972:BST:800152.804906}
J.~Nievergelt and E.~M. Reingold.
\newblock Binary search trees of bounded balance.
\newblock In {\em Proceedings of the Fourth Annual ACM Symposium on Theory of
  Computing}, STOC '72, pages 137--142, New York, NY, USA, 1972. ACM.

\end{thebibliography}
\end{document}